\begin{document}
\title{Reachability Games with Relaxed Energy Constraints}
\def\titlerunning{Reachability Games with Relaxed Energy Constraints}
\author{Lo\"ic H\'elou\"et\hskip2cm  Nicolas Markey\institute{Inria \& CNRS \& Univ. Rennes, France}
%  \and
%  Nicolas Markey\institute{Inria \& CNRS \& Univ. Rennes, France}
  \and
  Ritam Raha\institute{Chennai Mathematical Institute, India}}
\def\authorrunning{Lo\"ic H\'elou\"et, Nicolas Markey, and Ritam Raha}

\maketitle

\begin{abstract}
We study games with reachability objectives under energy constraints.
We~first prove that under strict energy constraints (either only
lower-bound constraint or interval constraint), those games are
\LOGSPACE-equivalent to energy games with the same energy constraints but without reachability objective (i.e., for infinite
runs). We~then consider two kinds of
relaxations of the upper-bound constraints (while keeping the
lower-bound constraint strict): in the first one, called \emph{weak
upper bound}, the upper bound is \emph{absorbing}, in the sense that
it allows receiving more energy when the upper bound is already
reached, but the extra energy will not be stored; in the second one,
we~allow for \emph{temporary violations} of the upper bound, imposing
limits on the number or on the amount of violations.

We prove that when considering weak upper bound, reachability
objectives require memory, but can still be solved in polynomial-time
for one-player arenas; we~prove that they are in \co\NP in the two-player
setting. Allowing for bounded violations makes the
problem \PSPACE-complete for one-player arenas and \EXPTIME-complete
for two players.

\end{abstract}

\vspace{-\medskipamount}
\section{Introduction}

\paragraph{Games on weighted graphs.}
Weighted games are a common way to formally address questions
related to consumption, production and storage of resources: the~arenas
of such game are two-player turn-based games in which transitions
carry positive or negative integers, representing the accumulation or
consumption of resource.  Various objectives have been considered for
such arenas, such as optimizing the total or average amount of
resources that have been collected along the play, or maintaining the
total amount within given bounds. The~latter kind of objectives,
usually referred to as \emph{energy
objectives}~\cite{CdAHS03,BouyerFLMS08}, has been widely studied in
the untimed
setting~\cite{ChatterjeeD12,ChatterjeeDHR10,DDGRT10,FahrenbergJLS11,JLR13,
JLS15,VCDHRR15,BMRLL15,BHMRZ17,DM18}, and to a lesser extent in the
timed setting~\cite{BFLM10,qest2012-BLM}.
As their name indicates, energy objectives can be used to model the
evolution of the available energy in an autonomous system: besides
achieving its tasks, the~system has to take care of recharging
batteries regularly enough so as to never run out of power.  Energy
objectives were also used to model moulding machines: such machines
inject molten plastic into a mould, using pressure obtained by storing
liquid in a tank~\cite{CJLRR09}; the~level of liquid has to be
controlled in such a way that enough pressure is always available, but
excessive pressure in the tank would reduce the service life of the valve.

%% They allow to model systems in which an energy
%% level (e.g.,~the~remaining power in a battery) is
%% constrained. A~typical requirement is that this level should not reach
%% zero. Other requirements can be that the energy does not exceed a
%% certain level, nor drop below a critical value. The~arena on which
%% these games are played are weighted graphs. The outcome of a run is a
%% function of the weights encountered along that run. A~player wins this
%% quantitative game iff he has a strategy to enforce runs which outcome
%% respects the energy constraints.

Energy games impose strict constraints on the total amount of energy
at all stages of the play. Two kinds of constraints have been mainly
considered in the literature: lower-bound constraints (a.k.a. \Lenergy
constraints) impose a strict lower bound (usually~zero), but impose no
upper bound; on~the other hand, lower- and upper-bound constraints
(a.k.a. \LUenergy constraints) require that the energy level always
remains within a bounded interval~$[L;U]$. Finding strategies that
realize \Lenergy objectives along infinite runs is in \PTIME in the
one-player setting, and in $\NP\cap\co\NP$ for two players;
for \LUenergy objectives, it~is respectively \PSPACE-complete
and \EXPTIME-complete~\cite{BouyerFLMS08}. Some works have also considered
the existence of an initial energy level for which a winning strategy
exist~\cite{ChatterjeeDHR10}.
%
%% Energy objectives have also been combined with other objectives,
%% either qualitative (e.g.~parity~\cite{ChatterjeeD12}) or quantitative
%% objectives (e.g.~multi-dimensional
%% energy~\cite{ChatterjeeDHR10,FahrenbergJLS11,JLR13,Reichert16}). 

In this paper, we focus on weighted games combining energy objectives
together with reachability objectives. Our~first result is the
(expected) proof that \Lenergy games with or without reachability
objectives are interreducible; the same holds for \LUenergy games.
We~then focus on relaxations of the energy constraints, in two
different directions. In both cases, the lower bound remains
unchanged, as it corresponds to running out of energy, which we always
want to avoid. We~thus only relax the upper-bound constraint.
The~first direction concerns \emph{weak upper bounds}, already
introduced in~\cite{BouyerFLMS08}: in~that setting, hitting the upper
bound is allowed, but there is no overload: trying to exceed the upper
bound will simply maintain the energy level at this maximal
level. Yet, a~strict lower bound is still
imposed. Following~\cite{BouyerFLMS08}, we~name these
objectives \LWenergy objectives. They~could be used as a (simplified)
model for batteries.  When considered alone, \LWenergy objectives are
not much different from~\Lenergy objectives, in the sense that the aim
is to find a reachable \emph{positive loop}. \LWenergy games (with no
other objectives besides maintaining the energy above~$L$) are
in \PTIME for one-player games, and in $\NP\cap\co\NP$ for two
players~\cite{BouyerFLMS08}. When combining \LWenergy and reachability
objectives, we~prove in this paper that the~situation changes:
different loops may have different effects on the energy level, and we
have to keep track of the final energy level reached when iterating
those loops.

We~introduce and study a second way of relaxing upper bounds, which we
call \emph{soft upper bound}: it~consists in allowing a limited number
(or~amount) of violations of the soft upper bound (possibly within an
additional strong upper-bound): when modeling a pressure tank,
the~lower-bound constraint is strict (pressure should always be
available) but the upper bound is soft (excessive pressure may be
temporarily allowed if needed). We~consider different kinds of restrictions
(on~the number or amount of violations), and prove that
%(with or without reachability objectives)
%the problems are
deciding whether \Pl1 has a strategy to keep violations below a given bound is
\PSPACE-complete for one-player arenas,
and \EXPTIME-complete for two-player ones.
%\lh{
%We then show that algorithms to find values that minimize violations run in \PSPACE for one-player arenas, and in \EXPTIME for two-player.
%}
We~also provide algorithms to optimize violations of the soft upper bound
under a given strict upper bound.

\paragraph*{Related work.}
Quantitative games have been the focus of numerous research articles
since the~1970s, with various kinds of objectives, such as ultimately
optimizing the total payoff, mean-payoff~\cite{EM79,ZwickP95},
or discounted sum~\cite{ZwickP95,Andersson06}.
Energy objectives, which are a kind of safety objectives on the total
payoff, were introduced in~\cite{CdAHS03} and rediscovered
in~\cite{BouyerFLMS08}.
Several works have extended those works by
combining quantitative conditions together, e.g. multi-dimensional
energy conditions~\cite{FahrenbergJLS11,JLS15} or
conjunctions of energy- and mean-payoff
objectives~\cite{ChatterjeeDHR10}. Combinations with qualitative
objectives (e.g. reachability~\cite{Chatterjee0H17} or parity
objectives~\cite{ChatterjeeD12,ChatterjeeRR14,DM18}) were also considered.
Similar objectives have been considered in slightly different
settings e.g. Vector Addition Systems with States~\cite{Reichert16} and
one-counter machines~\cite{GHOW10,Hun15}.

\vspace{-\medskipamount}
\section{Preliminaries}
\label{section_prelim}

\begin{definition}%[Arena]
A two-player arena is a $3$-tuple $G=\tuple{Q_1,Q_2,E}$ where
$Q=Q_1\uplus Q_2$ is a set of states, $E\subseteq Q \times \bbZ\times
Q$ is a set of weighted edges.
For~$q\in Q$, we~let $qE=\{(q,w,q')\in E
\mid w\in\bbZ,\ q'\in Q\}$, which we assume is non-empty for any~$q\in Q$.
A~one-player arena is a two-player arena where $Q_2=\emptyset$.
\end{definition} 

Consider a state~$q_0\in Q$.  A \emph{finite path} in an arena~$G$
from an initial state~$q_0$ is an finite sequence of edges $\pi =
(e_i)_{0\leq i< n}$ such that for every $0\leq i<n$, writing
$e_i=(q_i,w_i,q'_i)$, it~holds $q'_i=q_{i+1}$.  Fix a path $\pi =
(e_i)_{0\leq i< n}$.  Using the notations above, we~write $\size \pi$
for the size~$n$ of~$\pi$, $\hat\pi_i$~for the $i$-th state~$q_i$ of~$\pi$
(with the convention that $q_n=q'_{n-1}$), and $\first(\pi)=\hat\pi_0$ for its
first state and $\last(\pi)=\hat\pi_n$ for its last state.
The~empty path is a special finite path from~$q_0$; its~length
is~zero, and $q_0$ is both its first and last state.
Given two finite paths~$\pi=(e_i)_{0\leq i<n}$
and~$\pi'=(e'_j)_{0\leq j< n'}$ such that
$\last(\pi)=\first(\pi')$, the concatenation~$\pi\cdot\pi'$
is the finite path $(f_k)_{0\leq k<n+n'}$ such that $f_k=e_k$ if~$0\leq
k<n$ and $f_k=e'_{k-n}$ if $n\leq k<n+n'$.
%
%For any~$0\leq k<n$, we~write $\pi_k$ to denote the $k+1$-st
%state~$q_k$ of~$\rho$.
For~$0\leq k\leq n$, the $k$-th prefix of~$\pi$ is the finite path
$\pi_{<k}=(e_i)_{0\leq i< k}$.  We~write $\FPath(G,q_0)$ for the set
of finite paths in~$G$ issued from~$q_0$ (we~may omit to mention~$G$
in this notation when it is clear from the context).  Infinite paths
are defined analogously; we~write $\Path(G,q_0)$ for the set of
infinite paths from~$q_0$.

A~\emph{strategy} for \Pl1 (resp.~\Pl2) from~$q_0$ is a function~$\sigma\colon
\FPath(q_0)\to E$ associating with any finite path~$\pi$ with
$\last(\pi)\in Q_1$ (resp.~$\last(\pi)\in Q_2$) an edge originating
from~$\last(\pi)$. 
A~strategy is said~\emph{memoryless} when $\sigma(\pi)=\sigma(\pi')$
whenever $\last(\pi)=\last(\pi')$.

A~finite path $\pi = (e_i)_{0\leq i<n}$ \emph{conforms} to a
strategy~$\sigma$ of \Pl1 (resp.~of~\Pl2) from~$q_0$ if
$\first(\pi)=q_0$ and for every $0\leq k<n$, either
$e_{k}=\sigma(\pi_{<k})$, or $\last(\pi_{<k})\in Q_2$
(resp.~$\last(\pi_{<k})\in Q_1$).  This is extended to infinite paths
in the natural way. Given a strategy~$\sigma$ of~\Pl1 (resp.~of~\Pl2)
from~$q_0$, the~outcomes of~$\sigma$ is the set of all (finite or infinite)
paths~$\pi$
issued from~$q_0$ that conform to~$\sigma$.

%it $\sigma(\pi)$ depends only on the last state of the prefix.

%In the rest of the paper, we will consider games with payoff objectives. More precisely, the game we will consider will mix reachability objectives, and costs objectives. As reachability must be achieved in a bounded number of steps, we consider payoff functions that associate a value to finite runs. 

% We already use H. I suggest a macro \obj and we can change the letter as we want.
A game is a triple~$(G,\qinit,\obj)$ where $G$ is a two-player arena,
$\qinit$ is an initial state in~$Q$, and $\obj\subseteq \Path(G,\qinit)$
is a set of infinite paths, also called \emph{objective} (for~\Pl1).
A~strategy for \Pl1 from~$\qinit$ is winning in~$(G,\qinit,\obj)$ if its
infinite outcomes all belong to~$\obj$.

Given a set~$R\subseteq Q$ of states, the reachability objective
defined by~$R$ is the set of all paths containing some state in~$R$,
while the safety objective defined by~$R$ is the set of all infinite
paths never visiting any state in~$R$. In~this paper, we also focus on
\emph{energy objectives}~\cite{CdAHS03,BouyerFLMS08}, which we now define.
%
%% A game is defined as an arena and a winning condition. Reachability games are defined as a pair $(G,T)$ in which $T$ is a subset of states in $Q$. Player 1 wins a reachability game iff he has a strategy to reach a state in $T$ in a finite number of steps. It is well known that reachability games are memoryless, and can be solved in polynomial time~\cite{} by computing an attractor of $T$ in the arena. 

\begin{definition}
  \looseness=-1
  Fix a finite-state arena~$G=\tuple{Q_1,Q_2,E}$. Let
  $L\in\bbZ$. The~\Lenergy arena associated with~$G$ is the infinite
  arena $\exGL=\tuple{C_1,C_2,T}$ where $C_1=\{\qerr\}\cup
  Q_1\times[L;+\infty)$ and $C_2=Q_2\times[L;+\infty)$ are sets
      of~\emph{configurations}, and $T\subseteq C_1\times\bbZ\times
      C_2$ is such that
  \begin{itemize}
  \item for any~$(q,l)$ and~$(q',l')$ in~$Q\times[L;+\infty)$ and
    any~$w\in \bbZ$, we~have $((q,l),w,(q',l'))\in T$ if, and only~if,
    $(q,w,q')\in E$ and $l'=l+w\geq L$; we~also impose a loop
    $(\qerr,0,\qerr)\in T$.
  \item for any~$(q,l)\in Q\times[L;+\infty)$, we~have
    $((q,l),w,\qerr)\in T$ if, and only~if, there exists a
    transition~$(q,w,q')\in E$ such that $l+w<L$
  \end{itemize}
  Similarly, given $L\in\bbZ$ and $U\in\bbZ$, the \LUenergy arena
  associated with~$G$ is the finite-state arena
  $\exGLU=\tuple{C_1,C_2,T}$ where $C_1=\{\qerr\}\cup Q_1\times[L;U]$ and
    $C_2=Q_2\times[L;U]$, and $T\subseteq C_1\times\bbZ\times
  C_2$ is such that
  \begin{itemize}
  \item for any~$(q,l)$ and~$(q',l')$ in~$Q\times[L;U]$ and any~$w\in
    \bbZ$, we~have $((q,l),w,(q',l'))\in T$ if, and only~if,
    $(q,w,q')\in E$ and $l'=l+w\in [L;U]$; we~also impose a loop $(\qerr,0,\qerr)\in T$.
  \item for any~$(q,l)\in Q\times[L;U]$, we~have
    $((q,l),w,\qerr)\in T$ if, and only~if, there is a
    transition~$(q,w,q')\in E$ such that $l+w<L$ or $l+w>U$;
  \end{itemize}
  Finally, given $L\in\bbZ$ and $W\in\bbZ$, the \LWenergy arena
  associated with~$G$ is the finite-state arena
  $\exGLW=\tuple{C_1,C_2,T}$ where $C_1=\{\qerr\}\cup Q_1\times[L;W]$ and
    $C_2=Q_2\times[L;W]$, and $T\subseteq C_1\times\bbZ\times
  C_2$ is such that
  \begin{itemize}
  \item for any~$(q,l)$ and~$(q',l')$ in~$Q\times[L;W]$ and
    any~$w\in \bbZ$, we~have $((q,l),w,(q',l'))\in T$ if, and only~if,
    $(q,w,q')\in E$ and $l'=\min(W,l+w)\geq L$; we~also impose a loop $(\qerr,0,\qerr)\in T$.
  \item for any~$(q,l)\in Q\times[L;W]$, we~have
    $((q,l),w,\qerr)\in T$ if, and only~if, there is a
    transition~$(q,w,q')\in E$ such that $l+w<L$;
  
  \end{itemize}
%  
%The~\emph{total payoff}\NM{name it ``energy level''?}
%  The~\emph{energy level} after~$\pi$
%is defined as $\EL(e_0,\pi)=e_0+\sum_{i=0}^{i=k-1} \weight(q_{i},q_{i+1})$.
%The {\em
%  mean payoff} for run $\pi_n$ is defined as
%$\mpo(\pi_n)=\frac{1}{n}\sum_{i=0}^{i=n-1} \weight(q_{i},q_{i+1})$.

  An \Lrun (resp. \LUrun, \LWrun)~$\rho$ in~$G$ from~$q$ with initial
  energy level~$l$ is a path in~$\exGL$ (resp~$\exGLU$, $\exGLW$)
  from~$(q,l)$ never visiting~$\qerr$. With~such a run~$\rho=(t_i)_i$
  in~$G$, writing $t_i=((q_i,l_i),w_i,(q'_i,l'_i))$, we~associate the
  path~$\pi=(e_i)_i$ such that $e_i=(q_i,w_i,q'_i)$.  We~define
  $\hat\rho_i=(q_i,l_i)$, corresponding to
  the $i$-th configuration along~$\rho$,
  and $\tilde\rho_i=l_i$, which we name the energy level in that configuration.

  Similarly, a path~$\pi$ is said \Lfeasible (resp. \LUfeasible,
  \LWfeasible) from~initial energy level~$L$ if there exists an \Lrun
  (resp. \LUrun, \LWrun) from~$(\first(\pi),L)$ whose associated path
  is~$\pi$. Notice that if such a run exists, it~is unique (because
  paths are defined as sequences of transitions).
\end{definition}

The \Lenergy (resp. \LUenergy, \LWenergy) objective is the set of
infinite paths that are \Lfeasible (resp. \LUfeasible, \LWfeasible)
(from initial energy level~$L$). Similarly, given a target
set~$R\subseteq Q$, the \Lenergy- (resp. \LUenergy-,
\LWenergy-) reachability objective is the set of \Lfeasible
(resp. \LUfeasible, \LWfeasible) paths visiting~$R$.

\begin{remark}
  Taking $L$ %\rr{0 ?}\NM{No, $L$, as opposed to any other possible initial value in $[L;U]$}
  as the initial energy level results in no loss of generality,
  since any energy level can be obtained by adding an initial
  transition from $(q_0,L)$.
\end{remark}

In many cases, strong upper bounds are too strict, as many system do
not break as soon as their maximal energy level is reached.  Imposing
a~\emph{weak} upper bound is a way to relax these constraints.
We~introduce another way to relax energy constraints, by allowing for
(limited) violations of the upper bound: given two strict bounds~$L$
and~$U$ in~$\bbZ$, a~soft upper bound~$S\in \bbZ$ with $L\leq S\leq
U$, and an~\LUrun $\rho$, the~set of
violations along~$\rho$ is the set $\VV(\rho) = \{i \in
[0;\size\rho]\mid \tilde\rho_i>S\}$ of positions along~$\rho$ where
the energy level exceeds the soft upper bound~$S$.
%We~define the following quantities:
There are many ways to quantify violations along
a run. We~consider three of them in this paper, namely the total
number of violations, the maximal number of consecutive violations,
and the sum of the violations. We~thus define the following three
quantities:
%\begin{xalignat*}6
%  \nbV(\rho) &= |\VV(\rho)|
%  \qquad
%  &
%  \consnbV(\rho) & = \max\{i-j \mid \forall k\in[i,j].\ k\in \VV(\rho)\}
%  \qquad
%  &
%  \hV(\rho) &= \sum_{i \in \VV(\rho)} (\tilde\rho_i-S).
%\end{xalignat*}
%
$\nbV(\rho) = |\VV(\rho)|$,   $\consnbV(\rho)  = \max\{i-j+1 \mid \forall k\in[i,j].\ k\in \VV(\rho)\}$, and $\hV(\rho) = \sum_{i \in \VV(\rho)} (\tilde\rho_i-S)$.

Figure~\ref{fig_criteria} shows the evolution of $\nbV$ along a
winning run in an \LVenergynb game.
One can notice that
%without exceeding the upper bound $S=3$,
with a strong upper bound of~$3$, state~$q_t$ would not be reachable.
%One can also notice that if we allow exceeding the upper bound $S=3$ up to the
%strong upper bound~$U=6$,
On~the other hand, if the strong upper bound is set to~$6$, then then
there exists a run from~$q_0$ to~$q_t$, but that requires $3$
violations of soft upper bound $S=3$ (and the total amount of
violations is~$6$).
\begin{figure}[htbp]
\begin{center}
%\scalebox{0.8}{
%\begin{tikzpicture}[->,>=stealth',shorten >=1pt,auto,node distance=1.8cm,
%                    semithick]
\begin{tikzpicture}[node distance=1.8cm,
                    semithick]
\begin{scope}
\node[rond5,rouge] (q0) at (0,0) {$q_0$};
\node[rond5,jaune] (q1) [right of =q0] {$q_1$};
\node[rond5,jaune] (q2) [right of =q1] {$q_2$};
\node[rond5,jaune] (q3) [right of =q2] {$q_3$};
\node[rond5,vert] (qT) [below of=q3] {$q_t$};

%\node[] (lab2) at (-1,-1.5) {$U=6$};
%\node[] (lab1) at (-1,-1) {$S=3$};
%\node[] (lab0) at (-1,-0.5) {$L=0$};
\node at (1.5,-1.5) {$L=0$, $S=3$, $U=6$};

\path [-latex] (q0) edge node [yshift=0.5cm] {$+2$} (q1) {};
\path [-latex] (q1) edge [loop above] node  {$+1$} (q1){}; 
\path [-latex] (q1) edge node [yshift=0.5cm]{$-3$} (q2) {};
\path [-latex] (q2) edge[bend right] node [yshift=-0.5cm] {$+1$} (q3) {};
\path [-latex] (q3) edge[bend right] node [yshift=0.5cm] {$+2$} (q2) {};
\path [-latex] (q3) edge[bend left] node [xshift=0.5cm]{$-5$} (qT) {};
\path [-latex] (q3) edge [loop above] node {$-1$} (q3){};

\path [-latex] (qT) edge [loop right] node {0} (qT){}; 
\end{scope}
\begin{scope}[xshift=8cm,yshift=-1.5cm,xscale=1.2,inner sep=0pt]
\draw[->] (-0.2,0) -- (5.5,0) node[below right] {$\rho$}; %{$\sharp steps$};
\draw[->] (0,-0.2) -- (0,3.4) node[left] {energy};

\foreach  \x in {0.5,1,1.5,2,2.5,3,3.5,4,4.5,5}{
\draw[] (\x,-0.1) -- (\x,0.1){};
}

\foreach  \y/\t in {0/0,0.5/1,1/2,1.5/3,2/4,2.5/5,3/6}{
\draw[] (-0.1,\y) node[left] {$\t$} -- (0.1,\y){};
}

\draw[dashed] (0,1.5) -- (5.5,1.5){};
\draw[dashed] (0,3) -- (5.5,3){};
\node (lu) at (5.8,1.5) {$S$};
\node (lh) at (5.8,3) {$U$};

\node[] (l0) at (0,-0.5) {$q_0$};
\node[red] (p0) at (0,0) {$\bullet$};

\node[] (l1) at (0.5,-0.5) {$q_1$};
\node[yellow!70!red] (p1) at (0.5,1) {$\bullet$};
\draw[] (p0)--(p1){};

\node[] (l2) at (1,-0.5) {$q_1$};
\node[yellow!70!red] (p2) at (1,1.5) {$\bullet$};
\draw (p1)--(p2){};

\node[] (l3) at (1.5,-0.5) {$q_2$};
\node[yellow!70!red] (p3) at (1.5,0) {$\bullet$};
\draw (p2)--(p3){};

\node[] (l4) at (2,-0.5) {$q_3$};
\node[yellow!70!red] (p4) at (2,0.5) {$\bullet$};
\draw (p3)--(p4){};

\node[] (l5) at (2.5,-0.5) {$q_2$};
\node[yellow!70!red] (p5) at (2.5,1.5) {$\bullet$};
\draw (p4)--(p5){};

\node[] (l6) at (3,-0.5) {$q_3$};
\node[yellow!70!red] (p6) at (3,2) {$\bullet$};
\draw (p5)--(p6){};

\node[] (l7) at (3.5,-0.5) {$q_3$};
\node[yellow!70!red] (p7) at (3.5,1.5) {$\bullet$};
\draw (p6)--(p7){};

\node[] (l8) at (4,-0.5) {$q_2$};
\node[yellow!70!red] (p8) at (4,2.5) {$\bullet$};
\draw (p7)--(p8){};

\node[] (l9) at (4.5,-0.5) {$q_3$};
\node[yellow!70!red] (p9) at (4.5,3) {$\bullet$};
\draw (p8)--(p9){};

\node[] (l10) at (5,-0.5) {$q_t$};
\node[green!70!black] (p10) at (5,0.5) {$\bullet$};
\draw (p9)--(p10){};

%%%%% #>U %%%%%

%% \node (u1) at (3,0.5) {$\sharp$};
%% \draw [color=red] (3,0) -- (u1){};
%% \draw [color=red] (u1)--(3.5,0.5){};
%% \node (u2) at (4,1) {$\sharp$};
%% \draw [color=red] (3.5,0.5) -- (u2){}; 
%% \node (u3) at (4.5,1.5) {$\sharp$};
%% \draw [color=red] (u2) -- (u3){}; 
%% \draw [color=red] (u3) -- (5,1.5){}; 

\draw[line width=2pt,red] (p6) -- +(0,-.5);
\draw[line width=2pt,red] (p8) -- +(0,-1);
\draw[line width=2pt,red] (p9) -- +(0,-1.5);

\path (1.2,2.5) node (v) {violations};
\draw (v.0) edge[dotted,bend left=10,-latex',shorten <=2mm] (2.8,1.8);
\draw (v.0) edge[dotted,bend left,-latex',shorten <=2mm] (3.8,2.2);
\draw (v.0) edge[dotted,bend left,-latex',shorten <=2mm] (4.3,2.5);

\end{scope}
\end{tikzpicture}
%}
\end{center}
%\vspace{-3\medskipamount}
\caption{Energy level and $\nbV$ along a winning run in a \LVenergynb reachability game.}
\label{fig_criteria}
\end{figure}
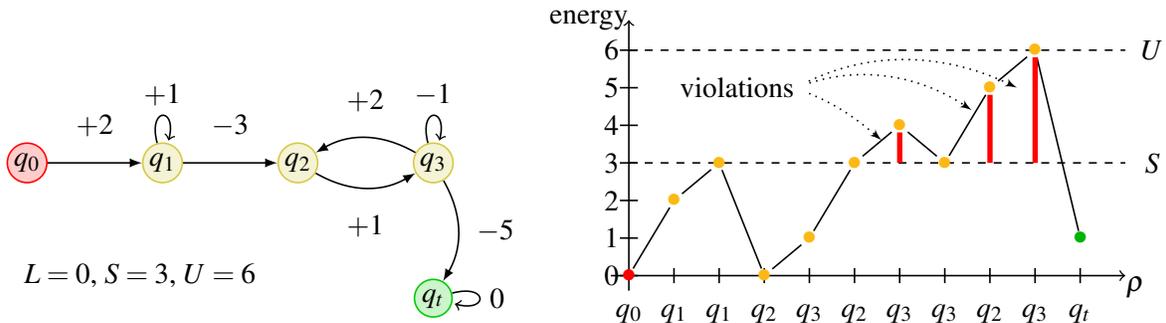

%\vspace{-\medskipamount}

%% %\begin{itemize}
%% %\item $\nbV(\rho)$ is the number of violations along~$\rho$:
%%   $\nbV(\rho) = |\VV(\rho)|$;
%% %  \[
%% %  \nbV(\rho) = \left|\{i \in [0;\size\rho)\mid \rho_i=(q_i,e_i) \text{
%% %      with } e_i>V\}\right|;
%% %    \]

%% \item $\consnbV(\rho)$ is the maximal number of consecutive violations
%%   along~$\rho$:
%%   \[
%%   \consnbV(\rho) = \max\{i-j \mid \forall k\in[i,j].\ k\in \VV(\rho)\};
%%   \]

%% \item $\hV(\rho)$ is the sum of the violations along~$\rho$: 
%%   \[
%%   \hV(\rho) = \sum_{i \in \VV(\rho)} (\tilde\rho_i-V).
%%   \]
%% \end{itemize}
%\lh{
Given three values~$L\leq S \leq U$,
and  a~threshold~$V\in\bbN$,
the \LVenergynb (resp.~\LVenergyconsnb, \LVenergysum) objective is
the set of \LUfeasible infinite paths~$\pi$ such that,
along their associated runs~$\rho$ from~$(\qinit, L)$,
the~number~$\nbV(\rho)$ of
  violations (resp.~maximal number of consecutive
  violations~$\consnbV(\rho)$, sum~$\hV(\rho)$ of violations) of the
  soft upper bound~$S$ is at most~$V$.
  %\lh{and energy never exceeds $H$}.
%
  Similarly, for a set of states~$R$, the \LVenergynb-reachability
  (resp.~\LVenergyconsnb-reachability,
  \LVenergysum-reachability)  objective is the
  set of \LUfeasible paths~$\pi$ reaching~$R$ such that along their
  associated run from~$(\qinit, L)$, the number~$\nbV(\rho)$ of
  violations (resp.~maximal number of consecutive
  violations~$\consnbV(\rho)$, sum~$\hV(\rho)$ of violations) of the
  soft upper bound~$S$ is at most~$V$.
  %\lh{and energy never exceeds $H$}.

%%%% definitions below use a sliding window
%% \begin{itemize}
%% \item the number of violations over the
%%   last $d$ steps, denoted with~$\nbV_d(\pi,U)$, is $\size{V(\pi,U)\cap
%%     [k-d;k]}$. We~write $\nbV(\pi,U)$ for the total number of violations
%%   $\nbV_{+\infty}$;
%% \item the maximal number of consecutive violations \NM{not sure we have to define/study this}  
%% \item the amount of violations over the
%%   last $d$ steps, denoted with~$\hV_d(\pi,U)$, is $\sum_{i\in
%%     V(\pi,U)\cap [k-d;k]} [\EL(\pi_{[0..i]})-U]$.  We~write
%%   $\hV(\pi,U)$ for the amount of all violations $\nbV_{+\infty}$;
%% \item \NM{more? discounted?}
%% \end{itemize}

We~study the complexity of deciding the existence of a winning
strategy for the objectives defined above, in both the one- and
two-player settings. Further, for
%\LVenergynb, \LVenergyconsnb, \LVenergysum
\LVenergyall
games (for~$\star$ in $\{\#, \overline\#, \mathord\sum\}$),
%and reachability games,
we also address the following problems:
\begin{itemize}
%\item {\bf (games)} Given $H, V$, decide if player 1 wins the $\APNA_{0,U}^{+H,X\leq V}$ (resp. the $\APNAR_{0,U}^{+H,X\leq V}$ game ? 

\item {\bfseries bound existence:} Given $L$, $S$ and~$V$, decide if there exists a value $U\in \mathbb Z$ such that \Pl1 wins the \LVenergyall game;
\item {\bfseries minimization:} Given $L$ and~$S$, and a bound~$V_{\max}$,
  %\NM{Do we need $V_{\max}$? Can't we compute a candidate $V_{\max}$?}
  find a value $U\in \mathbb Z$ such that \Pl1 wins the game
  %with the least possible value smaller than $V_{max}$ for $V$.
with the least possible violations less than~$V_{\max}$, if~any.
\end{itemize}

Table~\ref{table-results} summarizes known
results, and the results obtained in this paper (where \LVenergyall
gathers all three energy constraints with violations). We furthermore show that the minimization problem for \LVenergyall (reachability) games require algorithms that run in \PSPACE in the one-player case, and in \EXPTIME in the two-player case.

%\NM{minimization as decision or function problem?}
% Function, computed by an algorithm, that runs with some space and time complexity...

\begin{table}[ht]
  \centering
  \def\arraystretch{1.3}
  \def\comp{c.}
  \scalebox{0.75}{%
  \begin{tabular}{|>{ }l<{ }||>{ }c<{ }|>{ }c<{ }|>{ }c|>{ }c|}
    \hline
    &\multicolumn{2}{c|}{Reachability}&\multicolumn{2}{c|}{Infinite runs}\\
    \hline
    & 1 player & 2 players & 1 player & 2 players
    \\\hline\hline
    %LEnergy Games
    \Lenergy
      &in \PTIME~(Thm.~\ref{thm_reachability_games}, \cite{Chatterjee0H17})
      & in \NP$\cap$ \co\NP~(Thm.~\ref{thm_reachability_games}, \cite{Chatterjee0H17})
    & in \PTIME~\cite{BouyerFLMS08} & in \NP $\cap$ \co\NP~\cite{BouyerFLMS08}\\\hline
    %LUEnergy Games
    \LUenergy&\PSPACE-\comp~(Thm.~\ref{thm_reach_strong_LU})& \EXPTIME-\comp ~(Thm.~\ref{thm_reach_strong_LU})
    & \PSPACE-\comp~\cite{BouyerFLMS08} &\EXPTIME-\comp~\cite{BouyerFLMS08}\\\hline
    \LWenergy&in \PTIME~(Thm.~\ref{thm_1P_LW_reachability})& in \co\NP~(Coro.~\ref{cor_2P_LW_reachability}) 
    & in \PTIME~\cite{BouyerFLMS08} & in \NP $\cap$ \co\NP~\cite{BouyerFLMS08}\\\hline
    \LVenergyall &\PSPACE-\comp~(Thm.~\ref{thm_apna_LUHV_2P}) &\EXPTIME-\comp~(Thm.~\ref{thm_apna_LUHV_2P})&\PSPACE-\comp~(Thm.~\ref{thm_apna_LUHV_2P}) &\EXPTIME-\comp~(Thm.~\ref{thm_apna_LUHV_2P}) \\\hline
%    \LVenergyconsnb &\PSPACE-\comp~(thm.~\ref{thm_apna_LUHV_2P})&\EXPTIME-\comp~(thm.~\ref{thm_apna_LUHV_2P})&\PSPACE-\comp~(thm.~\ref{thm_apna_LUHV_2P})&\EXPTIME-\comp~(thm.~\ref{thm_apna_LUHV_2P})\\\hline
%    \LVenergysum &\PSPACE-\comp~(thm.~\ref{thm_apna_LUHV_2P})&\EXPTIME-\comp~(thm.~\ref{thm_apna_LUHV_2P})&\PSPACE-\comp~(thm.~\ref{thm_apna_LUHV_2P})&\EXPTIME-\comp~(thm.~\ref{thm_apna_LUHV_2P})\\\hline
    %
    % Threshold existence Problem
    %
    Bound existence&\PSPACE-\comp~(Thm.~\ref{thm_Apnar_exists_min_exptimec})&\EXPTIME-\comp~(Thm.~\ref{thm_Apnar_exists_min_exptimec}) &\PSPACE-\comp~(Thm.~\ref{thm_Apnar_exists_min_exptimec})&\EXPTIME-\comp~(Thm.~\ref{thm_Apnar_exists_min_exptimec}) \\\hline
    %
    % Minimization of V with appropriate threshold
    %
        Minimization&\PSPACE-\comp~(Thm.~\ref{thm_minimization})&\EXPTIME-\comp~(Thm.~\ref{thm_minimization})&\PSPACE-\comp~(Thm.~\ref{thm_minimization})&\EXPTIME-\comp~(Thm.~\ref{thm_minimization})\\\hline
    
  \end{tabular}}
  \caption{Summary of our results}
  \label{table-results}
\end{table}

\vspace{-\medskipamount}
\section{Energy reachability games with strict bounds}
\label{section_classical_energy_games}

%\NM{This is an apetizer, where we prove that reachability and infinite-duration energy games are inter-reducible. Results for ID-games come from~\cite{BouyerFLMS08}. Only consider strict-L and strict-LU bounds here.}

\label{sec-single}

%% \begin{definition}
%% \label{def_energy_tpo_game}
%% Let $G$ be an arena, $q_0$ be an initial state, $B_0$ be an initial level and $T$ be a set of target states. 
%% A run $\rho=q_0\dots q_n$ is winning in an \Lenergy reachability game over arena $G$ iff  
%% $$\forall i \in 1..n, \tpo(\pi_i) \geq 0 \mbox{~and~}  q_n\in T $$
%% \end{definition}

In this section, we focus on the \Lenergy-reachability
and \LUenergy-reachability problems. We~first prove that
\Lenergy-reachability problems are inter-reducible with \Lenergy
%and \LUenergy
problems, which entails:
\begin{restatable}{theorem}{thmLLU}
%\begin{theorem}
\label{thm_reachability_games}
Two-player \Lenergy-reachability games are decidable in $\NP\cap\co\NP$.
The one-player version is in \PTIME.
%\end{theorem}
\end{restatable}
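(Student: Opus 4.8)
The plan is to prove that \Lenergy-reachability games and plain \Lenergy games are interreducible in \LOGSPACE, and then to import the complexity bounds known for the latter, namely \PTIME in the one-player case and $\NP\cap\co\NP$ in the two-player case~\cite{BouyerFLMS08}. One direction is immediate: a plain \Lenergy game on an arena $G$ is exactly the \Lenergy-reachability game on $G$ with target set $R=Q$, since then every run visits $R$ already in its initial configuration, the reachability requirement becomes vacuous, and the two objectives coincide. This is trivially computable in \LOGSPACE and shows that \Lenergy-reachability games are at least as hard as \Lenergy games, which together with the nontrivial direction yields the claimed bounds.

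The substantial direction is to reduce an \Lenergy-reachability game $(G,\qinit,R)$ to a plain \Lenergy game on a polynomially larger arena $G'$. The objective combines two requirements on the (necessarily infinite) play: the energy must stay above~$L$ forever, and the play must visit~$R$ at least once. The survival requirement \emph{after}~$R$ is easy to encode: I would add a fresh absorbing state~$t$ carrying a single self-loop of weight~$0$, together with a \Pl1-controlled edge of weight~$0$ to~$t$ offered upon entering any state of~$R$. Reaching~$t$ then guarantees survival forever, and by monotonicity of the lower-bound constraint (a larger credit is never worse), \Pl1 can afford to divert to~$t$ as soon as she reaches~$R$ with enough energy to have survived at all. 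Conversely, any play of~$G$ that visits~$R$ and survives can be mimicked in~$G'$ by diverting to~$t$. Thus the only discrepancy between the two games concerns plays of~$G'$ that survive forever \emph{without ever visiting}~$R$.

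This last point is the crux. In a pure lower-bound energy game, surviving forever by looping on a non-negative cycle that avoids~$R$ is a perfectly winning behaviour, whereas it must be ruled out for the reachability objective; a naive sink construction therefore over-approximates \Pl1's winning region. The whole difficulty is to \emph{force} the visit to~$R$ by a local, hence \LOGSPACE-computable, transformation, without destroying plays that legitimately reach~$R$, and in particular without penalising the possibly long prefixes during which \Pl1 pumps a positive cycle to gather the credit needed to cross into~$R$ and then survive. My plan is to make~$R$ the unique gateway to the only surviving region of~$G'$, so that every infinite play avoiding~$R$ is eventually forced below~$L$ (i.e.\ into~\qerr), while every play reaching~$R$ retains access to the absorbing winner~$t$.

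Correctness then reduces to two inclusions: soundness (a winning strategy in~$G'$ induces one that both survives and reaches~$R$ in~$G$) and completeness (the converse). I expect both to follow from determinacy and from monotonicity of the lower-bound constraint, exploiting that plain \Lenergy games admit memoryless winning strategies, so that the outcomes of a candidate strategy can be analysed through the cycles reachable in the residual graph. Designing the forcing gadget so that it simultaneously (i) rules out $R$-avoiding survival, (ii) leaves intact every prefix along which \Pl1 accumulates the credit required to reach and survive past~$R$, and (iii) stays polynomial in size and \LOGSPACE-computable so that both the \PTIME and the $\NP\cap\co\NP$ bounds transfer verbatim, is the step I expect to be the main obstacle.
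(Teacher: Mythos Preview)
Your overall framework is right, and your trivial direction (take $R=Q$) is correct and in fact simpler than the paper's reduction for that half.

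The gap is that you stop exactly where the proof begins. You correctly diagnose that the only obstruction is $R$-avoiding survival, and you spell out the specification any gadget must meet, but you do not build one. Your last two paragraphs describe \emph{what} must be achieved rather than \emph{how}; a~proposal that ends with ``designing the forcing gadget \ldots\ is the step I expect to be the main obstacle'' has not crossed that obstacle, and nothing earlier in the proposal (the absorbing sink~$t$ with zero-weight escape edges from~$R$) constrains $R$-avoiding plays at all.

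For comparison, the paper's construction has two ingredients. First, restrict the arena to the qualitative attractor of~$R$ for~\Pl1 (states outside it go to a losing sink), and make the target absorbing via a zero self-loop. Second, subtract a small rational~$\epsilon>0$ from every remaining edge weight, compensating with a unit of extra initial credit to absorb the loss along a bounded-length prefix. The~$\epsilon$-shift makes every cycle that avoids the target strictly negative in~$G'$, so $R$-avoiding survival is impossible; conversely, a cycle that is non-negative in~$G'$ is strictly positive in~$G$, so a~\Pl1 strategy that merely survives in~$G'$ without reaching the target can, when replayed in~$G$, pump the energy unboundedly and then switch to the attractor strategy to force~$R$. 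The attractor restriction is essential, not cosmetic: without~it, a positive self-loop disconnected from~$R$ survives the~$\epsilon$-shift (for~$\epsilon$ small enough), so~\Pl1 would win the \Lenergy game on~$G'$ while losing reachability in~$G$. Both ingredients are cheap to compute and keep the arena polynomial, so the \PTIME and $\NP\cap\co\NP$ bounds transfer.
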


\begin{remark}
It is worth noticing that these results are not a direct consequence of
the results of~\cite{ChatterjeeD12} about energy parity games:
in~that paper,
%the~authors focus on the problem mean payoff thresholds
%and of the \emph{existence} of
the~authors focus on the \emph{existence of an initial energy level}
for which \Pl1 has a winning strategy with energy-parity objectives
(which encompass our energy-reachability objectives). When the answer
is positive, they can compute the minimal initial energy level for
which a winning strategy exists, but the (deterministic) algorithm
runs in exponential time.
%, and show that these questions are decidable in \co\NP.
%In~our context, the initial credit is fixed (to
%the lower bound~$L$).
\end{remark}

\begin{remark}
These results were already proven in~\cite{Chatterjee0H17}: for
one-player arenas, the authors develop a \PTIME algorithm, while they
prove \LOGSPACE-equivalence with \Lenergy games for the two-player
setting (the~result then follow from~\cite{BouyerFLMS08}). Our proof
%(detailed in \fbox{Appendix})
uses similar arguments as in the latter proof,
but with the same full and direct reductions back and forth both for
the one- and two-player cases.
\end{remark}

\begin{proof}
We prove that \Lenergy and \Lenergy-reachability
are \LOGSPACE-equivalent; our~reductions are valid both for one- and
two-player arenas. The~result then follows from~\cite{BouyerFLMS08}.

Intuitively, take a two-player \Lenergy game~$G$, and build the
arena~$G'$ as in Fig.~\ref{fig-redtoreachmain}: roughly, $G'$~is
obtained from~$G$ by increasing all weights by some small\footnote{The
value of~$\epsilon$ will generally be a rational, but it can be made
integer by scaling up all constants.}  positive value~$\epsilon$,
and \Pl1 is given the possibility to go to the target state, with a
large negative cost~$-\delta$, after each transition.  If~\Pl1 wins
in~$G$, playing her winning strategy in~$G'$ will grow the energy
level arbitrarily high, and will eventually allow her to take a
transition to~$q_t$. Conversely, if~\Pl1 wins in~$G'$, then she must
be able to grow the energy to above~$\delta$; taking~$\delta$ to be
larger than the sum of all positive weights in~$G$, \Pl1~must have a
strategy to force positive cycles before
reaching~$q_t$. If~$\epsilon$~is small enough, any positive cycle
in~$G'$ is a non-negative cycle in~$G$, so that the repeating same
strategy is winning in~$G$.

%% \Pl1 can win $G$ by either growing the
%% energy level arbitrarily high, or by taking cycles of total
%% weight~zero. By~adding a small\footnote{The value of~$\epsilon$ will
%% generally be a rational, but it can be made integer by scaling up all
%% constants.} positive~$\epsilon$ to all weights on transitions, we~can
%% assume that only the first case occurs
%% ; by~allowing \Pl1 to take a
%% transition (with large negative weight~$\delta$, greater than
%% the sum of all weights on transitions of the initial arena)
%% to a target state~$q_t$,  she~will be
%% able to reach that state if, and only~if, she can reach an energy level greater than~$\delta$, which needs to use a cycle in the arena, and hence means that
%% she indeed can maintain the
%% energy level above the lower bound in the original game. \lh{Notice that, to allow \Pl1 to go to state $q_t$ at any time, we add a copy $q^c$ of each state owned by \Pl1: every move $(q,w,q')$ is then replaced by a pair of moves $(q,w,q'^c)$,$(q'^c,w,q')$, allowing $\Pl1$ to move to $q_t$ after each of it moves, but also \Pl2's move.}
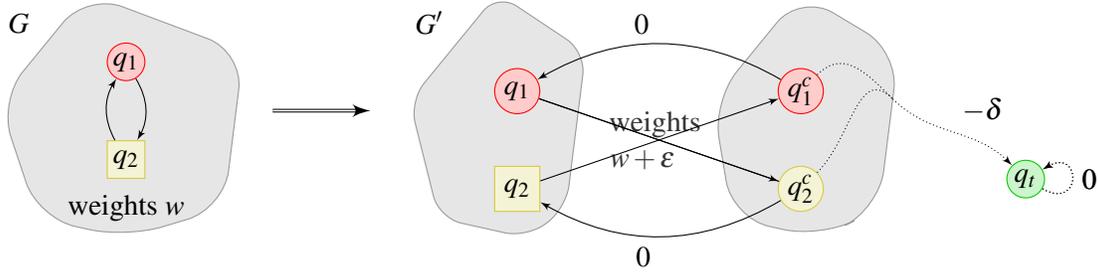
\begin{figure}[ht]
  \centering
  \begin{tikzpicture}[scale=1.3]
    \begin{scope}
    %% small game + transformation
    \draw (-.1,.9) node {$G$};
    \draw[rounded corners=5mm,grisc] (-.3,.1) -- (.6,1.2) -- (2.2,.6) -- (2,-1) -- (1,-1.3) -- (.1,-1.1) -- cycle;
      \draw (1,.5) node[rond5,rouge] (a) {} node {$q_1$};
       \draw (1,-.5) node[carre5,jaune] (c) {$q_2$};
      \draw (a) edge[-latex',bend left] (c); % node[right]{$w_1$}(c);
      \draw (c) edge[-latex',bend left] (a); %node[left]{$w_2$}(a);
      \path (1,-1) node {weights~$w$};
    \end{scope}

    \begin{scope}[xshift=5cm]
      \begin{scope}[xshift=-8mm]
      \draw (-.1,.9) node {$G'$};
        \draw[rounded corners=3mm,grisc] (-.3,.1) -- (.6,1.2) -- (1.5,.6) -- (1.3,-1) node[coordinate,pos=.3] (z) {} -- (.9,-1.3) -- (0,-1.1) --     cycle;
      %\draw (.8,.8) node[black] {$G_o$};
       \draw (.8,.2) node[rond7,rouge] (a) {} node {$q_1$};
       \draw (.8,-.8) node[carre6,jaune] (c) {} node {$q_2$};
      \end{scope}
       \draw (5.2,-.7) node[rond5,vert] (d) {} node {$q_t$};
      \draw (d) edge[-latex',out=-30,in=30,looseness=6,densely dotted] node[right] {$0$} (d);
    \draw[rounded corners=5mm,grisc] (2,.1) -- (2.9,1.2) -- (3.9,.6) -- (3.7,-1) node[coordinate,pos=.3] (z) {} -- (3.1,-1.3) -- (2.2,-1.1) --     cycle;
      %\draw (2.9,.8) node[black] {$G_c$};

      \draw (2.9,.2) node[rond7,rouge] (a') {} node {$q_1^c$};
       \draw (2.9,-.8) node[rond7,jaune] (c') {} node {$q_2^c$};
      \draw (a) edge[-latex'] (c'); %node[above]{$w_1+ \epsilon$}(c');
      \draw (c) edge[-latex'] (a'); %node[below]{$w_2+ \epsilon$}(a');
      \path (a) edge node[opacity=.8,text width=1.3cm] {weights $w+\epsilon$} (c');
      \draw (a') edge[out=45,in=135,densely dotted,looseness=1.2] (z);
      \draw (c') edge[out=45,in=135,densely dotted,looseness=1.2] (z);
      \draw (c') edge[bend left,-latex'] node[below left]{$0$}(c);
      \draw (a') edge[bend right,-latex'] node[above left]{$0$}(a);
      \draw (z) edge[densely dotted,-latex',out=-45] node[above right] {$-\delta$} (d);
      \draw (d) edge[-latex',out=-30,in=30,looseness=6,densely dotted] node[right] {$0$} (d);
      \end{scope}
    \draw (2.5,0) edge[-latex',double] (3.5,0); 
  \end{tikzpicture}
  \vspace{-2\medskipamount}
  \caption{Schema of the reduction from \Lenergy to
    \Lenergy-reachability objectives}
    %(only Player~$1$ is given the opportunity to reach~$q_t$; thus states~$q_i^c$ in the second copy of~$G$ all belong to Player~$1$).}
    \label{fig-redtoreachmain}
\end{figure}
\begin{figure}[ht]
  \centering
  \begin{tikzpicture}[scale=1.3]
    \begin{scope}
    %% small game + transformation
    \draw (-.1,.9) node {$G$};
    \draw[rounded corners=5mm,grisc] (-.3,.1) -- (.6,1.2) -- (2.2,.6) -- (2,-1) -- (1,-1.3) -- (.1,-1.1) -- cycle;
      \path (1,-.8) node {weights $w$};
      \draw (.5,.5) node[rond5,jaune] (a) {} node {$q_i$};
      \draw (1.5,.4) node[rond5,jaune] (b) {};
      \draw (1.0,-.3) node[rond5,vert] (c) {} node {$q_t$};
      \draw (a.-135) edge[latex'-] +(-135:3mm);
      \draw (a) edge[-latex',bend left] (b);
      \draw (b) edge[-latex',bend left] (a);
      \draw (b) edge[-latex'] (c);
      \draw (c) edge[-latex'] (a);
      \draw (c) edge[-latex',out=-140,in=160,looseness=6] (c);
    \end{scope}
    \begin{scope}[xshift=6cm]
    \draw (-.1,.9) node {$G'$};
    \draw[rounded corners=5mm,grisc] (-.3,.1) -- (.6,1.2) -- (2.2,.6) -- (2,-1.2) node[coordinate,pos=.3] (z) {} -- (1,-1.3) -- (.1,-1.2) -- cycle;
      %\draw[rounded corners=5mm,gris] (-.3,.1) -- (.6,1.2) -- (2.2,.6) -- (1.2,-.8) node[coordinate,pos=.3] (z) {} -- (.2,-.5) -- cycle;
      \path (1,-.9) node[text width=1.9cm,align=center] {weights $w-\epsilon$};
      \draw (.5,.5) node[rond5,jaune] (a) {} node {$q_i$};% node {$q_0$};
      \draw (1.5,.4) node[rond5,jaune] (b) {};
%      \draw (2.5,.7) node[rond5,jaune] (s) {$q_s$};
      \draw (1.0,-.3) node[rond5,vert] (c) {} node {$q_t$};
      \draw (-1,-.4) node[rond5,rouge] (d) {} node {$q_i'$};
      \draw (d.-135) edge[latex'-] +(-135:3mm);
      \draw (a) edge[-latex',bend left] (b);
      \draw (b) edge[-latex',bend left] (a);
      \draw (b) edge[-latex'] (c);
%      \draw (b) edge[-latex', dotted, bend right] (s);
%      \draw (a) edge[-latex', dotted, bend left] (s);
%      \draw (c) edge[-latex', dotted, bend right] (s);
%      \draw (s) edge[-latex',out=10,in=-50,looseness=6,densely dotted] node[right] {$-1$} (s);
      %\draw (c) edge[-latex'] (a);
      %\draw (c) edge[-latex',out=-150,in=150,looseness=6] (c);
      \draw (c) edge[-latex',out=10,in=-50,looseness=6,densely dotted] node[right] {$0$} (c);
      \draw (d) edge[-latex',densely dotted] node[pos=.4,above left=-1mm] {$1$} (a);
    \end{scope}
    \draw (3.2,0) edge[-latex',double] (4.2,0); 
  \end{tikzpicture}
  \vspace{-\medskipamount}
  \caption{Schema of the reduction from \Lenergy-reachability to
    \Lenergy objectives}
    %(assuming \Pl1 can reach the target state from all states in~$G$)}
    \label{fig-redfromreach-main}
%  {\NM{\color{red}I think that the sink state $q_s$ is useless (at least it does not
%    seem to play any role in the proof}}
% It guarantees existence of an ifinite run. If this qs^\omega is the only infinite run, you loose !
  \vspace{-\medskipamount}
\end{figure}
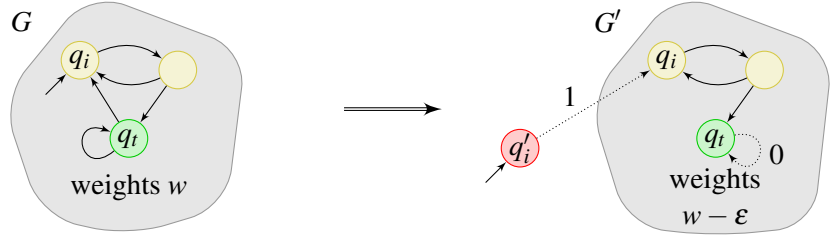

The~converse reduction is similar: this time, starting from a
two-player \Lenergy-reachability game~$G$, we~build a
two-player \Lenergy game~$G'$ by first restricting\footnote{This may
require to add a (losing) sink state, so that each state has an
outgoing transition.} to states from which \Pl1 has a strategy to
reach~$q_t$ (with no quantitative constraints), removing the outgoing
transitions of~$q_t$ and only keep a zero-self-loop, and substract a
small positive~$\epsilon$ to all other weights; to~compensate with
this, we~set the initial energy level to~$1$. Then if~\Pl1 wins
in~$G$, then by taking~$\epsilon$ small enough, the same strategy is
winning in~$G'$. Conversely, if she wins in~$G'$ (for \Lenergy
objective), then some of the outcomes may reach~$q_t$ and some other
will take non-negative cycles. When played in~$G$, that strategy will
reach~$q_t$ on some outcomes, and it will take positive cycles on some
others; along the latter outcomes, the energy will grow arbitrarily
high, and from some point on, \Pl1 will simply have to switch to her
attractor strategy in order to reach~$q_t$.
\end{proof}

Similarly, for \LUenergy-reachability objectives, we~prove the same
complexities as with classical \LUenergy objectives:
\begin{theorem}
\label{thm_reach_strong_LU}
One-player \LUenergy-reachability games are \PSPACE-complete.
Two-player \LUenergy-reachability games are \EXPTIME-complete.
%\NM{State 2-player case in this theorem as well? Algorithms are similar, and hardness proofs are the same as for \ID games}
%One-player \LUenergy reachability games are \PSPACE-complete.
% Yes, I agree with the proof and its hardness part. 
\end{theorem}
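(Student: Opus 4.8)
The plan is to establish the matching upper and lower bounds \emph{separately}, rather than through a single pair of back-and-forth reductions as in the proof of Theorem~\ref{thm_reachability_games}. The reason — and the main obstacle — is that the reduction of Figure~\ref{fig-redtoreachmain}, which turns a survival objective into a reachability one by inflating all weights by~$\epsilon$ and letting \Pl1 cash in a large reward~$-\delta$ once the energy has grown high enough, crucially relies on the energy growing \emph{unboundedly}. Under a strict upper bound~$U$ this is impossible; worse, adding~$\epsilon$ to a zero-net surviving cycle would make the energy overflow~$U$ and send the run to~$\qerr$. So hardness cannot be inherited from plain \LUenergy games by such a gadget, and I would instead obtain the lower bounds by reducing from problems whose objective is already reachability-shaped.

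\textbf{Membership.} I would work directly in the finite expanded arena~$\exGLU$, whose number of configurations is $N=|Q|\cdot(U-L+1)+1$, i.e.\ exponential in the (binary) encoding of~$L$ and~$U$. By definition of~$\exGLU$, a path is \LUfeasible and visits~$R$ exactly when the corresponding play in~$\exGLU$ reaches $R'=\{(r,l)\mid r\in R,\ l\in[L;U]\}$; avoiding~$\qerr$ is automatic, since~$\qerr$ is absorbing and~$R'$ is unreachable from it. Thus the \LUenergy-reachability game is an ordinary reachability game on~$\exGLU$. For two players I would solve it by computing \Pl1's attractor of~$R'$, which runs in time polynomial in~$|\exGLU|$, hence in \EXPTIME. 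For one player a winning witness is a feasible path to~$R'$ of length at most~$N$; I would guess it on the fly, storing only the current configuration~$(q,l)$ (with~$l$ written in binary) together with a step counter bounded by~$N$, all in polynomial space; by Savitch's theorem this nondeterministic polynomial-space procedure is in \PSPACE.

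\textbf{Hardness.} For the two-player \EXPTIME lower bound I would reduce from \emph{countdown games}, which are \EXPTIME-complete and already phrased as reachability: a configuration is a control state together with a counter value bounded by a binary constant, and the winning condition is to reach value~$0$ in a designated state. The energy level plays the role of the counter (so~$U$ is the counter bound and $L=0$), each counter move becomes a weighted edge, and the target value in the designated state becomes the reachability set~$R$; a feasible play reaching~$R$ then corresponds exactly to a winning countdown play. For the one-player \PSPACE lower bound I would use the same encoding of a counter into the energy level, reducing from reachability in succinctly presented (binary-bounded) one-counter systems, which is \PSPACE-complete. Both reductions merely copy the numeric parameters in binary and are therefore computable in \LOGSPACE. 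Equivalently, one may reuse the hardness constructions of~\cite{BouyerFLMS08} for \LUenergy games, observing that the computations they simulate terminate in a distinguished configuration and hence already witness reachability hardness. Combining the two parts yields \PSPACE-completeness for one player and \EXPTIME-completeness for two players.
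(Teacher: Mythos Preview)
Your proposal is correct and follows essentially the same approach as the paper: membership via the expanded arena~$\exGLU$ (on-the-fly for \PSPACE, attractor for \EXPTIME), and hardness by reusing the reachability-shaped reductions underlying the \LUenergy lower bounds (bounded one-counter reachability~\cite{FearnleyJ13} and countdown games~\cite{JurdzinskiLS07}). Your additional explanation of why the $\epsilon$-inflation trick from Theorem~\ref{thm_reachability_games} fails under a strict upper bound is not in the paper but is a useful observation.
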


\begin{proof}
Membership in \EXPTIME is proven by considering the \emph{expanded
  game}~\exGLU:
%% given $G=\tuple{Q_1,Q_2,E}$,
%% %and its weight function~$\weight$,
%% and two bounds~$L$L and~$U$, we~define $G'=\tuple{Q_1\times[L;U],
%% Q_2\times[L;U], E'}$ such that for any~$q,q'\in Q_1\cup Q_2$ and
%% $l,l'\in[L,U]$, we~have $((q,l),(q',l'))\in E$ if, and only~if,
%% $l'-l=\weight(q,q')$. This way, any run in~$G$ starting from some
%% state~$q$ with energy level~$l$ can be mimicked from~$(q,l)$ in~$G'$,
%% as long as its energy level remains in~$[L;U]$. The~expanded game has
%% exponential size;
it~can be used to check reachability for both the one- and the two-player
cases. For~the one-player case, this can be achieved by proceeding
on-the-fly, without explicitly building the expanded game;
the~resulting algorithm runs in \PSPACE. For the two-player case,
we~solve reachability in that exponential-size game, which results in
an \EXPTIME algorithm.

For both the one- and the two-player settings, the hardness proofs
for \LUenergy objectives are readily adapted to \LUenergy-reachability
objectives, since they are based on reachability-like problems
(reachability in bounded one-counter automata~\cite{FearnleyJ13} and
countdown games~\cite{JurdzinskiLS07}, respectively).
\end{proof}

%%%% lower+upper has been intergrated in single.tex
%\input{lowup.tex}

\vspace{-\medskipamount}
\section{Energy reachability games with weak upper bound}
\label{sec-weak}
%\NM{here we consider weak-upper bounds. IIRC, for 1-player games, we
%  have a PTIME algorithm, and for 2-player games we prove the
%  existence of a ``small certificate'', but we cannot check it in
%  polynomial time (hence between NP and PSPACE).}
%\documentclass{llncs}
%
%\usepackage{nico,macros}
%\multidef{\ensuremath{\textsf{#1}}\xspace}{first,last,set}
%\multidef[prefix=sm]{\textsf{\fontsize{6pt}{6pt}\selectfont #1}\xspace}{LU,LW,L}
%\multidef[suffix=arrow,arg=1]{\ensuremath{\xrightarrow{##1}_{\csname sm#1\endcsname}}\xspace}{L,LU,LW}
\def\better{\mathrel{\triangleright}}

%\begin{document}
%
%\section{Energy games with weak upper bound}

Finding a strategy that satisfies an \LWenergy constraint along an
infinite run is conceptually easy: it~suffices to find a cycle that
can be iterated once with a positive effect. It~follows that
memoryless strategies are enough, and the \LWenergy problem was shown
to be in \PTIME for one-player arenas, and in $\NP\cap\co\NP$ for
two-player arenas~\cite{BouyerFLMS08}.

The situation is different when we have a reachability condition:
players may have to keep track of the exact energy level in order to
find their way to the target state. Obviously, considering the
expanded arena~\exGLW, we~easily get exponential-time algorithms for
\LWenergy-reachability objectives. However, as proved below, in the one-player
case, a \PTIME algorithm exists. 

%% A~path~$\pi$ is a cycle if $\first(\pi)=\last(\pi)$. When applied
%% from a configuration~$(q,u)$, a~cycle yields a run
%% may reach a final
%% configuration~$(q,v)$ (or will violate the energy
%% constraints). If~$v>u$, we say that~$\rho$ is a positive cycle (with
%% initial energy level~$u$); nonpositive, negative and nonnegative
%% cycles (with initial credit~$u$) are defined similarly. It~is not hard
%% to observe that some cycles can be positive from some initial energy
%% level and negative from some other initial energy level.

\begin{example}\label{ex-LW}
Consider the one-player arena of Fig.~\ref{fig-exLW}, where the lower
bound is~$L=0$, the weak-upper bound is~$W=5$, and the target state
is~$q_t$. Starting from~$q_0$ with initial credit~$0$, we~first have to move
to~$q_1$, and then iterate the positive cycle
$\beta_1=(q_1,+2,q_2)\cdot(q_2,-2,q3)\cdot(q_3,+1,q_1)$
%of total weight~$+1$
three times, ending up in~$q_1$ with energy level~$3$. We~then take
the cycle $\beta_2=(q_1,+2,q_2)\cdot(q_2,-5,q4)\cdot(q_4,+5,q_1)$,
which raises the energy level to~$5$ when we come back to~$q_1$, so
that we can reach~$q_t$. Notice that $\beta_1$ has to be repeated
three times before taking cycle~$\beta_2$, and that repeating
$\beta_1$ more than three times maintains the energy level at~$4$,
which is not sufficient to reach~$q_t$. This suggests that \Pl1 needs
memory and cannot rely on a single cycle to win \LWenergy-reachability
games.
\end{example}

%\vspace{-2\medskipamount}    
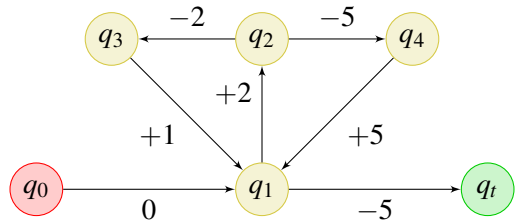
\begin{figure}[htbp]
  \centering
%  \scalebox{0.8}{
  \begin{tikzpicture}
    \draw (0,0) node[rond,rouge] (a) {} node {$q_0$};
    \draw (3,0) node[rond,jaune] (b) {} node {$q_1$};
    \draw (3,2) node[rond,jaune] (c) {} node {$q_2$};
    \draw (1,2) node[rond,jaune] (d) {} node {$q_3$};
    \draw (5,2) node[rond,jaune] (e) {} node {$q_4$};
%    \draw (5,0) node[rond,jaune] (f) {} node {$q_5$};
    \draw (6,0) node[rond,vert] (g) {} node {$q_t$};
    \draw (a) edge[-latex'] node[below] {$0$} (b)
    (b) edge[-latex'] node[above left] {$+2$} (c)
    (c) edge[-latex'] node[above] {$-2$} (d)
    (d) edge[-latex'] node[below left] {$+1$} (b)
    (c) edge[-latex'] node[above] {$-5$} (e)
    (e) edge[-latex'] node[below right] {$+5$} (b)
    (b) edge[-latex'] node[below] {$-5$} (g);    
      \end{tikzpicture}
%      }
%  \vspace{-2\medskipamount}    
  \caption{A one-player arena with \LWenergy-reachability objective}
  \label{fig-exLW}
\end{figure}

%% We write $(q,v)\LUarrow{\rho} (q',v')$ to indicate that there exists a
%% finite run from configuration~$(q,v)$ to configuration~$(q',v')$
%% satisfying the \LUenergy constraint. Unless specified otherwise,
%% we~write~$L$ for the strict lower bound and~$U$ for the strict upper
%% bound.  Similarly, $(q,v)\LWarrow{\rho} (q',v')$ means that there
%% exists a run from~$(q,v)$ to~$(q',v')$ satisfying the \LWenergy
%% constraint (with $L$ as the strict lower bound and~$W$ as the weak
%% upper bound). 

This example shows that winning strategies for \Pl1 may have to
monitor the exact energy level all along the computation, thereby
requiring exponential memory (assuming that all constants are encoded
in binary; with unary encoding, the expanded game~\exGLW would have
polynomial size and directly give a polynomial-time algorithm).

\begin{proposition}
In \LWenergy-reachability games, exponential memory may be necessary
for \Pl1 (assuming binary-encoded constants).
\end{proposition}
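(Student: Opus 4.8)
The plan is to exhibit, for every $n$, a one-player \LWenergy-reachability game $G_n$ of size $O(n)$ in which \emph{every} winning strategy uses at least $2^n$ memory states; this both proves the statement and makes precise the intuition raised after Example~\ref{ex-LW} that \Pl1 must monitor the exact energy level. I would first fix the memory model: a finite-memory strategy is given by a memory structure $\mathcal M=(M,m_0,\tau)$ with $M$ a finite set of memory states, $m_0\in M$ the initial one, and update function $\tau\colon M\times E\to M$, the chosen edge depending only on the current memory state and the current game state; the memory size is $|M|$, and memoryless means $|M|=1$.

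The family I would use is deliberately minimal. Set $L=0$ and $W=2^n$, written in binary so that $|G_n|=O(n)$. Let $G_n$ have states $q_0$, $q_1$, $q_t$ and edges $(q_0,0,q_1)$, a self-loop $(q_1,+1,q_1)$, an exit edge $(q_1,-2^n,q_t)$, and a self-loop $(q_t,0,q_t)$ for totality; take $R=\{q_t\}$ and initial state $q_0$. I would first check that \Pl1 wins: from $(q_0,0)$ move to $q_1$ and take the self-loop $2^n$ times, so the energy rises through $0,1,\dots,2^n$ and saturates at $W=2^n$; the exit edge then lands in $q_t$ with energy $0\ge L$, and the resulting run is \LWfeasible. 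This winning play needs $2^n+1$ memory states, giving the matching upper bound.

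For the lower bound, the key point is that this play is essentially forced. The only edge entering $q_t$ is the exit edge, and taking it from a configuration with energy $\ell$ yields $\ell-2^n$, which is $\ge L=0$ only when $\ell=2^n$ (the weak cap forbids $\ell>2^n$); taking it with $\ell<2^n$ leads to $\qerr$ and so is absent from every run. Hence any winning run takes the $q_1$-self-loop at least $2^n$ times before exiting. Now fix any winning memory structure of size $m$, let $\mu^\ast$ be the memory state after the prefix $q_0\cdot(q_0,0,q_1)$, and for the self-loop $e=(q_1,+1,q_1)$ put $f(\mu)=\tau(\mu,e)$, so that the memory state at the $i$-th visit of $q_1$ is $f^i(\mu^\ast)$. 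Since the edge read between consecutive visits is always $e$, the sequence $(f^i(\mu^\ast))_{i\ge 0}$ is the orbit of a single map on an $m$-element set, hence enters a cycle after pre-period plus period at most $m$; consequently every memory state that ever occurs already occurs among $f^0(\mu^\ast),\dots,f^{m-1}(\mu^\ast)$. If one of these is an \emph{exit}-state (its action at $q_1$ being the exit edge), then the first exit happens at index $\le m-1$; if none is, the self-loop is repeated forever and $q_t$ is never reached. To win, the first exit must occur at index $\ge 2^n$, so the former case forces $2^n\le m-1$, i.e. $|M|\ge 2^n+1$. Every winning strategy therefore needs exponentially many memory states.

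The argument is compact mainly because the arena offers no alternative route to $q_t$, so there is no danger of a shorter winning path using less memory; the one step I would take most care over is the eventual-periodicity claim, namely that a memory structure reading a single repeated edge can distinguish at most as many loop-iterations as it has states, so that counting to $2^n$ genuinely costs $2^n$ states. I would finally stress that the binary-encoding hypothesis is essential: with unary constants $W=2^n$ would make $|G_n|$ exponential and the expanded arena $\exGLW$ of polynomial size, recovering the memoryless-in-$\exGLW$ behaviour noted just before the statement.
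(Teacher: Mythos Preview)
Your proof is correct and in fact more rigorous than what the paper offers. The paper does not give a formal proof of this proposition; it simply states it after Example~\ref{ex-LW} and relies on the informal observation that \Pl1 must ``monitor the exact energy level'' along the computation. Your argument differs in two ways. First, you use a strictly simpler witness family: a single $+1$ self-loop and a $-2^n$ exit, rather than the paper's arena with two interacting cycles (which was designed to illustrate that \emph{several distinct} cycles may be needed, a phenomenon orthogonal to the memory lower bound itself). Second, you supply an explicit pigeonhole argument on the memory automaton---showing that the orbit of the update map on a repeated edge has at most $|M|$ distinct values, so the first exit index is at most $|M|-1$---which the paper omits entirely. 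What your approach buys is a clean, self-contained lower bound with a matching upper bound of $2^n+1$ memory states; what the paper's example buys is motivation for the later algorithmic development (universal cycles, the $\better$ preorder), since it already exhibits the need to combine cycles. Your care about the eventual-periodicity step is well placed: that is indeed the crux, and your treatment of it is sound.
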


Notice that this does not prevent from having \PTIME algorithms:
the~strategy in Example~\ref{ex-LW} is not very involved, but it
depends on the energy level (up~to~$W$). 

From now on, and until Lemma~\ref{lemma-higherstrat}, we~consider
the one-player case.  In order to get a polynomial-time algorithm,
%also with binary-encoded constants,
we analyze cycles in the graph, and prove that a path
witnessing \LWenergy reachability can have a special form, which can
be represented compactly using polynomial size. We begin with a series
of simple lemmas. 

\begin{restatable}{lemma}{lemmaeight}
%\begin{lemma}
\label{lemma-higherrun}
Let $\pi$ be a finite path in a one-player arena~$G$. 
If $(q,u) \LWarrow\pi (q',u')$, then for any~$v\geq u$,
$(q,v) \LWarrow\pi (q',v')$ for some~$v'\geq u'$.
%\end{lemma}
\end{restatable}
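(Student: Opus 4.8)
The plan is to prove this by induction on the length $\size\pi$ of the path, relying on two elementary monotonicity properties of the weak-upper-bound semantics: feasibility is preserved when the starting energy is raised, and the single-edge update map $f_w\colon l \mapsto \min(W, l+w)$ is non-decreasing in~$l$. For the base case, when $\pi$ is the empty path we have $q'=q$ and $u'=u$, and for any $v\geq u$ the empty run from $(q,v)$ trivially witnesses $(q,v)\LWarrow\pi(q,v)$ with $v'=v\geq u=u'$.

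For the inductive step I would write $\pi = e\cdot\pi''$, where $e=(q,w,q_1)$ is the first edge, and let $(q_1,u_1)$ be the configuration reached after one step of the run witnessing $(q,u)\LWarrow\pi(q',u')$. Feasibility of that run at this first step means $u+w\geq L$, and by definition $u_1=\min(W,u+w)$. Now fix any $v\geq u$. Two facts carry the step through from $(q,v)$: first, $v+w\geq u+w\geq L$, so taking~$e$ from $(q,v)$ stays feasible and does not lead to~$\qerr$; second, $v_1:=\min(W,v+w)\geq\min(W,u+w)=u_1$, because $\min$ is non-decreasing in its right argument. Having established $v_1\geq u_1$, I apply the induction hypothesis to the strictly shorter path~$\pi''$, comparing the runs from $(q_1,u_1)$ and from $(q_1,v_1)$; this gives $(q_1,v_1)\LWarrow{\pi''}(q',v')$ for some $v'\geq u'$. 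Prepending the feasible first step along~$e$ then yields $(q,v)\LWarrow\pi(q',v')$ with $v'\geq u'$, as required.

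The argument is essentially routine, so I do not expect a genuine obstacle beyond setting up the induction cleanly and tracking feasibility at each step. The one point that deserves care is that the cap at~$W$, i.e.\ the $\min$ in the update, does not destroy the ordering: capping two ordered values $u_1\leq v_1$ at~$W$ keeps them ordered. This is precisely the feature that distinguishes the \emph{weak} (absorbing) upper bound from a strict one, and the reason the monotonicity holds here: with a strict upper bound~$U$ the higher run could become infeasible as soon as $v+w>U$, so the analogue of this lemma would fail for \LUruns. I would stress this contrast, since the present lemma is exactly what later lets \Pl1 reason about iterating positive cycles without worrying about overshooting.
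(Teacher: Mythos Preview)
Your proof is correct and follows the natural approach: induction on the length of~$\pi$, using that the one-step update $l\mapsto\min(W,l+w)$ is monotone in~$l$ and that feasibility of the lower-bound check $l+w\geq L$ is preserved when~$l$ increases. This is exactly the routine argument the paper has in mind when it introduces this as one of a ``series of simple lemmas''; there is nothing to add.
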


This lemma states that starting with higher energy level can only be beneficial.
Notice that, even if we add condition~$u'>u$ in the
hypotheses of Lemma~\ref{lemma-higherrun}, it~need not be the case that
$v'>v$. In~other terms, a~sequence of transitions may have a positive
effect on the energy level from some configuration, and a negative
effect from some other configuration, due to the weak upper bound.
%But we can prove the following lemmas:
Below, we prove a series of results related to this issue, and that
will be useful for the rest of the proof. First, the~effect of a given
path (i.e., the net amount of energy that is harvested) decreases when
the initial energy level increases:
\begin{restatable}{lemma}{lemmanine}
%\begin{lemma}
\label{lemma-hitW}
  Let $\pi$ be a finite path in a one-player arena~$G$, and consider
  two \LWruns $(q,u)\LWarrow\pi (q',u')$ and $(q,v)\LWarrow\pi(q',v')$
  with $u\leq v$. Then $u'-u\geq v'-v$, and if the inequality is
  strict, then the energy level along the run
  $(q,v)\LWarrow\pi(q',v')$ must have hit~$W$.
%\end{lemma}
\end{restatable}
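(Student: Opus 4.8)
The plan is to track the two energy profiles position by position and to monitor their \emph{gap}. Write $\pi=(e_i)_{0\le i<n}$ with $e_i=(q_i,w_i,q_{i+1})$, and let $u_0,u_1,\dots,u_n$ and $v_0,v_1,\dots,v_n$ be the energy levels along the two \LWruns, so that $u_0=u$, $u_n=u'$, $v_0=v$, $v_n=v'$, and for every $i$,
\[
u_{i+1}=\min(W,\,u_i+w_i),\qquad v_{i+1}=\min(W,\,v_i+w_i),
\]
both steps being well defined since the runs are feasible (they never reach $\qerr$, so the lower bound is met at each step). The inequality $u'-u\ge v'-v$ to be proved is exactly $v_n-u_n\le v_0-u_0$, i.e.\ the gap $d_i:=v_i-u_i$ is no larger at the end than at the start. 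By Lemma~\ref{lemma-higherrun} applied to each prefix $\pi_{<i}$ we already know $v_i\ge u_i$, so $d_i\ge 0$ throughout. I~would therefore prove the stronger, local statement that $d$ is non-increasing, and pinpoint exactly when it drops.

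The core is a one-step case analysis bounding $d_{i+1}$ in terms of $d_i$. If $v_i+w_i\le W$, then since $u_i\le v_i$ we also have $u_i+w_i\le v_i+w_i\le W$, so neither run saturates: $u_{i+1}=u_i+w_i$ and $v_{i+1}=v_i+w_i$, whence $d_{i+1}=d_i$. If instead $v_i+w_i>W$, the high run saturates, $v_{i+1}=W$, and two subcases arise. If also $u_i+w_i>W$, then $u_{i+1}=W$ as well, so $d_{i+1}=0\le d_i$. If $u_i+w_i\le W$, then $u_{i+1}=u_i+w_i$ and
\[
d_i-d_{i+1}=(v_i-u_i)-\bigl(W-(u_i+w_i)\bigr)=(v_i+w_i)-W>0 .
\]
Thus in every case $d_{i+1}\le d_i$, which yields $d_n\le d_0$ and hence the first claim. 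Moreover a strict drop $d_{i+1}<d_i$ can occur only in the second case, where $v_i+w_i>W$ and consequently $v_{i+1}=W$; that is, a strict decrease of the gap forces the high run to hit the weak upper bound~$W$.

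For the second assertion, suppose $u'-u>v'-v$, i.e.\ $d_n<d_0$. Since $d$ is non-increasing along $\pi$, there is at least one index $i$ with $d_{i+1}<d_i$; by the analysis above this forces $v_i+w_i>W$ and $v_{i+1}=W$, so the energy level of the run $(q,v)\LWarrow\pi(q',v')$ reaches~$W$ at position $i+1$, which is precisely what is claimed.

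The argument is essentially a discrete monotonicity computation, so I do not anticipate a genuine obstacle; the only point demanding care is the mixed subcase, in which the high run saturates at~$W$ while the low one does not, since this is simultaneously the unique source of a strict decrease of the gap and the place where the clamping at~$W$ is witnessed. Getting the accounting $d_i-d_{i+1}=(v_i+w_i)-W$ right there is exactly what ties the strict inequality $u'-u>v'-v$ to the energy level hitting~$W$.
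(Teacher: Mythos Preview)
Your proof is correct and follows the natural step-by-step induction on the length of~$\pi$, tracking the gap $d_i=v_i-u_i$ and showing it is non-increasing, with any strict drop forcing $v_{i+1}=W$. This is essentially the same argument the paper has in mind: the lemma is elementary once one observes that $\min(W,\cdot)$ can only shrink the difference between two inputs, and does so strictly exactly when the larger input is clipped; your case split makes this explicit and cleanly isolates the mixed subcase where the accounting $d_i-d_{i+1}=(v_i+w_i)-W$ ties the strict inequality to hitting~$W$.
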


The next lemma is more precise about the effect of following a path
when starting from the maximal energy level~$W$:
\begin{restatable}{lemma}{lemmaten}
%\begin{lemma}
\label{lemma-W}
  Let $\pi$ be a finite path in a one-player arena~$G$, for which there
  is an \LWrun $(q,u)\LWarrow\pi (q',u')$. 
  %with $u\leq u'$. (suppressed this info otherwise the second statement makes no sense) 
  If $u'$ is the maximal energy level along that run, then
  $(q,W)\LWarrow\pi (q',W)$;
  if $u$ is the maximal energy level along the run above, then
  $(q,W)\LWarrow\pi (q',W+u'-u)$.
%\end{lemma}
\end{restatable}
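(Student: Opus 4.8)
The plan is to reason directly about the two sequences of energy levels obtained by running $\pi$ from $(q,u)$ and from $(q,W)$, and to compare them step by step. Write the weights of $\pi$ as $w_0,\dots,w_{n-1}$, and let $u=l_0,l_1,\dots,l_n=u'$ be the energy levels of the run $(q,u)\LWarrow\pi(q',u')$, so that $l_{i+1}=\min(W,l_i+w_i)$ and $l_i+w_i\ge L$ for every $i$. Since $W\ge u$ (as $u\le u'\le W$ and $u\le\max_i l_i\le W$), Lemma~\ref{lemma-higherrun} guarantees that $\pi$ is also \LWfeasible from $(q,W)$; let $W=m_0,m_1,\dots,m_n$ be the corresponding levels, again satisfying $m_{i+1}=\min(W,m_i+w_i)$. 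The whole argument reduces to controlling when the capping operation $\min(W,\cdot)$ is actually active along the $m$-run, and this is the only delicate point: the weak upper bound makes the relation between $l_i$ and $m_i$ nonlinear.

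For the first statement, I would use the telescoping inequality $l_{i+1}\le l_i+w_i$, which gives, for every $0\le j\le n$, the suffix-sum bound $\sum_{k=j}^{n-1}w_k\ge l_n-l_j=u'-l_j\ge 0$, the last inequality because $u'$ is assumed maximal. To conclude $m_n=W$, let $j$ be the largest index with $m_j=W$ (it exists since $m_0=W$). By maximality of $j$, no capping occurs at steps $j,\dots,n-1$, so $m_n=m_j+\sum_{k=j}^{n-1}w_k=W+\sum_{k=j}^{n-1}w_k$; since this suffix sum is nonnegative we get $m_n\ge W$, and as $m_n\le W$ always, $m_n=W$, which is exactly $(q,W)\LWarrow\pi(q',W)$.

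For the second statement, assume $u$ is maximal, so $l_i\le u$ for all $i$. If $u=W$ the $m$-run coincides with the $l$-run and the claim is trivial, since $W+u'-u=u'$. If $u<W$, then $l_{i+1}\le u<W$ forces $l_i+w_i\le W$ at every step, hence the $l$-run never caps: $l_i=u+\sum_{k<i}w_k$, and the prefix sums satisfy $\sum_{k<i}w_k=l_i-u\le 0$. I would then show by induction that the $m$-run does not cap either: assuming $m_i=W+\sum_{k<i}w_k$, the quantity $m_i+w_i=W+\sum_{k\le i}w_k\le W$ (because the prefix sum is nonpositive), so $m_{i+1}=m_i+w_i=W+\sum_{k\le i}w_k$. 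At $i=n$ this yields $m_n=W+\sum_{k<n}w_k=W+(u'-u)$, as required.

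The main obstacle, really the only one, is keeping precise track of the capping: in the first part one must argue that the $m$-run, after possibly dropping below $W$, climbs back exactly to $W$ (never overshooting, thanks to the $\min$, and never falling short, thanks to the nonnegative suffix sums), and in the second part one must certify that no capping ever happens from $W$. Both are handled by the sign conditions on suffix and prefix sums extracted from the maximality hypotheses; once these are in place the computations are immediate. Feasibility of the $m$-run comes for free from Lemma~\ref{lemma-higherrun}, so no separate lower-bound check is needed.
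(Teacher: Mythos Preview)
Your argument is correct. The paper relegates the proof of this lemma to an appendix not included in the provided source, so a line-by-line comparison is not possible; that said, your direct computation on the two energy-level sequences---extracting nonnegative suffix sums from the maximality of~$u'$ in the first part, nonpositive prefix sums from the maximality of~$u$ in the second, and then tracking exactly where the truncation $\min(W,\cdot)$ can fire---is the natural elementary route and is almost certainly what the authors intend.

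One cosmetic remark on the first part: when $j<n$, your chain of equalities yields $m_n\ge W$ while the choice of~$j$ as the \emph{largest} index with $m_j=W$ forces $m_n<W$; so the case $j<n$ is in fact vacuous and $j=n$ is forced. Phrasing it as a case split (or an explicit contradiction) would make the logic slightly cleaner, but the conclusion $m_n=W$ stands either way.
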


From Lemma~\ref{lemma-higherrun}, it~follows that any run witnessing
\LWenergy reachability can be assumed to contain no cycle with
nonpositive effect. Formally:

\begin{restatable}{lemma}{lemmatwelve}
%\begin{lemma}
\label{lemma-removeneg}
  Let $\pi$ be a finite path in a one-player arena~$G$.
  If $(q,u) \LWarrow{\pi} (q',u')$ and $\pi$ can be decomposed as
  $\pi_1\cdot\pi_2\cdot \pi_3$ in such a way that
  $(q,u) \LWarrow{\pi_1}
%  (s,v)\LWarrow{\pi_2}(s',v') \LWarrow{\pi_3}(q',u')$ with $v'\leq
  (s,v)\LWarrow{\pi_2}(s,v') \LWarrow{\pi_3}(q',u')$ with $v'\leq
  v$, then $(q,u)\LWarrow{\pi_1\cdot\pi_3} (q',u'')$ with $u''\geq
  u'$.
%\end{lemma}
\end{restatable}

The following lemmas show that several occurrences of a cycle having
positive effect along a path can be gathered together. This will be
useful to prove the existence of a short %witnessing 
path witnessing \LWenergy reachability.
\begin{restatable}{lemma}{lemmaeleven}
%\begin{lemma}
\label{lemma-iteratepos}
  Let $\pi$ be a finite path in a one-player arena~$G$.
  If
  $(q,u)\LWarrow\pi (q',u')$ with $u'> u$
  and 
  $(q,w)\LWarrow\pi (q',w')$ with $w'> w$,
  then
  for any $u\leq v\leq w$, it~holds that
  $(q,v)\LWarrow\pi (q',v')$ with $v'> v$.
%\end{lemma}
\end{restatable}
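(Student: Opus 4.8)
Looking at Lemma~\ref{lemma-iteratepos}, I need to prove that if a path $\pi$ has a strictly positive effect from two energy levels $u$ and $w$ (with $u \leq w$), then it has a strictly positive effect from every intermediate level $v \in [u;w]$.

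My plan is to use Lemma~\ref{lemma-hitW} as the main tool, since it precisely controls how the effect of a path varies with the starting energy level. First I would invoke Lemma~\ref{lemma-higherrun} to guarantee that, for any $v$ with $u \leq v \leq w$, the run $(q,v)\LWarrow\pi(q',v')$ indeed exists for some $v'$ (starting higher than a feasible level keeps the run feasible). So the only thing to establish is the strict inequality $v' > v$.

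The key step is a monotonicity-of-effect argument. Define the \emph{effect} of the path from level $x$ to be $\delta(x) = x' - x$, where $(q,x)\LWarrow\pi(q',x')$. Lemma~\ref{lemma-hitW} tells me that $\delta$ is non-increasing: if $x \leq y$ then $\delta(x) \geq \delta(y)$. Applying this with the two given levels, I have $\delta(u) = u'-u > 0$ and $\delta(w) = w'-w > 0$. For the intermediate level $v$, since $u \leq v \leq w$, monotonicity gives $\delta(u) \geq \delta(v) \geq \delta(w)$. The crucial observation is the \emph{right-hand} inequality: $\delta(v) \geq \delta(w) > 0$, which immediately yields $v' - v = \delta(v) > 0$, i.e. $v' > v$, as desired.

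I expect the main subtlety—though it turns out to be mild—to be the careful justification that $\delta$ is defined (the run exists) at the intermediate level and that Lemma~\ref{lemma-hitW} applies in the form I want. Lemma~\ref{lemma-hitW} is stated for two runs over the \emph{same} path with ordered starting levels, so I can apply it once to the pair $(v,w)$ (which requires both runs to exist; existence at $w$ is given, existence at $v$ follows from Lemma~\ref{lemma-higherrun} applied with $u\le v$). This single application of Lemma~\ref{lemma-hitW} to the pair $(v,w)$ gives exactly $v'-v \geq w'-w > 0$, completing the proof without even needing the left inequality. The argument is therefore short, and the only thing to be careful about is matching the direction of the inequality in Lemma~\ref{lemma-hitW} (effect decreases as the initial level increases) so that the \emph{smaller} intermediate level inherits at least the positive effect of the \emph{larger} level $w$.
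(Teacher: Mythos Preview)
Your proof is correct and follows essentially the approach the paper intends: existence of the run from the intermediate level~$v$ comes from Lemma~\ref{lemma-higherrun} (applied with $u\le v$), and then a single application of Lemma~\ref{lemma-hitW} to the ordered pair~$(v,w)$ gives $v'-v\ge w'-w>0$. Your observation that the hypothesis $u'>u$ is only needed indirectly (to guarantee feasibility from~$v$ via Lemma~\ref{lemma-higherrun}) and that the strict positivity is inherited from the \emph{upper} endpoint~$w$ is exactly right.
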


%The following lemma states that any \LWfeasible cycle with
%non-negative effect can be iterated, and that the energy level reached
%after a certain number of iterations eventually converges.
\begin{restatable}{lemma}{lemmathirteen}
%\begin{lemma}
\label{lemma-iteratecycles}
  Let $\pi$ be a cycle on~$q$ such that $(q,u) \LWarrow{\pi} (q,v)$
  for some $u\leq v$. Then $(q,u) \LWarrow{\pi^{W-L}} (q,v')$ for
  some~$v' \geq v$, and $(q,v')\LWarrow{\pi} (q,v')$.
%\end{lemma}
\end{restatable}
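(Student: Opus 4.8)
The plan is to study the cycle $\pi$ through its \emph{effect function}. For an energy level $x$ from which $\pi$ is \LWfeasible, write $f(x)$ for the level reached, i.e.\ $(q,x) \LWarrow{\pi} (q,f(x))$. By Lemma~\ref{lemma-higherrun}, the set of levels in $[L;W]$ from which $\pi$ is feasible is upward closed, and $f$ is non-decreasing on that set: raising the initial level never feasibility nor decreases the final level. Starting from $u_0 = u$, I would then define the iterates $u_{i+1} = f(u_i)$; these are well defined because $u_0=u$ is feasible and, as we check inductively, the sequence stays above $u$, so feasibility is preserved by Lemma~\ref{lemma-higherrun}. Since $u_1 = f(u_0) = v \ge u = u_0$ and $f$ is non-decreasing, the sequence $(u_i)$ is non-decreasing, and gluing $i$ successive copies of $\pi$ yields $(q,u) \LWarrow{\pi^{i}} (q,u_i)$ for every $i$.

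Next I would look at the \emph{gains} $g_i = u_{i+1} - u_i \ge 0$. The key observation is that they are non-increasing: applying Lemma~\ref{lemma-hitW} to the two runs of $\pi$ starting from $u_i \le u_{i+1}$ gives $u_{i+1} - u_i \ge u_{i+2} - u_{i+1}$, that is $g_i \ge g_{i+1}$. So $(g_i)$ is a non-increasing sequence of non-negative integers.

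The statement then follows from a counting argument over the $W-L+1$ gains $g_0, \dots, g_{W-L}$. Their partial sum telescopes to $\sum_{i=0}^{W-L} g_i = u_{W-L+1} - u \le W - L$, using $u_{W-L+1} \le W$ and $u \ge L$. If $g_{W-L}$ were positive, then, the gains being non-increasing, each of the $W-L+1$ terms would be at least $1$, forcing the sum to be at least $W-L+1$ --- a contradiction. Hence $g_{W-L} = 0$, i.e.\ $u_{W-L+1} = u_{W-L}$, which means exactly $(q, u_{W-L}) \LWarrow{\pi} (q, u_{W-L})$. Setting $v' = u_{W-L}$ gives $(q,u) \LWarrow{\pi^{W-L}} (q,v')$ together with $(q,v') \LWarrow{\pi} (q,v')$, and $v' = u_{W-L} \ge u_1 = v$ since $(u_i)$ is non-decreasing, which is the required bound $v' \ge v$.

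The routine parts are the monotonicity of $f$ and the preservation of feasibility under iteration, both immediate from Lemma~\ref{lemma-higherrun}. The part that really drives the proof --- and the step I would be most careful about --- is recognizing, via Lemma~\ref{lemma-hitW}, that the gains form a \emph{non-increasing} integer sequence: it is this monotonicity, rather than mere boundedness of the $u_i$, that caps the number of strictly positive gains by $W-L$ and thereby guarantees that a fixpoint is reached within $W-L$ iterations.
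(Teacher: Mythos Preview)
Your argument is correct and is essentially the standard approach one expects here: iterate the cycle, observe via Lemma~\ref{lemma-higherrun} that the resulting energy levels $u_i$ form a non-decreasing integer sequence in $[L,W]$, and conclude that a fixpoint is reached within $W-L$ iterations. One small comment on your final paragraph: Lemma~\ref{lemma-hitW} and the non-increasing gains are not what ``really drives'' the proof --- the telescoping bound $\sum_{i=0}^{W-L} g_i\le W-L$ on $W{-}L{+}1$ non-negative integer gains already forces some $g_j=0$, and since $f$ is a function this makes all subsequent gains zero as well, hence $g_{W-L}=0$; so Lemma~\ref{lemma-higherrun} together with boundedness and integrality already suffices, without invoking Lemma~\ref{lemma-hitW}.
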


Fix a path~$\pi$ in~$G$, and assume that some cycle~$\phi$ appears
(at~least) twice along~$\pi$: the~first time from some
configuration~$(q,u)$ to some configuration~$(q,u')$, and the second
time from~$(q,v)$ to~$(q,v')$. First, we~may assume that $\phi$ has
length at most~$\size Q$, since otherwise we can take an inner
subcycle.  We~may also assume that $v>u'$, as otherwise we~can apply
Lemma~\ref{lemma-removeneg}
to get rid of the resulting nonpositive
cycle between~$(q,u')$ and~$(q,v)$. For the same reason we may assume
$u'>u$ and $v'>v$.  As~a consequence, by Lemmas~\ref{lemma-iteratepos}
and~\ref{lemma-higherrun},
by repeatedly 
%iterating~$\rho$
iterating~$\phi$
from~$(q,u)$, we~eventually reach some
configuration~$(q,w)$ with $w\geq v'$, from which we can follow the
suffix of~$\pi$ after the second occurrence of~$\phi$. It~follows that
all~occurrences of~$\phi$ along~$\pi$ can be grouped together, and
we~can restrict our attention to runs of the form $\alpha_1\cdot
\phi_1^{n_1}\cdot \alpha_2\cdot\phi_2^{n_2}\cdots
\phi_k^{n_k}\cdot\alpha_{k+1}$ where the cycles~$\phi_j$ are distinct,
and have size at most~$\size Q$, and the finite runs~$\alpha_j$ are
acyclic.  Notice that each occurrence of any cycle~$\phi_j$ can be assumed to have positive effect, and by Lemma~\ref{lemma-iteratecycles}, we~may
assume $n_j=W-L$ for all~$j$.

While this allows us to only consider paths of a special form, this
does not provide \emph{short} witnesses, since there may be
exponentially many cycles of length less than or equal to~$\size Q$,
and the witnessing run may need to iterate several cycles looping on
%the same state (see~Example~\ref{ex-cycles}). 
the same state (see~Example~\ref{ex-LW}).
In~order to circumvent this problem, we have to show that all cycles need not be considered, and that one 
can compute the "useful" cycles efficiently. 
For this, we 
introduce \emph{universal} cycles, which are cycles
that can be iterated from any initial energy level (above~$L$).

\begin{definition}
Let $G$ be a one-player arena, and $q$~be a state of~$G$.  Let~$W$ be a weak-upper bound and~$L\leq
W$ be a lower bound. A~\emph{universal cycle}
on~$q$ in~$G$ is a cycle~$\phi$ with $\first(\phi)=\last(\phi)=q$ such
that $(q,L)\LWarrow{\phi} (q,v_{\phi,L})$ for some~$v_{\phi,L}$ (i.e.,
the~energy level never drops below the lower bound~$L$ when
following~$\phi$ with initial energy level~$L$).
A~universal cycle is \emph{positive} if ${v_{\phi,L}>L}$.
\end{definition}

When a cycle $\phi$ is iterated $W-L$ times in a row, then some universal
cycle $\sigma$ is also iterated $W-L-1$ times (by~considering the state with
minimal energy level along~$\phi$).  As~a consequence, iterating only
universal cycles is enough: we~may now only look for runs of the form
$\beta_1\cdot \sigma_1^{n_1}\cdot \beta_2\cdot\sigma_2^{n_2}\cdots
\sigma_k^{n_k}\cdot\beta_{k+1}$ where ~$\sigma_j$'s are \emph{universal} cycles
of length at most~$\size Q$. Now, assume that some state~$q$ admits
two universal cycles~$\sigma$ and~$\sigma'$, and that both cycles
appear along a given run~$\pi$. Write~$h$ (resp.~$h'$) for the energy
levels reached after iterating~$\sigma$ (resp.~$\sigma'$) $W-L$
times. We~define a preorder on universal cycles of~$q$ by letting
$\sigma \better \sigma'$ when $h\geq h'$.  Then if~$\sigma \better
\sigma'$, each occurrence of~$\sigma'$ along~$\pi$ can be replaced
with~$\sigma$, yielding a run~$\pi'$ that still satisfies the \LWenergy
condition (and~has the same first and last states).
Generalizing this argument, each state that admits universal cycles has an optimal universal cycle
of length at most~$\size Q$, and it is enough to iterate only this
universal cycle to find a path witnessing reachability. This provides
us with a \emph{small witness}, of the form $\gamma_1\cdot
\tau_1^{W-L}\cdot \gamma_2\cdot\tau_2^{W-L}\cdots
\tau_k^{W-L}\cdot\gamma_{k+1}$ where $\tau_j$ are optimal universal
cycles of length at most~$\size Q$ and $\gamma_j$ are acyclic
paths. Since it~suffices to consider at most one universal cycle per
state, we~have $k\leq\size Q$.
From this, we~immediately derive an \NP algorithm for solving
\LWenergy reachability for one-player arenas: it~suffices to
non-deterministically select each portion of the path, and compute
that each portion is \LWfeasible (notice that there is no need for
checking universality nor optimality of cycles; those properties were
only used to prove that small witnesses exist). Checking
\LWfeasibility requires computing the final energy level reached after
iterating a cycle $W-L$ times; this can be performed by detecting the
highest energy level along that cycle, and computing how much the
energy level decreases from that point on until the end of the
cycle. This~provides us with a way of \emph{accelerating} the computation 
of the effect of iterating cycles.
%. \lh{So one needs not repeat a cycle to know the final energy level reached after $W-L$ iterations.}
%% I think this helps answering the remark on memory   

\medskip

We~now prove that optimal universal cycles of length at most~$\size Q$
can be computed for a given state~$q_0$. For~this we unwind the graph
from~$q_0$ as a DAG of depth~$\size Q$, so that it includes all cycles
of length at most~$\size Q$. We~name the states of this DAG $[q',d]$,
where $q'$ is the name of a state of
the arena and $d$ is the depth of this state in the DAG
(using square brackets to avoid confusion with configurations~$(q,l)$
where $l$ is the energy level); hence there are
transitions $([q',d],w,[q'',d+1])$ in the DAG as soon as there
is a transition $(q',w,q'')$ in the arena.

We~then explore this DAG from its initial state~$[q_0,0]$,
looking for (paths corresponding~to) universal cycles. Our~aim is to
keep track of all runs from~$[q_0,0]$ to~$[q',d]$ that are prefixes of
universal cycles starting from~$q_0$. Actually, we~do not need to keep track of
those runs explicitly, and it suffices for each such run to remember
the following two values:
\begin{itemize}
\item the maximal energy level~$M$ that has been observed along the
  run so~far (starting from energy level~$L$, with weak upper bound~$W$);
\item the difference~$m$ between the maximal energy level~$M$ and the
  final energy level in~$[q',d]$. Notice that~$m\geq 0$, and that the
  final energy level in~$[q',d]$ is~$M-m$. 
\end{itemize}

\begin{example}
Figure~\ref{fig-ex_cycles} shows two universal cycles from~$q_0$ in
an \LWenergy game with $L=0$ and $W=5$. The~first cycle, going
via~$q_2$, ends with $M_1=5$ (reached in~$q_2$) and $m_1=4$, thus with
a final energy level of~$1$ (when starting from energy level~$0$);
actually, iterating this cycle will not improve this final energy
level. The~second cycle, via~$q_4$ and~$q_5$, has a maximal energy
level $M_2=4$ (reached in~$q_3$) and ends with $m_2=1$. Hence, after
one iteration of this cycle, one can end in state $q_0$ with energy
level $\wub-m_2=4$.
\begin{figure}[htbp]
\centering
%\scalebox{0.8}{

  \begin{tikzpicture}
  \begin{scope}
    \draw (0,0) node[rond,rouge] (q0) {} node {$q_0$};
    \draw (2,0) node[rond,jaune] (q1) {} node {$q_1$};
    \draw (2,1.5) node[rond,jaune] (q2) {} node {$q_2$};
    \draw (0,1.5) node[rond,jaune] (q3) {} node {$q_3$};
    \draw (4,0) node[rond,jaune] (q4) {} node {$q_4$};
     \draw (2,3) node[rond,jaune] (q5) {} node {$q_5$};
    \draw (q0) edge[-latex'] node[below] {$+2$} (q1)
    (q1) edge[-latex'] node[left] {$+3$} (q2)
    (q2) edge[-latex'] node[above] {$-3$} (q3)
    (q3) edge[-latex'] node[left] {$-1$} (q0)
    (q1) edge[-latex'] node[above] {$-1$} (q4)
    (q4) edge[-latex'] node[right] {$+2$} (q5)
    (q5) edge[-latex', bend right] node[below] {$+1$} (q3);  
    
    \end{scope}
    \begin{scope}[xshift=6cm,yscale=0.7]
\draw[->] (-0.2,0) -- (5.5,0);% node[right] {$\sharp steps$};
\draw[->] (0,-0.2) -- (0,4.5);% node[above] {$M$};

\foreach  \x in {1,2,3,4,5}{
\draw[] (\x,-0.1) node[below] {$\x$} -- (\x,0.1){};
}
\foreach  \y/\t in {0.8/1,1.6/2,2.4/3,3.2/4,4/5}{
\draw[] (-0.1,\y) node[left] {$\t$} -- (0.1,\y){};
}

%%\foreach  \y in {1,2,3,4,5}{
%%}

\draw[dashed] (0,4) -- (5.5,4){};
\node (lh) at (-1,4) {$\wub=$};

\tikzstyle{noeud}=[fill=white,draw,circle,inner sep=0pt,minimum width=4mm]

\node[] (l0) at (0,-0.5) {$0$};
\node[noeud,rouge] (p0) at (0,0) {$q_0$};

%\node[] (l1) at (1,-0.5) {$1$};
\node[noeud, jaune] (p1) at (1,1.6) {$q_1$};
\draw[] (p0)--(p1){};

%\node[] (l2) at (2,-0.5) {$2$};
\node[noeud, jaune] (p2) at (2,4) {$q_2$};
\draw (p1)--(p2){};

%\node[] (l3) at (3,-0.5) {$3$};
\node[noeud, jaune] (p3) at (3,1.6) {$q_3$};
\draw (p2)--(p3){};

%\node[] (l4) at (4,-0.5) {$4$};
\node[noeud,rouge] (p4) at (4,0.8) {$q_0$};
\draw (p3)--(p4){};

\node[noeud, jaune] (p5) at (2,0.8) {$q_4$};
\draw [color=red]  (p1)--(p5){};

\node[noeud, jaune] (p6) at (3,2.4) {$q_5$};
\draw [color=red] (p5)--(p6){};

\node[noeud, jaune] (p7) at (4,3.2) {$q_3$};
\draw [color=red] (p6)--(p7){};

%\node[] (l5) at (5,-0.5) {$5$};
\node[noeud,rouge] (p8) at (5,2.4) {$q_0$};
\draw [color=red] (p7)--(p8){};

\node[] (labs) at (6,-0.5) {$depth$};

\node[] (lord) at (-0.8,4.5) {$energy$};

\end{scope}

 \end{tikzpicture}
%      }
%      \vspace{-2\medskipamount}
  \caption{Two cycles with upper bound $\wub=5$}
  \label{fig-ex_cycles}

\end{figure}
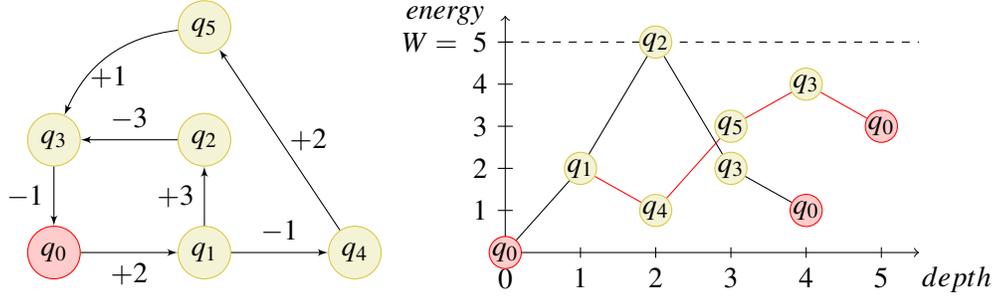

\end{example}

If~we know the values~$(M,m)$ of some path from~$[q_0,0])$
to~$[q',d]$, we~can decide if a given transition with weight~$w$
from~$[q',d]$ to~$[q'',d+1]$ can be taken (the resulting path can
still be a prefix of a universal cycle if $M-m+w \geq L$ ), and how
the values of~$M$ and~$m$ have to be updated: if $w>m$, the~run will
reach a new maximal energy level, and the new pair of values is
$(\min(W;M-m+w),0)$; if $m+L-M\leq w\leq m$, then the transition can
be taken: the new energy level~$M-m+w$ will remain between~$L$
and~$M$, and we update the pair of values to~$(M,m-w)$; finally, if
$w<m+L-M$, the energy level would go below~$L$, and the resulting run
would not be a prefix of a universal cycle.

Following these ideas, we inductively attach labels to the states of the DAG:
initially, $[q_0,0]$ is labelled with~$(M=L,m=0)$; then if a
state~$[q',d]$ is labelled with~$(M,m)$, and if there is a transition
from~$[q',d]$ to~$[q'',d+1]$ with weight~$w$:
\begin{itemize}
\item if $w>m$, then we~label $[q'',d+1]$ with the pair $(\min(W;M-m+w),0)$;
\item if $m+L-M\leq w\leq m$, we label $[q'',d+1]$ with $(M,m-w)$.
\end{itemize}

The following lemma makes a link between runs in the DAG and labels
computed by our algorithm:
\begin{restatable}{lemma}{lemmasixteen}
\label{lemma-DAGlabel}
  Let~$[q,d]$ be a state of the DAG, and $M$ and~$m$ be two integers
  such that $0\leq m\leq M-L$.  Upon termination of this algorithm, 
  state~$[q,d]$ of the DAG is labelled with~$(M,m)$ if, and only~if,
  there is an \LWrun of length~$d$ from~$(q_0,L)$ to~$(q,M-m)$ along
  which the energy level always remains in the interval~$[L,M]$ and
  equals~$M$ at some point.
\end{restatable}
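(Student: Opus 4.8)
The plan is to prove the statement as an invariant of the labelling algorithm, by induction on the depth~$d$. Since a DAG node may carry several labels (one per ``shape'' of run reaching it), I~would first reformulate the claim as a set equality: for every node~$[q,d]$, the set of labels attached to~$[q,d]$ upon termination equals the set of pairs~$(M,M-e)$ for which there is an \LWrun of length~$d$ from~$(q_0,L)$ to~$(q,e)$ that stays in~$[L,M]$ and attains~$M$. Phrased this way, the two implications of the lemma become the two inclusions, and both can be handled simultaneously by the same induction. I~would also record at the outset the bound $0\le M-e\le M-L$ (the upper part because $e\le M$, the lower because a valid run keeps $e\ge L$), which is exactly the constraint $0\le m\le M-L$ in the statement.

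For the base case $d=0$ the only node is~$[q_0,0]$, its only label is~$(L,0)$, and the only run of length~$0$ is the empty run from~$(q_0,L)$ to~$(q_0,L)$; here the maximum~$L$ is trivially attained and $M-e=0$, so the correspondence holds. For the inductive step I~would fix a node~$[q'',d+1]$ and treat its incoming transitions $(q',w,q'')$ one at a time, using the two matching decompositions: every length-$(d+1)$ run splits as a length-$d$ prefix ending in some $(q',e)$ followed by one such transition, and symmetrically every label on~$[q'',d+1]$ is produced by the algorithm from a label on some predecessor~$[q',d]$. In both directions the induction hypothesis applies to the prefix, since a prefix of a valid \LWrun is a valid \LWrun attaining its own maximum.

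The crux is to show that the two update rules match exactly the two ways a transition can extend a run, and that the bookkeeping of the running maximum is forced. Writing $(M_p,m_p)$ for the predecessor data (so $e=M_p-m_p$ and $M_p$ is the prefix maximum), the new energy is $f=\min(W,e+w)$ and the new maximum is $\max(M_p,f)$. The key observation I~would establish is the dichotomy on the sign of $w-m_p$. When $w>m_p$ one has $e+w>M_p$, hence $f\ge M_p$, so the maximum is reset to~$f$ and the final level equals this new maximum; this is precisely the rule producing $(\min(W,e+w),0)$. When $w\le m_p$ no clamping occurs (as $e+w\le M_p\le W$), the maximum is unchanged, and the level is $M_p-(m_p-w)$; this is precisely the rule producing $(M_p,m_p-w)$, the validity condition $f\ge L$ being exactly $m_p+L-M_p\le w$. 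Checking that these two cases cover all \emph{valid} transitions, and leave out exactly those with $w<m_p+L-M_p$ (which push the energy below~$L$), closes the induction in both directions.

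The main obstacle I~anticipate is precisely the interaction of the weak upper bound with the tracked maximum~$M$: the clamp $f=\min(W,e+w)$ makes ``reaching a new maximum'' and ``exceeding~$W$'' interfere. I~would therefore be careful to verify, in the $w>m_p$ branch, that $f\ge M_p$ always holds (using $M_p\le W$ together with $e+w>M_p$), so that $\max(M_p,f)=f$ and the $m$-component is genuinely reset to~$0$ whether or not clamping took place; the boundary case $W=M_p$ must be checked here as it is where the two branches come closest. Once this point is isolated, the remaining verifications are the routine arithmetic identities above.
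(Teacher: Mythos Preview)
Your proposal is correct and follows the natural approach: induction on the depth~$d$, with the case split governed by whether the last transition raises the running maximum ($w>m_p$) or not ($w\le m_p$), and the feasibility side-condition $w\ge m_p+L-M_p$ matching exactly the lower-bound constraint on the new energy level. This is essentially the same argument the paper has in mind; the lemma is an invariant of the labelling procedure, and there is no materially different route to it. Your explicit treatment of the boundary case $M_p=W$ (where clamping occurs yet the maximum does not strictly increase) is the one non-routine verification, and you have identified it correctly.
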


%
%\begin{lemma}\label{lemma-univcycle}
%  Let~$[q_0,d]$ be a state of the DAG, with~$d>0$. Let $m$ be a
%  nonnegative integer such that $L<W-m$.  Upon termination of this algorithm,
%  state~$[q_0,d]$ is labelled with~$(M,m)$ for some~$M$
%  if, and only if, there is a universal cycle~$\phi$ on~$q_0$
%  of length~$d$ such that $(q_0,L) \LWarrow{\phi^{W-L}} (q_0,W-m)$.
%\end{lemma}

\begin{restatable}{lemma}{lemmaseventeen}
%\begin{lemma}
\label{lemma-univcycle}
  Let~$[q_0,d]$ be a state of the DAG, with~$d>0$. Let $m$ be a
  nonnegative integer such that $L+m<W$.  Upon termination of this algorithm,
  state~$[q_0,d]$ is labelled with~$(M,m)$ for some $M>L+m$
  if, and only if, there is a universal cycle~$\phi$ on~$q_0$
  of length~$d$ such that $(q_0,L) \LWarrow{\phi^{W-L}} (q_0,W-m)$.
%\end{lemma}
\end{restatable}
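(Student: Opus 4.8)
The plan is to read the whole content of the lemma off Lemma~\ref{lemma-DAGlabel}, which I use as the dictionary between labels and runs, and then to analyse how a single universal cycle behaves when iterated $W-L$ times. Both implications hinge on the same observation: by Lemma~\ref{lemma-DAGlabel}, state~$[q_0,d]$ carries a label~$(M,m)$ exactly when there is an \LWrun of length~$d$ from~$(q_0,L)$ to~$(q_0,M-m)$ that stays in~$[L,M]$ and attains~$M$. Since $d>0$ and this run begins and ends in~$q_0$, its underlying path is a cycle~$\phi$ on~$q_0$ of length~$d$; it is feasible from~$L$, hence universal, and it is \emph{positive} precisely when $M-m>L$, i.e. $M>L+m$. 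Thus labels $(M,m)$ with $M>L+m$ correspond to positive universal cycles whose $L$-run has maximum~$M$ and final level~$M-m$, and the only thing left to establish is that such a cycle satisfies $(q_0,L)\LWarrow{\phi^{W-L}}(q_0,W-m)$.

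For the left-to-right implication I first compute the effect of one traversal of~$\phi$ started at the cap~$W$. I would decompose $\phi=\phi_1\cdot\phi_2$ at the last position where the $L$-run reaches its maximum~$M$, so that along~$\phi_1$ the maximum is the final level and along~$\phi_2$ it is the initial level. Applying the first part of Lemma~\ref{lemma-W} to~$\phi_1$ gives $(q_0,W)\LWarrow{\phi_1}(s,W)$, and the second part applied to~$\phi_2$ gives $(s,W)\LWarrow{\phi_2}(q_0,W-m)$; together they yield $(q_0,W)\LWarrow{\phi}(q_0,W-m)$. The same decomposition shows more generally that for every level~$e$ with $W-M+L\le e\le W$ the run of~$\phi$ ends at~$W-m$: such an~$e$ makes the uncapped target of~$\phi_1$ reach at least~$W$ (because $M-L>m\ge 0$), so by Lemma~\ref{lemma-hitW} the run saturates, and since the maximum of~$\phi_1$ sits at its end, Lemma~\ref{lemma-W} applied to the suffix after the first saturation forces the run to leave~$\phi_1$ exactly at~$W$; then~$\phi_2$ drops it to~$W-m$.

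It then remains to identify the level reached after exactly $W-L$ iterations. Because~$\phi$ is positive ($L\le M-m$), Lemma~\ref{lemma-iteratecycles} guarantees that $(q_0,L)\LWarrow{\phi^{W-L}}(q_0,v')$ for some fixed point $v'\ge M-m$. Monotonicity (Lemma~\ref{lemma-higherrun}) applied to $(q_0,W)\LWarrow{\phi}(q_0,W-m)$ shows that every level $e\le W$ is mapped by one traversal to at most~$W-m$, so $v'\le W-m$. For the matching lower bound I track the orbit $L\le f(L)\le f^2(L)\le\cdots$ of the one-step map~$f$: as long as a level stays below the saturation threshold~$W-M+L$, one traversal shifts the $L$-run rigidly upwards and adds the constant positive amount $M-m-L\ge 1$, so the orbit strictly increases until it enters the regime $[W-M+L,W]$, on which the previous paragraph shows $f\equiv W-m$. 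As $W-m$ itself lies in that regime, it is a fixed point and the orbit lands on it; hence $v'=W-m$, which is above~$L$ thanks to the hypothesis $L+m<W$.

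The right-to-left implication is then almost immediate. Given a universal cycle~$\phi$ on~$q_0$ of length~$d$ with $(q_0,L)\LWarrow{\phi^{W-L}}(q_0,W-m)$ and $L+m<W$, the cycle must be positive: a non-positive universal cycle satisfies $(q_0,L)\LWarrow{\phi}(q_0,L)$ and would keep the level at~$L\neq W-m$ under iteration. Writing~$M$ for the maximum and~$M-m'$ for the final level of its $L$-run, the left-to-right analysis gives $(q_0,L)\LWarrow{\phi^{W-L}}(q_0,W-m')$; since runs are determined by their path, $W-m'=W-m$, so $m'=m$ and $M>L+m$. Lemma~\ref{lemma-DAGlabel}, applied to the $L$-run of~$\phi$, then certifies that $[q_0,d]$ is labelled~$(M,m)$. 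The delicate point throughout is the third paragraph: showing that iterating from~$L$ saturates at \emph{exactly}~$W-m$ --- neither stalling at a smaller fixed point nor overshooting --- which is where the interplay between Lemmas~\ref{lemma-higherrun},~\ref{lemma-hitW} and~\ref{lemma-W} (and in particular the precise amount~$m$ by which the run falls back from~$W$) really has to be controlled.
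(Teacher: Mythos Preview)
Your proof is correct. Both directions rest on Lemma~\ref{lemma-DAGlabel} together with the claim that a positive universal cycle whose $L$-run has maximum~$M$ and final level~$M-m$ stabilises at~$W-m$ after $W-L$ iterations; the converse direction is then handled identically in both proofs, by re-applying the forward analysis to the label~$(M,m')$ produced by the given cycle. Where you diverge from the paper is in establishing that claim. The paper invokes Lemma~\ref{lemma-iteratecycles} to obtain a fixed point~$E$, then argues by contradiction (via Lemmas~\ref{lemma-higherrun}, \ref{lemma-hitW} and~\ref{lemma-W}) that the fixed-point run must reach~$W$ precisely at the \emph{first} position where the $L$-run attains~$M$, and reads off $E=W-m$ from the second half of Lemma~\ref{lemma-W}. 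You instead split~$\phi$ at the \emph{last} maximum, compute $(q_0,W)\LWarrow{\phi}(q_0,W-m)$ directly from the two halves of Lemma~\ref{lemma-W}, extend this to every starting level in~$[W-M+L,W]$, and then track the orbit of the one-step map (rigid shift by~$M-m-L$ below that threshold, constant value~$W-m$ above). Your route is a little longer but more explicit, and it makes the iteration count (at most $W-M+1\le W-L$) visible rather than hidden inside Lemma~\ref{lemma-iteratecycles}. Two small remarks: the parenthetical ``(because $M-L>m\ge 0$)'' is misplaced---saturation along~$\phi_1$ only needs $e\ge W-M+L$; positivity $M>L+m$ is used elsewhere, to ensure $W-m>L$ and that the sub-threshold increment is at least~$1$. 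And once your orbit computation yields $e_{W-L}=W-m$ exactly, the separate upper bound via monotonicity and the appeal to Lemma~\ref{lemma-iteratecycles} for a fixed point are redundant, though harmless.
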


\begin{proof}
First assume that~$[q_0,d]$ is labelled with~$(M,m)$ for some~$M$ such
that $M-m>L$. From Lemma~\ref{lemma-DAGlabel}, there is a cycle~$\phi$
on~$q_0$ of length~$d$ generating a run $(q_0,L)\LWarrow{\phi}(q_0,M-m)$
along which the energy level is within~$[L,M]$.  Then~$M-m\geq L$, so
that Lemma~\ref{lemma-iteratecycles} applies: we~then get $(q_0,L)
\LWarrow{\phi^{W-L}} (q_0,E)$ with $(q_0,E)\LWarrow\phi(q_0,E)$ and $E\geq L$.
Write $(p_i)_{0\leq i<\size\phi}$ for the sequence of weights along~$\phi$.
Also write $\rho$ for the run $(q_0,L)\LWarrow{\phi}(q_0,M-m)$, and
$\sigma$ for the run $(q_0,E)\LWarrow{\phi}(q_0,E)$.

%Assume $L=M-m$: iterating~$\phi$ would not modify the energy level
%reached at the end of those cycles, so that we would have
%$L=W-m$. Since this is not the case, it~must be 
%\lh{
As $L<M-m$, then by
%}
Lemma~\ref{lemma-hitW}, it~must be the case that energy level~$W$ is
reached along~$\sigma$.  Write~$i_0$ for the first
position along~$\rho$ for which the energy level
%along~$\rho$
is~$M$.
%\lh{
%That is, the subpath from index~$0$ to index~$i_0$ has growing energy
%level, and the subpath from~$j_0$ to~$|\rho|$ has a decreasing energy
%level.
%} 
Assume $\tilde\sigma_{i_0}\not=W$: by~Lemma~\ref{lemma-higherrun},
we~must have $M=\tilde\rho_{i_0}\leq \tilde\sigma_{i_0}<W$.  Then for
all~$k\geq i_0$, $\sum_{l=i_0}^k p_l\leq 0$.
%, and $\sum_{l=i_0}^{j_0} p_l= 0$.
Since $\tilde\sigma_{i_0}<W$, then also $\tilde\sigma_{k}<W$
for all~$k\geq i_0$.
%\lh{
According to Lemma~\ref{lemma-hitW}, energy level~$\wub$ is reached in~$\sigma$, so 
%HERE 
there exists some~$k_0<i_0$ such that $\tilde\sigma_{k_0}=W$.
%$\tilde\rho_{k_0}<M$, 
%\lh{
However, as $i_0$ is the index of the first maximal value in~$\rho$, we have $\tilde\rho_{k_0}<M$, and the energy level increases in run $\rho$ between $k_0$ and $i_0$. So according to Lemma~\ref{lemma-W}, we should have $\tilde\sigma_{i_0}=W$, which raises a contradiction.
%}
Hence we proved $\tilde\sigma_{i_0}=W$; applying
the second result of Lemma~\ref{lemma-W}, we~get ${E=W-m}$.

\medskip
Conversely, if there is a universal
%\lh{positive}
cycle~$\phi$ satisfying the
conditions of the lemma,
then it must have positive effect when run from energy level~$L$.
Let~$F$ be such that $(q_0,L)\LWarrow{\phi}
(q_0,F)$, and~$M$ be the maximal energy level encountered along
the run $(q_0,L)\LWarrow{\phi} (q_0,F)$. By
Lemma~\ref{lemma-DAGlabel}, state~$[q_0,d]$ is labelled with~$(M,m')$
for some~$m'\geq 0$ such that $F=M-m'$.
By~Lemma~\ref{lemma-iteratecycles}, we~must have $(q_0,L)
\LWarrow{\phi^{W-L}} (q_0,W-m')$.
%, so that $m'=m$.
\end{proof}

The algorithm above computes optimal universal cycles, but it still
runs in exponential time (in the worst case) since it may generate
exponentially many different labels in each state~$[q,d]$ (one per
path from~$[q_0,0]$ to~$[q,d]$). We~now explain how to only generate
polynomially-many pairs~$(M,m)$. This is based on the following
partial order on labels: we~let $(M,m) \preceq (M',m')$ whenever
$M-m\leq M'-m'$ and $m'\leq m$. Notice in particular that
\begin{itemize}
\item if~$M=M'$, then $(M,m)\preceq (M',m')$ if, and only~if, $m'\leq m$;
\item if $m=m'$, then $(M,m)\preceq (M',m')$ if, and only~if, $M\leq M'$.
\end{itemize}
The following lemma entails that it suffices to store maximal labels w.r.t. $\preceq$:

\begin{restatable}{lemma}{lemmaeighteen}
%\begin{lemma}
\label{lemma-order-cycle}
Consider two paths~$\pi$ and~$\pi'$ such that
$\first(\pi)=\first(\pi')$ and $\last(\pi)=\last(\pi')$, and with
respective values~$(M,m)$ and~$(M',m')$ such that $(M,m)\preceq (M',m')$.
If~$\pi$ is a prefix of a universal cycle~$\phi$, then $\pi'$~is a
prefix of a universal cycle~$\phi'$ with $\phi'\better\phi$.
%\end{lemma}
\end{restatable}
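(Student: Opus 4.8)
The plan is to build the witnessing cycle~$\phi'$ explicitly and then compare the two fixpoints reached by iterating $\phi$ and $\phi'$ exactly $W-L$ times. Write $q_0=\first(\pi)=\first(\pi')$ and $s=\last(\pi)=\last(\pi')$, and decompose the universal cycle as $\phi=\pi\cdot\psi$, where $\psi$ runs from~$s$ back to~$q_0$. The natural candidate is $\phi'=\pi'\cdot\psi$, which is again a cycle on~$q_0$. First I would check that $\phi'$ is universal: running $\pi'$ from $(q_0,L)$ reaches $(s,M'-m')$, and since $(M,m)\preceq(M',m')$ gives $M'-m'\geq M-m$, the suffix~$\psi$ --- which is feasible from $(s,M-m)$ because $\phi$ is universal --- is also feasible from the higher level $(s,M'-m')$ by Lemma~\ref{lemma-higherrun}, never dropping below~$L$. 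Hence $\phi'$ is a universal cycle on~$q_0$.

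The core of the argument is to show $\phi'\better\phi$, i.e.\ that iterating~$\phi'$ is at least as profitable as iterating~$\phi$. By Lemma~\ref{lemma-univcycle}, if a positive universal cycle on~$q_0$ has full-cycle value $(\bar M,\bar m)$, then iterating it $W-L$ times from~$L$ reaches energy $W-\bar m$; so it suffices to compare the $m$-components of the complete cycles $\phi$ and~$\phi'$ and prove $\bar m_{\phi'}\leq\bar m_{\phi}$. I would reformulate the label-update rules in the two quantities that actually matter, namely the final energy $f=M-m$ and the parameter~$m$: appending an edge of weight~$w$ sends $f\mapsto\min(W,f+w)$ and $m\mapsto\max(0,m-w)$, as one checks by matching the two cases $w>m$ and $m+L-M\leq w\leq m$ of the algorithm (the $\min$ records the cap at~$W$, the $\max$ records that $m$ is reset to~$0$ whenever a new maximum is reached). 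Both maps are monotone nondecreasing, and since $\phi$ and~$\phi'$ share the suffix~$\psi$, the same weight~$w$ is applied at each step to both runs. Thus the invariant ``$f\leq f'$ and $m'\leq m$'', which is exactly $(M,m)\preceq(M',m')$, is preserved edge by edge along~$\psi$ (validity carrying over as in the universality check). At the end of~$\psi$ we obtain the complete-cycle values with $\bar m_{\phi'}\leq\bar m_{\phi}$, so $W-\bar m_{\phi'}\geq W-\bar m_{\phi}$, i.e.\ $\phi'\better\phi$.

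Finally I would dispose of the positivity hypothesis needed by Lemma~\ref{lemma-univcycle}. If~$\phi$ is positive, then its final level $\bar M-\bar m>L$ forces $\bar M>L+\bar m$ and $L+\bar m<W$, and the same bounds hold for~$\phi'$ since its final level is at least as high; the formula $h=W-\bar m$ applies to both, concluding as above. If~$\phi$ is not positive, its final level equals~$L$, so its fixpoint is $h_\phi=L$, while $h_{\phi'}\geq L$ holds for any universal cycle, giving $\phi'\better\phi$ trivially. The main obstacle I anticipate is the middle paragraph: correctly collapsing the three-case label update into the two monotone maps on $(f,m)$, and verifying that feasibility of~$\psi$ transfers to the higher run at every step so that the edge-by-edge preservation of~$\preceq$ is legitimate; the rest is bookkeeping.
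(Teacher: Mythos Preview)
Your proof is correct and follows what is essentially the only natural line: take $\phi'=\pi'\cdot\psi$ with $\psi$ the suffix of~$\phi$ after~$\pi$, obtain universality of~$\phi'$ from $M'-m'\geq M-m$ together with Lemma~\ref{lemma-higherrun}, and then propagate~$\preceq$ edge by edge along the common suffix~$\psi$ using that the label update factors as the two monotone maps $f\mapsto\min(W,f+w)$ and $m\mapsto\max(0,m-w)$; this is the approach the paper takes as well. The only point worth flagging is that your appeal to Lemma~\ref{lemma-univcycle} relies on the stronger reading established in its \emph{proof} (the fixpoint of the \emph{specific} cycle carrying label $(\bar M,\bar m)$ is $W-\bar m$) rather than on its existential statement as written; this reading is legitimate, and with it your handling of the positive and non-positive cases is complete.
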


%In our algorithm above, it~suffices to keep track of the maximal
%labels for $\preceq$, since our aim is to compute optimal
%universal cycles.
It~remains to prove that by keeping only maximal labels, we only store
a polynomial number of labels:
%% . This can be proven by bounding the
%% number of different values for~$M$ at each level: when taking a
%% transition from level~$d$ to~$d+1$, either $M$ is unchanged, or~$m=0$.
%% If~several transitions give~$m=0$, then it suffices to keep the
%% largest~$M$, so that at most one new such value is created in each state.

\begin{restatable}{lemma}{lemmanineteen}
%\begin{lemma}
\label{lemma-DAG-construction-poly}
%If in our algorithm we only store maximal labels (for~$\preceq$), then
%the algorithm runs in polynomial time.
If the algorithm labelling the DAG only keeps maximal labels (for~$\preceq$), then
it runs in polynomial time.
%\end{lemma}
\end{restatable}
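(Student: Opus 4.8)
The plan is to show that, upon termination, every node of the DAG carries at most $O(\size Q^2)$ maximal labels (for~$\preceq$); since the DAG has $O(\size Q^2)$ nodes and each application of the update rule is elementary integer arithmetic, this immediately yields a polynomial-time bound. Throughout I would use Lemma~\ref{lemma-DAGlabel} to freely identify a label $(M,m)$ at a node $[q,d]$ with an \LWrun of length~$d$ from $(q_0,L)$ to $(q,M-m)$ whose energy stays in $[L,M]$ and reaches~$M$. Two easy structural facts come first. Since two labels with the same first component satisfy $(M,m)\preceq(M,m')\iff m'\le m$, any two labels sharing their value of~$M$ are comparable; hence \emph{the maximal labels at a node have pairwise distinct values of~$M$}. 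Moreover a label with $m=0$ can be dominated only by another label with $m=0$ (the requirement $m'\le m=0$ forces $m'=0$), and such labels are totally ordered by~$M$; thus \emph{each node carries exactly one maximal label with $m=0$}, namely the one with the largest~$M$, and it does lie on the frontier.

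The core step is to attach to each maximal label a DAG node that witnesses its maximal energy, and to prove that this attachment is injective. Fix a maximal label $(M,m)$ at $[q,d]$, realised by a path $\pi$, let $j$ be the \emph{last} position along $\pi$ where the energy equals its maximum~$M$, let $[p,j]$ be the corresponding DAG node, and split $\pi=\alpha\cdot\beta$ at~$j$. Then $\alpha$ ends at $[p,j]$ exactly at its own maximum, so it carries the label $(M,0)$ there, and $(p,M)\LWarrow{\beta}(q,M-m)$ with $\beta$ staying at energy $\le M$. Let $(M^\star,0)$ be the unique maximal $m{=}0$ label at $[p,j]$, so $M^\star\ge M$; I claim $M^\star=M$. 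Otherwise, pick a path $\alpha^\star$ realising $(M^\star,0)$ and run $\beta$ from the higher level $M^\star$: by Lemma~\ref{lemma-higherrun} this run is feasible and stays pointwise above the $M$-run, while applying Lemma~\ref{lemma-hitW} to every prefix of $\beta$ (with initial gap $M^\star-M$) shows it never exceeds $M^\star$, so $M^\star$ is its maximum. Both runs of~$\beta$ then start at their own maximum, so the second part of Lemma~\ref{lemma-W} gives $(p,W)\LWarrow{\beta}(q,W-m)$ and $(p,W)\LWarrow{\beta}(q,W-m^\star)$, where $m^\star$ is the drop of the $M^\star$-run; uniqueness of the run from $(p,W)$ forces $m^\star=m$. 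Hence $\alpha^\star\cdot\beta$ realises the label $(M^\star,m)$ at $[q,d]$, with the same~$m$ but $M^\star>M$, which strictly dominates $(M,m)$ and contradicts maximality. So $M$ is determined by $[p,j]$, and as distinct maximal labels have distinct~$M$, the map $(M,m)\mapsto[p,j]$ is injective; since $j\le d\le\size Q$ and $p\in Q$, every node has at most $\size Q\cdot(\size Q+1)=O(\size Q^2)$ maximal labels.

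It then remains to assemble the complexity bound. Computing the labels of a node $[q,d+1]$ consists in applying the update rule, for each incoming edge of weight~$w$, to every maximal label of the corresponding predecessor $[q',d]$, taking the union, and discarding non-maximal pairs. This manipulates $O(\size Q^2)$ pairs per incoming edge, hence $O(\size E\cdot\size Q^2)$ candidates at the node, which are pruned to the $O(\size Q^2)$ maximal ones; every arithmetic operation acts on integers of bit-length polynomial in the encodings of $L$, $W$ and the weights. With $O(\size Q^2)$ nodes overall, the labelling runs in polynomial time. I expect the main obstacle to be exactly the injectivity step of the previous paragraph: a priori the antichain of maximal labels could be exponential, since the pairs $(M,m)$ range over an interval of width $W-L$ that is exponential under binary encoding, and it is precisely Lemmas~\ref{lemma-higherrun}, \ref{lemma-hitW} and~\ref{lemma-W} — which pin the drop~$m$ of~$\beta$ independently of the starting level, once that level is the maximum — that collapse this potentially huge antichain onto the polynomially many DAG nodes.
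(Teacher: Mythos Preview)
Your proof is correct but follows a genuinely different route from the paper's. The paper argues by direct induction on the depth~$d$: it shows that the number of \emph{distinct first components}~$M$ appearing among all labels at depth~$d$ is at most $d\cdot\size Q$, because each transition either preserves~$M$ (the case $m+L-M\le w\le m$) or produces a label with $m=0$ (the case $w>m$), and after pruning at most one $m{=}0$ label survives per node. Since maximal labels at a fixed node have pairwise distinct~$M$, this bounds the labels per node immediately, without appealing to Lemmas~\ref{lemma-higherrun}, \ref{lemma-hitW} or~\ref{lemma-W}.

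Your argument instead attaches to each maximal label an explicit \emph{anchor} $[p,j]$ in the DAG --- the last position where the realising run reaches its maximum --- and proves injectivity via an exchange argument: if the anchor's own top $m{=}0$ label had a strictly larger~$M^\star$, you rebuild a strictly dominating label $(M^\star,m)$ at $[q,d]$, using Lemma~\ref{lemma-W} to show that the drop~$m$ along the suffix~$\beta$ is independent of the starting level once that level is the maximum. This is heavier in machinery but gives a pleasant structural explanation of \emph{why} the a~priori exponential antichain collapses: every maximal label is pinned down by the DAG node where its maximum is realised. Incidentally, since your anchor satisfies $j\le d$, your bound is really $(d{+}1)\cdot\size Q$ per node, matching the paper's up to a constant.

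Two small wording points. First, ``each node carries exactly one maximal label with $m=0$'' is only true for nodes that have some $m{=}0$ label at all; this is harmless, since you only invoke it at anchor nodes $[p,j]$, where $(M,0)$ is present by construction. Second, your complexity paragraph tacitly assumes that propagating only maximal labels still produces all maximal labels at the next depth; this is exactly the content of Lemma~\ref{lemma-order-cycle} (a dominating prefix yields a dominating extension), so you may want to cite it explicitly.
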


\begin{proof}
We prove that,
%\lh{
when attaching to each node $[q,d]$ of the DAG only the maximal labels (w.r.t $\preceq$) reached for a path of length $d$ ending in state $q$,
%}
the number of
values for the first component of the different labels that appear at
depth~$d>0$ in the DAG is at most $d\cdot \size Q$. Since it only
stores optimal labels, our algorithm will never associate to a state $[q,d]$
two labels having the same value on their first component. So, any
state at depth~$d$ will have at most $d\cdot \size Q$ labels.

So we prove, by induction on~$d$, that the number of different values
for the first component among the labels appearing at depth~$d>0$ is
at most~$d\cdot\size Q$. 
This is true for~$d=1$ since the initial state~$(q,0)$ only
contains~$(M=0,m=0)$, and each transition with nonnegative weight~$w$
will create one new label~$(w,0)$ (transitions with negative weight
are not prefixes of universal cycles). Now, since all those labels
%have the same value on their second component,
%\lh{
have value~$0$ as their second component, each state $[q,1]$ in the
DAG will be attached at most one label. Hence, the total number of
labels (and the total number of different values for their first
component) is at most~$\size Q$ at depth~$1$ in the~DAG.
%}

Now, assume that labels appearing 
%at depth~$q$ 
%\lh{
at depth $d>1$
%}
are all drawn from a set of labels~$\{(M_i,m_i) \mid 1\leq i\leq n\}$
in which the number of different values of~$M_i$ is at most
$d\cdot\size Q$.  Consider a state~$[q',d]$, labelled with
$\{(M_i,m_i) \mid 1\leq i\leq n_{q',d}\}$ (even if it means reindexing
the labels). Pick~a transition from~$[q',d]$ to~$[q'',d+1]$, with
weight~$w$. For each pair~$(M_i,m_i)$ associated with $[q',d]$,
it~creates a new label 
%in~$(q'',d+1)$; this label is
in~$[q'',d+1]$, which is
\begin{itemize}
\item either $(\min(W;M_i-m_i+w),0)$ if~$m_i<w$ (maximal energy level increases); 
\item or $(M_i,m_i-w)$ if $m_i+L-M_i\leq w\leq m_i$ (maximal energy level in unchanged).
%(energy level stays between $L$ and $M_i$).
\end{itemize}
Now, for a state~$(q'',d+1)$, the~set of labels created by all
incoming transitions can be grouped as follows:
\begin{itemize}
\item labels having~zero as their second component; among those, our
  algorithm only stores the one with maximal first component, as
  $(M_i,0) \preceq (M_j,0)$ as soon as $M_i\leq M_j$;
\item for each $M_i$ appearing at depth~$d$, labels having~$M_i$ as
  their first component; again, we only keep the one with minimal
  second component, as $(M,m_i) \preceq (M,m_j)$ when $m_j \leq m_i$.
\end{itemize}
%\lh{
  Last, for this state~$[q'',d+1]$, we keep at most one label
  for each distinct value among the first components $M_i$ of labels
  appearing at depth~$d$, and possibly one extra label with second
  value~$0$. In~other terms, at depth $d+1$ the values that appear as
  first component of labels are obtained from values at depth~$d$,
  plus possibly one per state; Hence, at depth $d+1$, there exists at
  most $(d+1)\cdot\size Q$ labels, which completes the proof of the
  induction step.
\end{proof}

\bigskip

Using the algorithm above, we can compute, for each state~$q$ of the
original arena, the~smallest value~$m_q$ for which there exists a
universal cycle on~$q$ that, when iterated sufficiently many times,
leads to configuration~$(q,W-m_q)$. Since universal cycles can be
iterated from any energy level, if $q$ is reachable, then it is
reachable with energy level~$W-m_q$. We~make this explicit by adding
to our arena a special self-loop on~$q$, labelled with~$\set(W-m_q)$,
which sets the energy level to~$W-m_q$ (in~the same way as
\emph{recharge transitions} of~\cite{EjsingDuun2013InfiniteRI}).

In~the resulting arena, %we know that 
we can restrict to paths of the
form $\gamma_1\cdot \nu_1\cdot \gamma_2\cdot\nu_2\cdots
\nu_k\cdot\gamma_{k+1}$, where $\nu_i$ are newly added transitions
labelled with~$\set(W-m)$, and $\gamma_i$ are acyclic paths. Such
paths have length at most~$(\size Q+1)^2$. We~can then inductively
compute the maximal energy level that can be reached (under our
\LWenergy constraint) in any state after paths of length less than or
equal to~$(\size Q+1)^2$.  This can be performed by unwinding
(as~a~DAG) the modified arena from the source state~$\qinit$ up to
depth~$(\size Q+1)^2$, and labelling the states of this DAG by the
maximal energy level with which that state can be reached
from~$(\qinit,L)$; this is achieved in a way similar to our
algorithm for computing the effect of universal cycles, but this time
only keeping the maximal energy level that can be reached (under
\LWenergy constraint). As there are at most $\size Q$ states per level in this DAG of depth at most $(\size Q+1)^2$, 
%we can establish the following theorem:
we~get:

\begin{theorem}
\label{thm_1P_LW_reachability}
The existence of a winning path in one-player \LWenergy-reachability games
can be decided in \PTIME.
\end{theorem}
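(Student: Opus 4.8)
The plan is to prove Theorem~\ref{thm_1P_LW_reachability} by assembling the structural and algorithmic results developed throughout the section into a single polynomial-time procedure. The whole argument has already been set up so that a winning path, if one exists, can be taken to have the \emph{small witness} form $\gamma_1\cdot \tau_1^{W-L}\cdot \gamma_2\cdots \tau_k^{W-L}\cdot \gamma_{k+1}$, where each $\tau_j$ is an optimal universal cycle of length at most $\size Q$ and the $\gamma_j$ are acyclic, with $k\leq\size Q$. The task is therefore only to show that such a witness can be \emph{found} in polynomial time.

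First I would invoke Lemma~\ref{lemma-DAG-construction-poly} to compute, for every state $q$, the value $m_q$, i.e. the best final energy level $W-m_q$ achievable by iterating an optimal universal cycle on $q$ sufficiently many times; this is exactly the content captured by the $\preceq$-maximal labels in the DAG, and it runs in polynomial time. I would then perform the arena modification described just before the theorem: add a self-loop $\nu_q$ on each state $q$ labelled $\set(W-m_q)$, which directly resets the energy level to the best value reachable by cycling at $q$. The correctness of this step relies on universality, namely that universal cycles can be iterated from \emph{any} energy level above $L$, so that once $q$ is reached at all, it can be reached with energy $W-m_q$.

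Second, I would argue that in the modified arena it suffices to search for paths of the form $\gamma_1\cdot\nu_1\cdots\nu_k\cdot\gamma_{k+1}$ with acyclic $\gamma_i$ and at most one reset transition per state, so that the total length is bounded by $(\size Q+1)^2$. The final reachability check is then carried out by unwinding the modified arena from $\qinit$ as a DAG of depth $(\size Q+1)^2$ and propagating, at each node, only the \emph{maximal} energy level with which that node can be reached under the \LWenergy constraint. The soundness of keeping only the maximal energy level is precisely Lemma~\ref{lemma-higherrun}: a higher incoming energy can never hurt, so the maximal-energy labelling dominates all alternatives. Since each DAG level has at most $\size Q$ states and the depth is $O(\size Q^2)$, this propagation touches polynomially many nodes, each processed in polynomial time; \Pl1 wins iff some configuration in the target set $R$ receives a valid label.

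The main obstacle is establishing that the reductions to the small-witness form and to the reset-augmented arena are genuinely \emph{sound and complete}, rather than merely plausible: one must check that replacing an arbitrary winning run by the canonical form does not decrease the energy available downstream (this is where Lemmas~\ref{lemma-removeneg}, \ref{lemma-iteratepos}, \ref{lemma-iteratecycles} and~\ref{lemma-order-cycle} are combined), and that the $\set(W-m_q)$ loops over-approximate nothing — the reset value is actually attainable along a real run, which is the converse direction of Lemma~\ref{lemma-univcycle}. Once these equivalences are in hand, the complexity bound is an immediate consequence of the polynomial size of the DAG and of the per-node update, so the genuine work lies in the structural normalization rather than in the final graph search.
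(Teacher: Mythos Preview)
Your proposal is correct and follows essentially the same approach as the paper: compute optimal universal cycles via the polynomial-time DAG labelling (Lemma~\ref{lemma-DAG-construction-poly}), augment the arena with $\set(W-m_q)$ self-loops, and then do a bounded-depth unwinding propagating only maximal energy levels (justified by Lemma~\ref{lemma-higherrun}). The structural normalization you identify as the ``genuine work'' is exactly the sequence of lemmas the paper develops before stating the theorem, and your account of which lemma supports which step is accurate.
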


\begin{example}  
Consider the one-player arena of Fig.~\ref{fig-ex}. We~assume $L=0$, and
fix an even  weak upper bound~$W$. 
The~state~$s$ has $W/2$ disjoint cycles: for~each odd integer~$i$ in
$[0;W-1]$, the~cycle~$c_i$ is made of three consecutive edges with
weights $-i$, $+W$ and $-W+i+1$. Similarly, the~state~$s'$ has $W/2$
disjoint cycles: for even integers~$i$ in~$[0;W-1]$, the cycle~$c'_i$
has weights $-i$, $+W$ and~$-W+i+1$.  Finally, there are: two sequences of $k$ edges of weight~$0$ from~$s$ to~$s'$ and
  from~$s'$ to~$s$; an edge from the initial state to~$s$ of weight~$1$, and from~$s'$ to target state  $q_t$ of weight $-W$. The total number of states then is $2W+2k+2$.

%\begin{itemize}
%\item two sequences of $k$ edges of weight~$0$ from~$s$ to~$s'$ and
%  from~$s'$ to~$s$,
%\item an edge from the initial state to~$s$ with weight~$1$;
%\item an edge from~$s'$ to the target state with weight $-W$.
%\end{itemize}
%The total number of states then is $2W+2k+2$.

In order to go from the initial state~$q_0$, with energy level~$0$, to the
final state~$q_t$, we have to first take the cycle $c_1$ (with weights~$-1$,
$+W$, $-W+2$) on~$s$ (no~other cycles~$c_i$ can be taken). We~then
reach configuration~$(s,2)$. Iterating~$c_1$ has no effect, and
the only next interesting cycle is~$c_2$, for which we~have to go
to~$s'$. After running~$c_2$, we~end up in~$(s',3)$. Again,
iterating~$c_2$ has no effect, and we go back to~$s$, take~$c_3$, and
so~on. We~have to take each cycle~$c_i$ (at~least) once, and take the
sequences of~$k$ edges between~$s$ and~$s'$ $W/2$ times each.
In~the~end, we~have a run of length $3W+Wk+2$. 

\begin{figure}[h]
  \centering
%\scalebox{0.8}{
  \begin{tikzpicture}
    \draw (0,1.5) node[rond,rouge] (ini) {} node {$q_0$};
    \draw (5,1.5) node[rond,rouge] (tgt) {} node {$q_t$};
    \draw (0,0) node[rond,vert] (a) {} node {$s$};
    \draw (5,0) node[rond,vert] (b) {} node {$s'$};
    \draw (ini) edge[-latex'] node[right] {$1$} (a);
    \draw (b) edge[-latex'] node[right] {$-W$} (tgt);
    \draw (a) edge[bend left,-latex'] node[above] {$k$ edges} node[below] {weight=$0$} (b);
    \draw (b) edge[bend left,-latex'] node[above] {$k$ edges} node[below] {weight=$0$} (a);
    \draw (a) edge[out=160,in=200,looseness=9,-latex'] node[left] {$\genfrac{}{}{0pt}{0}{-i;+W;-W+i+1}{(\text{$i$ odd})}$} (a);
    \draw (b) edge[out=-20,in=20,looseness=9,-latex'] node[right] {$\genfrac{}{}{0pt}{0}{-i;+W;-W+i+1}{(\text{$i$ even})}$} (b);
  \end{tikzpicture}
%  }
  \vspace{-2\medskipamount}
  \caption{An example showing that more than one cycle per state can be needed.}\label{fig-ex}
  \vspace{-\medskipamount}
\end{figure}
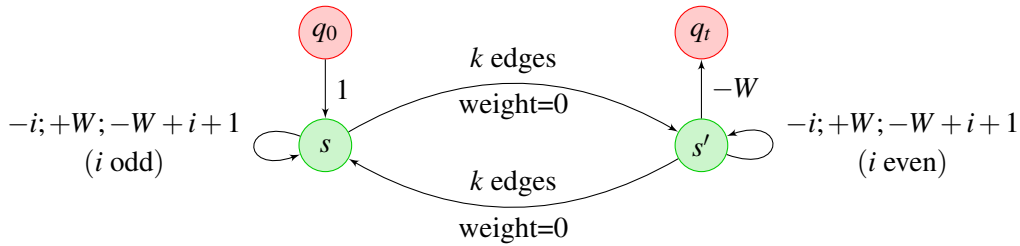

Let us look at the universal cycles that we have in this arena:
besides the cycles made of the $2k$ edges with weight~zero between~$s$
and~$s'$, the only possible universal cycles have to depart from the
first state of each cycle~$c_i$ (as~they are the only states having a
positive outgoing edge). Following Lemma~\ref{lemma-iteratecycles},
such cycles can be iterated
arbitrarily many times, and set the energy level to some value
in~$[L;W]$. Since the only edge available at the end of a universal
cycle has weight~$+W$, the~exact value of the universal cycles is
irrelevant: the~energy level will be~$W$ anyway when reaching the
second state of each cycle~$c_i$. As~a consequence, using \set-edges
in this example does not shorten the witnessing run, which then cannot
be shorter than~$3W+Wk+2$.
%(which is more than $2\size Q$ but less
%than~$\size Q^2$). This demonstrates that we cannot avoid looking for
%quadratic-size runs in the modified arena at the end of our algorithm.
\end{example}

We now move to the two-player setting. We~begin with proving a result
similar to Lemma~\ref{lemma-higherrun}:
\begin{restatable}{lemma}{lemmatwtwo}
%\begin{lemma}
\label{lemma-higherstrat}
  Let $G$ be a two-player arena, equipped with an \LWenergy-reachability
  objective. Let~$q$ be a state of~$G$, and~$u\leq u'$ in $[L;W]$. If
  \Pl1 wins the game from~$(q,u)$, then she also wins from~$(q,u')$.
%\end{lemma}
\end{restatable}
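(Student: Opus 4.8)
The plan is to show that \Pl1 can reuse verbatim the winning strategy she already has from $(q,u)$, and that Lemma~\ref{lemma-higherrun} guarantees it stays winning from the larger credit~$u'$. The preliminary observation I would emphasize is that a strategy is a map $\sigma\colon\FPath(q)\to E$ defined on paths of the \emph{underlying} arena~$G$, and that whether a path conforms to~$\sigma$ is a purely combinatorial condition that never mentions the energy level. Consequently the set of infinite outcomes of~$\sigma$ issued from~$q$ is a single set of paths of~$G$, the same whether the initial credit is~$u$ or~$u'$; the initial credit only affects how those fixed paths are \emph{evaluated} against the objective.

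Granting this, I would take an arbitrary infinite outcome~$\pi$ of~$\sigma$ from~$q$ and verify it is winning with initial credit~$u'$. Since $\sigma$ wins from $(q,u)$, the path~$\pi$ is \LWfeasible from level~$u$ and visits~$R$. The reachability part is immediate: the sequence of states along~$\pi$ is independent of the initial credit, so~$\pi$ still visits~$R$. For feasibility I would apply Lemma~\ref{lemma-higherrun} to each finite prefix of~$\pi$ in turn: the monotonicity it asserts concerns only the energy level along a fixed sequence of edges, so it applies to the outcome path~$\pi$ here; every finite prefix admits an \LWrun from $(q,u)$, hence, as $u'\geq u$, also from $(q,u')$, and since this holds for all prefixes, $\pi$ is \LWfeasible from~$u'$. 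Thus $\pi$ belongs to the objective with initial credit~$u'$, and as $\pi$ was arbitrary, $\sigma$ is winning from $(q,u')$.

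The argument is short because the two ingredients of the objective are orthogonal to the change of credit: the reachability requirement is insensitive to the energy level, and the quantitative requirement is handled wholesale by the monotonicity of Lemma~\ref{lemma-higherrun}. The single step that deserves care is the identification of the two sets of outcomes, and it relies on reading the game as played on~$G$ with the energy condition folded into the objective rather than on the expanded arena~\exGLW. Were one to work directly in~\exGLW, the natural route would instead be an explicit step-by-step simulation keeping the real run (from~$u'$) pointwise above an imagined run (from~$u$), where at each \Pl2 move one uses the fact that a winning $\sigma$ never lets \Pl2 reach~$\qerr$, so every edge \Pl2 can play remains feasible from the lower level too; this works as well but is unnecessary once the outcome sets are seen to coincide.
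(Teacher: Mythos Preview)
Your argument is correct and is the natural one: strategies in this paper are defined on finite paths of~$G$, so the outcome set of~$\sigma$ is independent of the initial credit, and then Lemma~\ref{lemma-higherrun} handles \LWfeasibility prefix by prefix while the reachability clause is insensitive to energy. This is precisely the route the paper intends---it introduces the lemma as the two-player analogue of Lemma~\ref{lemma-higherrun}---so your proof and the paper's coincide in substance. One cosmetic point worth a sentence in the write-up: Lemma~\ref{lemma-higherrun} is stated for one-player arenas, but its content concerns only the energy dynamics along a fixed sequence of edges and is therefore oblivious to state ownership; you are using it exactly in that path-level sense once an outcome~$\pi$ has been fixed.
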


By Martin's theorem~\cite{Mar75}, our games are determined. It~follows that if \Pl2 wins from
some configuration~$(q,v)$, she also wins from~$(q,v')$ for
all~$L\leq v'\leq v$ (assuming the contrary, i.e. $(q,v')$ winning for \Pl1, would lead to the contradictory statement that $(q,v)$ is both winning for \Pl1 and \Pl2). Using classical techniques~\cite{ChatterjeeD12},
we prove that
%\Pl1 may need exponential memory, while
\Pl2 can be restricted to play
memoryless strategies:

\begin{restatable}{proposition}{propXXIV}
%\begin{proposition}
\label{mem-proposition}
For two-player \LWenergy-reachability games,
%exponential memory may be
%necessary for \Pl1. For~\Pl2,
memoryless strategies are sufficient
for \Pl2.
%\end{proposition}
\end{restatable}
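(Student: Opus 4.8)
The plan is to exploit the monotonicity of \Pl1's winning region (Lemma~\ref{lemma-higherstrat}) together with determinacy to give \Pl2's winning region a simple threshold shape, and then to read a memoryless strategy directly off those thresholds. First I would work in the finite expanded arena~\exGLW, where \Pl1's objective becomes plain reachability of the target configurations~$R\times[L;W]$ (the state~\qerr being absorbing and outside~$R$), so that \Pl2's objective is the dual safety objective; by Martin's theorem~\cite{Mar75} the game is determined, and I write~$\mathcal W_2$ for \Pl2's winning region. As already observed before the statement (from Lemma~\ref{lemma-higherstrat} and determinacy), $\mathcal W_2$ is downward closed in the energy component: for each state~$q$ the set $\{l\mid (q,l)\in\mathcal W_2\}$ is a finite down-set of integers in~$[L;W]$, so I can define $\theta(q)=\max\{l\mid (q,l)\in\mathcal W_2\}$ (with $\theta(q)=L-1$ when the set is empty), and then $(q,l)\in\mathcal W_2$ if, and only if, $L\leq l\leq\theta(q)$.

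The key step is to show that a single outgoing edge suffices at each \Pl2-controlled state, uniformly over all winning energy levels. Fix $q\in Q_2$ with $\theta(q)\geq L$. Since $(q,\theta(q))\in\mathcal W_2$, \Pl2 has a move from this configuration keeping the play inside $\mathcal W_2\cup\{\qerr\}$; I pick such an edge $(q,w,q')$ and set $\tau(q)=(q,w,q')$. For any level $L\leq l\leq \theta(q)$, the successor energy $\min(W,l+w)$ is nondecreasing in~$l$, and the transition leads to~\qerr exactly when $l+w<L$; hence either the move goes to~\qerr, or $\min(W,l+w)\leq\min(W,\theta(q)+w)\leq\theta(q')$, so the successor configuration again lies in $\mathcal W_2\cup\{\qerr\}$. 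Thus the move that is safe from the highest winning level is safe from every lower winning level. At \Pl2-states with $\theta(q)<L$, I define $\tau(q)$ arbitrarily, as they are never reached while staying in~$\mathcal W_2$.

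It then remains to check that the memoryless strategy~$\tau$ wins from every configuration of~$\mathcal W_2$, which I would do by showing it maintains the invariant ``the current configuration lies in $\mathcal W_2\cup\{\qerr\}$ and no configuration in~$R$ has been visited.'' At \Pl2-states the previous paragraph preserves the invariant. At \Pl1-states~$q$ with $(q,l)\in\mathcal W_2$, every move of \Pl1 must also stay in $\mathcal W_2\cup\{\qerr\}$, since otherwise \Pl1 could step into her own winning region and $(q,l)$ would be \Pl1-winning, contradicting $(q,l)\in\mathcal W_2$; moreover no configuration of~$\mathcal W_2$ lies in~$R$, because target configurations are immediately \Pl1-winning. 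Consequently every play consistent with~$\tau$ from~$\mathcal W_2$ either remains forever in~$\mathcal W_2$ without meeting~$R$, or reaches~\qerr; in both cases \Pl1 fails to reach~$R$ within the energy budget, so \Pl2 wins. Since~$\tau$ depends only on the current state of~$G$, it is memoryless in the sense of the paper, which is precisely the classical adaptation from~\cite{ChatterjeeD12}.

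I expect the main obstacle to be the bookkeeping around the weak upper bound in the key step: one must verify that the $\min(W,\cdot)$ saturation keeps the successor energy monotone in the starting level, and correctly treat transitions dropping below~$L$ as (beneficial) moves to~\qerr, so that the single edge chosen at the top level~$\theta(q)$ really is safe across the whole winning interval. Once this monotonicity is pinned down, the safety/invariant argument is routine.
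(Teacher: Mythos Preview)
Your argument is correct. It differs from the paper's route, which invokes the classical edge-induction technique of~\cite{ChatterjeeD12}: there one argues by induction on the number of \Pl2-states having more than one outgoing edge, fixing such a state~$q$, forming two subgames in which one of the edges is removed, and showing (via the monotonicity of Lemma~\ref{lemma-higherstrat}) that if \Pl1 wins both subgames then she also wins the original game, so \Pl2 must already win one of the restricted subgames; iterating yields a memoryless strategy. Your approach is instead direct and constructive: you use Lemma~\ref{lemma-higherstrat} to give \Pl2's winning region a threshold shape~$\theta(q)$, pick at each \Pl2-state the edge that is safe from the top level~$\theta(q)$, and use monotonicity of $l\mapsto\min(W,l+w)$ (together with the fact that dropping below~$L$ sends the play to~$\qerr$, which is good for \Pl2) to propagate safety to all lower winning levels. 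The edge-induction argument is more generic---it applies to any objective whose winning region is monotone and behaves well under edge restriction---while your threshold construction is more explicit, avoids the induction, and makes the memoryless strategy visible; both ultimately rest on the same monotonicity lemma. One cosmetic point: your closing remark that this ``is precisely the classical adaptation from~\cite{ChatterjeeD12}'' is slightly misleading, since what you wrote is not the inductive argument of that paper but an alternative (and arguably simpler) proof.
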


A direct consequence of this result and of Theorem~\ref{thm_1P_LW_reachability} is the following:
\begin{corollary}
\label{cor_2P_LW_reachability}
  Two-player \LWenergy-reachability games are in $\co\NP$. 
\end{corollary}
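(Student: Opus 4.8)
The plan is to show that the complement problem --- deciding whether \Pl1 does \emph{not} have a winning strategy --- lies in \NP, which immediately gives membership in \co\NP for the original problem. By Martin's theorem~\cite{Mar75} these games are determined, so \Pl1 fails to win from the initial configuration exactly when \Pl2 wins from it; hence it suffices to exhibit a nondeterministic polynomial-time procedure certifying that \Pl2 wins.

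The certificate will be a memoryless strategy for \Pl2. By Proposition~\ref{mem-proposition}, whenever \Pl2 wins she has a memoryless winning strategy, and such a strategy is described simply by selecting one outgoing edge at each state of~$Q_2$; this is an object of polynomial size that can be guessed. Given a guessed memoryless strategy~$\tau$, I would fix it in the arena: at every state of~$Q_2$ only the edge selected by~$\tau$ is retained, while every state of~$Q_1$ keeps all its outgoing edges. Since the original arena has a non-empty edge set at each state, the resulting structure~$G_\tau$ is a genuine one-player \LWenergy-reachability arena, in which only \Pl1 still has choices.

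It then remains to verify that~$\tau$ is indeed winning, i.e. that along every play consistent with~$\tau$ either the reachability objective fails or the lower bound is violated. Equivalently, \Pl1 must have \emph{no} winning path in the one-player arena~$G_\tau$. By Theorem~\ref{thm_1P_LW_reachability}, the existence of a winning path in a one-player \LWenergy-reachability arena can be decided in polynomial time; the verification step is therefore the complement of that decision procedure and still runs in \PTIME. The procedure accepts precisely when the guessed~$\tau$ admits no winning \Pl1 response, which by determinacy and Proposition~\ref{mem-proposition} happens for some~$\tau$ if, and only if, \Pl1 does not win the game.

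The step I expect to require the most care is the reduction to a \emph{one-player} instance: I must check that fixing~$\tau$ preserves the structural assumptions used by Theorem~\ref{thm_1P_LW_reachability} (in particular that every state retains an outgoing edge, and that the weak-upper-bound semantics of~$G_\tau$ coincides with that of~$G$ once the \Pl2-moves are dictated by~$\tau$), so that the polynomial-time one-player algorithm applies verbatim to~$G_\tau$ and its output can be read as a correct ``\Pl1 loses'' test. Everything else is routine bookkeeping: guessing a polynomially-sized strategy and running a single polynomial-time check yields the claimed \NP bound for the complement, hence the \co\NP bound for \LWenergy-reachability games.
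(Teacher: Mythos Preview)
Your proposal is correct and follows exactly the approach the paper intends: the paper states the corollary as a direct consequence of Proposition~\ref{mem-proposition} (memoryless strategies suffice for \Pl2) and Theorem~\ref{thm_1P_LW_reachability} (one-player \LWenergy-reachability is in \PTIME), and your argument---guess a memoryless \Pl2-strategy, restrict the arena accordingly, and verify in polynomial time that \Pl1 has no winning path---is precisely the standard unpacking of that remark.
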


%% \begin{proof}
%% A priori, lemma \ref{mem-proposition} says if $P_2$ has a winning
%% strategy, he has a memoryless one. Now, we can guess a strategy for
%% Player $2$ (i.e. select one transition per state in $Q_2$) and verify
%% that this strategy is winning for Player $2$ by solving the one player
%% game, which has been proved to be in \PTIME in
%% theorem~\ref{thm_1P_LW_reachability}.
%% \end{proof}

%\fbox{add polynomial witness for \Pl1 strategy.}

%
% As we do not have a proof for polynomial size strategies I put 
% this part of the paper in an addtional module
%%%%%%%%%%%%%%%%%%%%%%%%
%\input{poly-strat.tex}%
%%%%%%%%%%%%%%%%%%%%%%%%

\vspace{-\medskipamount}
\section{Energy reachability games with soft upper bound}
\label{section_APNA}
%\NM{will change U to S and H to U}

We now consider games with limited violations, i.e. (reachability)
games with \LVenergynb, \LVenergyconsnb and \LVenergysum
objectives. We address the problems of deciding the winner
%winning strategies for \Pl1
in the one-player and two-player settings, and consider the existence and
minimization questions.

%% One can already notice that, if \Pl1 has no strategy to win an \Lenergy game,
%% %with strong lower bound $L$, 
%% % I think we can safelmy extend all games to any lower bound $L$ and kee all results
%% then she also has no strategy to with any \LVenergy game. Obviously, a winning strategy in a strong LU (reachability) game is also a winning strategy for an \LVenergy game, regardless of values chosen for $V$ and $H$: as no violation occurs during a winning run $\rho$ $\nbV(\rho)=\consnbV(\rho)=\hV(\rho)=0$. Now, the interesting question is how the choice of a particular violation measure affects decidability and complexity of the games, and the characteristics of the allowed strategies. 
%% Constraints in \LVenergy turn out to be quite similar, technically and from a complexity point of view, to
%% \LUenergy constraints:

\begin{restatable}{theorem}{thmtwfive}
%\begin{theorem}
\label{thm_apna_LUHV_2P}
\LVenergynb, \LVenergyconsnb and \LVenergysum (reachability) games
are \PSPACE-complete for one-player arenas, and \EXPTIME-complete for
two-player arenas.
%\end{theorem}
\end{restatable}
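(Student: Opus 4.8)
The plan is to prove matching upper and lower bounds for all three violation measures ($\nbV$, $\consnbV$, $\hV$), in both the one-player and two-player settings. For the \textbf{upper bounds}, I~would reduce each \LVenergy objective to a plain \LUenergy-reachability objective over an enlarged (but still exponential-size) arena, so that Theorem~\ref{thm_reach_strong_LU} applies and yields \PSPACE\ membership for one player and \EXPTIME\ membership for two players. Concretely, since runs are constrained by the strong upper bound~$U$, the energy level ranges over the finite interval~$[L;U]$; I~augment each configuration $(q,l)$ with a bounded counter that tracks the current value of the chosen violation statistic. For \LVenergynb the counter records the running count $\nbV$; for \LVenergysum it records the running sum $\hV$ (both bounded by the threshold~$V$, with an immediate move to a losing sink as soon as the bound is exceeded); for \LVenergyconsnb the counter records the current length of the ongoing run of consecutive violations, resetting to~$0$ whenever the energy drops to~$S$ or below. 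In every case the augmented state space has size polynomial in $|Q|$, $U-L$ and $V$, hence exponential in the size of the (binary) input, and the \LVenergy objective becomes exactly an \LUenergy-reachability objective on this product. Theorem~\ref{thm_reach_strong_LU} then gives the claimed membership results, the one-player case being handled on-the-fly in \PSPACE.

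For the \textbf{lower bounds}, I~would first observe that \LUenergy-reachability games (with no violations, i.e.\ $V=0$ and $S=U$) are already \PSPACE-hard for one player and \EXPTIME-hard for two players by Theorem~\ref{thm_reach_strong_LU}. Taking $S=U$ forces $\VV(\rho)=\emptyset$ on any \LUfeasible run, so each of the three \LVenergy objectives with threshold $V=0$ coincides exactly with the underlying \LUenergy-reachability objective. This instantly transfers the hardness of Theorem~\ref{thm_reach_strong_LU} to all three measures, simultaneously in the one- and two-player settings, and matches the upper bounds established above.

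The \textbf{main obstacle} I~anticipate is making the reduction to \LUenergy-reachability fully rigorous for the $\consnbV$ measure, where the statistic is not monotone along a run: a~block of consecutive violations contributes to $\consnbV$ only through its \emph{maximum} length, so the augmented counter must both increment while the level stays above~$S$ and reset correctly, while the objective must compare the peak against~$V$ rather than the final counter value. I~would handle this by recording, in each augmented configuration, the current consecutive-violation length and declaring a transition illegal (leading to the sink) exactly when that length would exceed~$V$; since a run is winning iff it never triggers this sink and eventually reaches~$R$, the resulting reachability objective on the product faithfully captures $\consnbV(\rho)\le V$. A~second point requiring care is confirming that the product arena preserves the turn structure (each augmented state belongs to the same player as its projection), so that strategies transfer between the games in both directions; this is routine but must be stated. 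With these points settled, the membership and hardness results combine to give \PSPACE-completeness for one player and \EXPTIME-completeness for two players, for all three objectives.
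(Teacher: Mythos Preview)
Your proposal is correct and essentially matches the paper's proof: build a product arena with states $(q,l,c)$ recording the energy level and a bounded violation counter (sent to a sink once the bound is exceeded), solve reachability on this exponential-size game on-the-fly for one player and directly for two, and obtain hardness via the trivial instance $S=U$, $V=0$. Two minor points worth tightening: once $l$ is part of the state the objective on the product is plain reachability, not an \LUenergy objective, so you are really invoking the \emph{argument} of Theorem~\ref{thm_reach_strong_LU} rather than its statement; and the theorem also covers the infinite-run variants (the parenthetical ``(reachability)'' means both), which your write-up omits but which follow from the same product with a safety objective and hardness inherited from the \LUenergy results of~\cite{BouyerFLMS08}.
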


\begin{proof}
    Membership in \PSPACE and \EXPTIME can be obtained by building a
  variant \exGLUV of the \exGLU arena: besides storing the energy level in
  each state, we~can also store the amount of violations (for any of
  the three measures we consider). More precisely, given an arena~$G$,
  lower and upper bounds~$L$ and~$U$ on the energy level, a soft bound~$S$,
  and a
  bound~$V$ on the measure of violations,
  %, for any of our three
  %measures of violations, the maximal energy level that can be reached
  %along a path with violations smaller than or equal to~$V$ is
  %$U+V\cdot w_{\max}$, where $w_{\max}$ is the maximal weight in our
  %arena.  We~then
  we~define a new arena\footnote{In order to factor our
    proof, we~store all three measures of violations in one single
    arena, even if only one measure per type of \LVenergy game is needed.}~\exGLUV with set of
  states 
  %$(Q\times([L;U+V\cdot w_{\max}]\cup\{\bot\})\times ([0;V]\cup\{\bot\})^4)$, 
  $(Q\times([L;U]\cup\{\bot\})\times ([0;V]\cup\{\bot\})^3)$,   
  and each transition~$(q,w,q')$ of the
  original arena generates a transition from state
  %$(q,l,(n,c,m,s))$ to state~$(q',l',(n',c',m',s'))$ 
  $(q,l,(n,c,s))$ to state~$(q',l',(n',c',s'))$ whenever
  \begin{itemize}
  \item $l'$ correctly encodes the evolution of the energy level: if $l$ and~$l+w$ are in~$[L;U]$, then $l'=l+w$; if the energy level leaves interval $[L;U]$ ($l+w<L$ or $l+w>U$) or has formerly leaved interval $[L;U]$ (in this case $l=\bot$), then $l'=\bot$.
  \item $n'$ correctly stores the number of violations: $n'=\bot$ if $l'\in(S;U]$ and $n+1>V$ (the number of violations allowed is exceeded); once the number of violations is exceeded ($n=\bot$) or the maximal energy level is exceeded ($l'=\bot$), we have $n'=\bot$; last, $n'=n$ if $l'\in [L;S]$ (the current state does not violate bound $S$), and $n'=n+1$ if $l'\in(S;U]$ and $n+1\leq V$ (the current state is an additional violation of soft bound $S$);

\end{itemize}
    
We can similarly update $c'$ to count the current number of
consecutive violations, and $s'$ for the sum of all violations.
%We
%refer interested readers to \fbox{Appendix} for the complete construction.
In~the resulting arena, values~$n$, $c$ and~$s$ keep tack of the number of
violations, number of consecutive violations and sum of violations;
their values range in $[0,V]$ and are set to~$\bot$ as soon as they
exceed bound $V$, or if the energy level has exceeded its
bounds. The~arena~\exGLUV has size exponential,
and \LVenergyall-reachability problems can then be reduced to solving
reachability of corresponding sets of states in~\exGLUV.  The
announced complexity results follow. Hardness results are obtained by
a straightforward encoding of \LUenergy reachability problems, taking
$S=U$ and $V=0$.

%%   \lh{Now, player 1 wins \LVenergynb reachability game iff it has a memoryless strategy to reach a state from the target set $T\subseteq Q$ that avoid state $\bot$ and all states of the form 
%%   $(q,l,(n,c,m,s))$ where $n=\bot$.  Player 1 wins \LVenergyconsnb iff it has a memoryless strartegy to reach a state from the target set $T\subseteq Q$ that avoid state $\bot$ and all states of the form 
%%   $(q,l,(n,c,m,s))$ where $c=\bot$.  
%%    Last, Player 1 wins  \LVenergysum iff it has a memoryless strartegy to reach a state from the target set $T\subseteq Q$ that avoid state $\bot$ and all states of the form 
%%   $(q,l,(n,c,m,s))$ where $s=\bot$. In the one player setting, finding a path from the initial configuration can be performed by exploring path of at most the size of \exGLUV, and can hence be done in PSPACE. In the two-player setting, one can compute an attractor for the set 
%% $(T\times([L;U+V\cdot w_{\max}])\times ([0;V]^3)$ in polynomial time w.r.t the size of \exGLUV, hence in \EXPTIME w.r.t. $V,H$.}

Solving \LVenergynb, \LVenergyconsnb, \LVenergysum games (without
reachability objective) can be performed with arena \exGLUV built
above. Now, the objective
in \LVenergynb, \LVenergyconsnb, \LVenergysum games is to enforce
infinite runs, that avoid states with $l=\bot$ and with $n=\bot$,
$c=\bot$ or $s=\bot$, depending on the chosen measure of
violations. %Our result follows.
%% Again, these strategies can be found in \PSPACE for the
%% one-player case, and in \EXPTIME in the two-player case.  Hardness
%% results are obtained by setting the number\slash amount of allowed
%% violations to~zero, thereby recovering \LUenergy (reachability) games,
%% which are \PSPACE-complete for one-player arenas and \EXPTIME-complete
%% for two-player arenas (Thm~\ref{thm_reach_strong_LU} and
%% ~\cite{BouyerFLMS08}).
\end{proof}

%% \begin{remark}
%% In our proof, the strong upper bound~$U$ is given as input. We~could
%% also assume that no such upper bound is given: if the amount of
%% violations is bounded by~$V$ along a run, then the maximal energy
%% level that can be reached is bounded by $S+V\cdot w_{\max}$, where
%% $w_{\max}$ is the maximal weight of transitions in~$G$. Using binary
%% search, we~can compute the smallest~$U$ for which \Pl1 wins a given
%% \LSUenergyall game.
%% %
%% %% we have over-estimated the maximal energy level that
%% %% may be reached as long as the amount of violations remains low
%% %% enough. Obviously, the~construction is easily adapted to the case
%% %% where a strict upper bound~$H$ (higher than our soft upper bound~$U$)
%% %% on the energy level is given. In this case, we adapt calculus of value $l'$ and write 
%% %% %HERE
%% %% $l'=\bot$ if either $l=\bot$ or $l+w<L$ or $l+w>H$.
%% %% The~natural question that then arises is computing the
%% %% minimal strict upper bound~$H$ for which an instance of an
%% %% \LVenergyall-reachability problem has a solution. We show later that this can be solved by binary
%% %% search.
%% \end{remark}

% I have transferred a few material from the long version to the shor one. 
%\subsubsection{Relaxed upper bound -- long version}
%\input{boundedvl.tex}
%\input{existence.tex}

%Existence and minimization}

When the strong upper bound~$U$ is not given, the existence problem
consists in deciding if such a bound exists under which \Pl1 wins the
\LSUenergyall game. We~have:
\begin{restatable}{theorem}{thmtwsix}
%\begin{theorem}
\label{thm_Apnar_exists_min_exptimec}
The existence problems for \LVenergynb, \LVenergyconsnb, and
\LVenergysum (reachability) games are \PSPACE complete for the
one-player case and \EXPTIME-complete for the two-player case.
%\end{theorem}
\end{restatable}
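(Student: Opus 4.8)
The plan is to reduce the existence problem to the fixed-upper-bound problem of Theorem~\ref{thm_apna_LUHV_2P}, by exhibiting a single value of~$U$ that suffices whenever some value does. The first ingredient is \emph{monotonicity}: the only role of the strict upper bound~$U$ is to send a run to~$\qerr$ once the energy level exceeds~$U$, so a winning strategy for some~$U_0$ keeps the energy level within~$[L;U_0]$ and therefore remains winning for every~$U'\geq U_0$ (the three violation measures are taken relative to~$S$ and are unaffected by~$U$). Thus the set of upper bounds for which \Pl1 wins is upward closed, and it is enough to decide whether \Pl1 wins for arbitrarily large~$U$.

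The key step is an energy-bounding lemma: every run~$\rho$ satisfying the objective keeps its energy level below an explicit bound~$B$ of polynomial bit-size. For the \LVenergysum objective, $\hV(\rho)\leq V$ forces $\tilde\rho_i-S\leq V$ at every position, hence $\tilde\rho_i\leq B$ with $B=S+V$. For the \LVenergynb and \LVenergyconsnb objectives, the energy can exceed~$S$ only inside a maximal block of consecutive violating positions, and any such block has length at most~$V$; since the energy enters the block from a value at most~$S$ and each transition changes it by at most the largest absolute weight~$W_{\max}$ of~$G$, it never exceeds $B=S+V\cdot W_{\max}$. In either case, any winning run stays within~$[L;B]$, both for the reachability and the infinite-run variants.

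Combining the two observations, a strategy that wins for some $U\geq B$ has all its outcomes confined to~$[L;B]$, so it is still winning for the game with strict upper bound~$U=B$: no~$\qerr$ transition caused by the upper bound is ever triggered, and this reasoning bounds every outcome against every \Pl2 strategy, so it applies verbatim in the two-player setting. Hence ``there exists~$U$ for which \Pl1 wins'' is equivalent to ``\Pl1 wins the \LVenergyall game with $U=B$'', a single fixed-bound instance; membership in \PSPACE for one player and in \EXPTIME for two players then follows directly from Theorem~\ref{thm_apna_LUHV_2P}.

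For the matching hardness, I would specialise the reduction from \LUenergy-reachability (which is \PSPACE-hard for one player and \EXPTIME-hard for two by Theorem~\ref{thm_reach_strong_LU}): given an instance with interval~$[L;U]$, set $S=U$ and $V=0$. Since no violation is then permitted, every winning run must keep its energy below~$S=U$, so for each admissible choice $U'\geq S$ the \LVenergyall game coincides with the original \LUenergy-reachability game on~$[L;U]$; consequently a winning~$U'$ exists iff \Pl1 wins that \LUenergy-reachability game. The main obstacle is the energy-bounding lemma for the number- and consecutive-violation measures (the sum case being immediate) together with the verification that replacing the relaxed bound by the finite value~$B$ does not give \Pl2 any new way to win by exploiting the freshly imposed ceiling.
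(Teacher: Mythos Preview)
Your proposal is correct and follows essentially the same approach as the paper: both derive an explicit a~priori bound~$B$ on the energy level of any winning run (the paper uses $S+V\cdot w_{\max}$ uniformly; you sharpen this to $S+V$ for the sum measure) and then reduce to the fixed-$U$ case of Theorem~\ref{thm_apna_LUHV_2P} with $U=B$. Your write-up is in fact more complete than the paper's terse proof, since you make the upward-closure in~$U$ explicit and you supply the hardness argument (via $S=U$, $V=0$), which the paper's proof of this theorem omits.
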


\begin{proof}%(sketch)
  Along
  any outcome of a~winning strategy, the energy level remains
  below $S+V\cdot w_{\max}$, where $w_{\max}$ is the maximal weight
  appearing on transitions of~$G$. 
  %We reuse the arena \exGLUV of theorem~\ref{thm_apna_LUHV_2P}.
  This
gives a strong upper bound~$U$, with which we can apply
  the construction above and check the existence of a winning strategy
  for \Pl1. %If no strategy exists with strong upper bound $U$, then no strategy exists for smaller values. 
  %%   If there
%% is a strategy in a \LVenergyall (reachability) game, it~will visit
%% only states of $G$ with an energy level smaller than $\min(H,H_{max})$
%% and with a number of violations that cannot exceed $V$. This
%% yields \PSPACE and EXPTIME complexities.  Hardness is obtained by
%% reduction from a \LUenergy game.
\end{proof}

%% \begin{remark}
%% Finding an infinite run of a weighted graph $G$ with energy level in some interval $[0,b]$ can be done in PSPACE, as shown in~\cite{BouyerFLMS08}. However, our setting slightly differs, as the runs we are looking for start with an acyclic prefix in which energy level can exceed $U$ a bounded number of times $V$, and continue with a cyclic behavior that cannot violate upper bound $U$ and must have a weight of 0.
%% \end{remark}

\begin{restatable}{theorem}{thmtwseven}
%\begin{theorem}
\label{thm_minimization}
Let $G$ be an arena, $L$ and $S$ be integer bounds, and $V_{\max}$
be an integer. There exist algorithms that compute the value of $U$ (if~any)
that minimizes the value of~$V$ (below~$V_{\max}$) for which \Pl1 has a winning
strategy in a \LVenergyall
%(resp. \LVenergyconsnb, \LVenergysum)
(reachability) game.  These algorithms run in \PSPACE for one-player
games and in \EXPTIME for two-player games. These bounds are sharp.
%\end{theorem}
\end{restatable}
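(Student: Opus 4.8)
The plan is to reduce the minimization problem to a sequence of decision problems of the type already handled by Theorem~\ref{thm_apna_LUHV_2P}, and to argue that the relevant range of candidate values for~$U$ and~$V$ is finite and singly-exponential, so that a binary (or linear) search stays within the announced space/time bounds. First I would observe, exactly as in the proof of Theorem~\ref{thm_Apnar_exists_min_exptimec}, that along any winning outcome the energy level can never exceed $S + V_{\max}\cdot w_{\max}$: each violation contributes at most $w_{\max}$ above~$S$, and there are at most $V_{\max}$ of them (or the sum/consecutive-count is bounded by~$V_{\max}$, depending on the measure~$\star$). Hence there is no loss of generality in restricting attention to strong upper bounds~$U$ in the finite interval $[S;\,S+V_{\max}\cdot w_{\max}]$: any larger value of~$U$ behaves exactly like this cap, since the extra headroom is never used by a winning play. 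This bounds the search space for~$U$ by a singly-exponential quantity (binary-encoded, so its description is polynomial).

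Next I would note that the quantity to be minimized, namely the least~$V\leq V_{\max}$ for which \Pl1 wins the \LVenergyall game for a given~$U$, is \emph{monotone} in both parameters: increasing~$U$ can only help \Pl1 (more feasible plays become available), and, for fixed~$U$, the set of~$V$ for which \Pl1 wins is upward closed (a strategy keeping violations below~$V$ also keeps them below any $V'\geq V$). Using Theorem~\ref{thm_apna_LUHV_2P} as a black-box decision oracle, for each fixed pair $(U,V)$ we can decide whether \Pl1 wins in \PSPACE (one player) or \EXPTIME (two players). The algorithm then iterates over candidate values of~$U$ in the polynomial-length interval above and, for each, performs a binary search on $V\in[0;V_{\max}]$ to find the minimal winning~$V$; finally it returns the pair $(U,V)$ achieving the global minimum of~$V$ (and, among those, we may return the smallest such~$U$, or report that no~$U$ yields a winning strategy with $V\leq V_{\max}$).

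The key point for the complexity bound is that the outer and inner searches add only a polynomial number of oracle calls, each encoded in polynomially many bits, so the whole procedure stays in \PSPACE in the one-player case and in \EXPTIME in the two-player case. For the one-player case this requires some care: since \PSPACE\ is closed under polynomially many adaptive subroutine calls that each run in \PSPACE\ (reusing the same space), the binary search over the exponentially-sized value range for~$V$ contributes only polynomially many bits of counter and does not blow up the space. For the two-player case, each oracle call costs \EXPTIME, and polynomially many such calls still run in \EXPTIME. Sharpness (the ``These bounds are sharp'' claim) follows from the hardness half of Theorem~\ref{thm_apna_LUHV_2P}: the plain decision problem, which is a special case of minimization (e.g.\ fixing $V_{\max}$ and asking whether the minimum is at most a given value, or taking $S=U$, $V=0$), is already \PSPACE-hard for one player and \EXPTIME-hard for two.

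The main obstacle I anticipate is not the search structure but the \PSPACE\ bookkeeping in the one-player case: one must verify that the on-the-fly reachability check in \exGLUV (rather than an explicit construction of the exponential-size arena) can be combined with the outer and inner searches without ever materializing exponentially much data, and that the monotonicity arguments justifying the binary search are correct for all three measures~$\nbV$, $\consnbV$, and $\hV$ simultaneously. Once monotonicity in~$V$ (upward closure of the winning set) and the singly-exponential cap on~$U$ are established, the remaining steps are routine.
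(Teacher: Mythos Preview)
Your proposal is correct and follows essentially the same approach as the paper: bound~$U$ by $S+V_{\max}\cdot w_{\max}$, exploit monotonicity in~$U$ and~$V$, and run a binary search using the decision procedure of Theorem~\ref{thm_apna_LUHV_2P} as an oracle, concluding that polynomially many \PSPACE/\EXPTIME\ calls suffice. If anything, you are more explicit than the paper about the \PSPACE\ closure argument and about how hardness transfers from Theorem~\ref{thm_apna_LUHV_2P}; the only wording to tighten is ``iterates over candidate values of~$U$'', which taken literally would be exponentially many iterations---you should make clear that the monotonicity you already observed lets you binary-search over~$U$ as well (or simply take the maximal~$U$ to get the optimal~$V$, then binary-search for the least~$U$ realizing it).
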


\begin{proof}%(sketch)
We perform a binary search for an optimal value for $V$ when the
strict upper energy bound~$U$ varies between~$S$ and
$S+V_{\max}.w_{\max}$. For~each value~$U$, we discard from \exGLUV 
transitions for which the energy level exceeds~$U$.  One can remark
that when $U$ grows, the minimal amount of violation may decrease, both
in reachability and infinite-run games.  We can hence discover optimal
values with a polynomial number of \PSPACE checks (for the one-player
games), and \EXPTIME checks in the two-player case.
\end{proof}

%
% Here, we drop the violation crieria requiring bounded mean number of violation, bounded mean value of violation, and bounded mean value of violation during overload steps. 
% However, the material is written an accessible in the file below. 
%
%\input{meanapna.tex}

%\section{Discounted $\APNA$ games}
% We do not consider Discounted games: still some work remaining, 
% and th epaper is already big
%\label{sec_apna}
%
%\input{discount.tex}

\vspace{-\medskipamount}
\section{Conclusion}
\label{section_conclusion}

This paper has considered several variants of energy games. The first
variant defines games with upper and lower bound constraints, combined
with reachability objectives.  The second variant
defines games with a strong lower bound and a soft upper bound, which can
be temporarily exceeded.
%reachability or infinite run objectives, and
%with
%constraints on the amount of violations of upper bound.
In~the one player case,
complexities ranges from \PTIME to \PSPACE-complete, and in the
two-player case from $\NP\cap co\NP$ to \EXPTIME-complete.  In
general, the complexity is the same for a reachability and for an
infinite run objective. Interestingly, for \LWenergy games, the
complexity of the single player case is $\PTIME$, but reachability
objectives require exponential memory (in the size of the weak upper
bound) while strategies are memoryless for infinite run objectives.

A possible extension of this work is to consider energy games with
mean-payoff functions and discounted total payoff, both for the energy
level and for the violation constraints, and the associated
minimization and existence problems.

\paragraph{Acknowledgements.}
This work was supported by UMI Relax.
We~thank the anonymous reviewers for their
suggestions, which helped us improve the presentation.
%careful reading of a preliminary version of this paper and their
%improvement suggestions.

%% NM: myplain is experimental; trying to have aliases in bibtex
%% don't worry if it crashes, I'll revert to plain if needed...
%\pagebreak

%\bibliographystyle{eptcs-modified}
%\bibliography{biblio}

\begin{thebibliography}{10}
\providecommand{\bibitemdeclare}[2]{}
\providecommand{\surnamestart}{}
\providecommand{\surnameend}{}
\providecommand{\urlprefix}{Available at }
\providecommand{\url}[1]{\texttt{#1}}
\providecommand{\href}[2]{\texttt{#2}}
\providecommand{\urlalt}[2]{\href{#1}{#2}}
\providecommand{\realdoi}[1]{doi:\urlalt{http://dx.doi.org/#1}{#1}}\catcode`\_8\relax
\providecommand{\doi}{\catcode`\_11\relax\realdoi}
\providecommand{\bibinfo}[2]{#2}

\bibitemdeclare{inproceedings}{Andersson06}
\bibitem{Andersson06}
\bibinfo{author}{Daniel \surnamestart Andersson\surnameend}
  (\bibinfo{year}{2006}): \emph{\bibinfo{title}{An improved algorithm for
  discounted payoff games}}.
\newblock In \bibinfo{editor}{Janneke \surnamestart Huitink\surnameend} \&
  \bibinfo{editor}{Sophia \surnamestart Katrenko\surnameend}, editors: {\sl
  \bibinfo{booktitle}{{P}roceedings of the 11th {ESSLLI} {S}tudent {S}ession}},
  pp. \bibinfo{pages}{91--98}.

\bibitemdeclare{inproceedings}{BFLM10}
\bibitem{BFLM10}
\bibinfo{author}{Patricia \surnamestart Bouyer\surnameend},
  \bibinfo{author}{Uli \surnamestart Fahrenberg\surnameend},
  \bibinfo{author}{Kim~Guldstrand \surnamestart Larsen\surnameend} \&
  \bibinfo{author}{Nicolas \surnamestart Markey\surnameend}
  (\bibinfo{year}{2010}): \emph{\bibinfo{title}{Timed Automata with Observers
  under Energy Constraints}}.
\newblock In \bibinfo{editor}{Karl~Henrik \surnamestart Johansson\surnameend}
  \& \bibinfo{editor}{Wang \surnamestart Yi\surnameend}, editors: {\sl
  \bibinfo{booktitle}{{P}roceedings of the 13th {I}nternational {W}orkshop on
  {H}ybrid {S}ystems: {C}omputation and {C}ontrol ({HSCC}'10)}},
  \bibinfo{publisher}{ACM Press}, pp. \bibinfo{pages}{61--70},
  \doi{10.1145/1755952.1755963}.

\bibitemdeclare{inproceedings}{BouyerFLMS08}
\bibitem{BouyerFLMS08}
\bibinfo{author}{Patricia \surnamestart Bouyer\surnameend},
  \bibinfo{author}{Uli \surnamestart Fahrenberg\surnameend},
  \bibinfo{author}{Kim~Guldstrand \surnamestart Larsen\surnameend},
  \bibinfo{author}{Nicolas \surnamestart Markey\surnameend} \&
  \bibinfo{author}{Ji{\v r}{\'\i} \surnamestart Srba\surnameend}
  (\bibinfo{year}{2008}): \emph{\bibinfo{title}{Infinite Runs in Weighted Timed
  Automata with Energy Constraints}}.
\newblock In \bibinfo{editor}{Franck \surnamestart Cassez\surnameend} \&
  \bibinfo{editor}{Claude \surnamestart Jard\surnameend}, editors: {\sl
  \bibinfo{booktitle}{{P}roceedings of the 6th {I}nternational {C}onferences on
  {F}ormal {M}odelling and {A}nalysis of {T}imed {S}ystems ({FORMATS}'08)}},
  {\sl \bibinfo{series}{Lecture Notes in Computer Science}}
  \bibinfo{volume}{5215}, \bibinfo{publisher}{Springer-Verlag}, pp.
  \bibinfo{pages}{33--47}, \doi{10.1007/978-3-540-85778-5_4}.

\bibitemdeclare{inproceedings}{BHMRZ17}
\bibitem{BHMRZ17}
\bibinfo{author}{Patricia \surnamestart Bouyer\surnameend},
  \bibinfo{author}{Piotr \surnamestart Hofman\surnameend},
  \bibinfo{author}{Nicolas \surnamestart Markey\surnameend},
  \bibinfo{author}{Mickael \surnamestart Randour\surnameend} \&
  \bibinfo{author}{Martin \surnamestart Zimmermann\surnameend}
  (\bibinfo{year}{2017}): \emph{\bibinfo{title}{Bounding Average-energy
  Games}}.
\newblock In \bibinfo{editor}{Javier \surnamestart Esparza\surnameend} \&
  \bibinfo{editor}{Andrzej \surnamestart Murawski\surnameend}, editors: {\sl
  \bibinfo{booktitle}{{P}roceedings of the 20th {I}nternational {C}onference on
  {F}oundations of {S}oftware {S}cience and {C}omputation {S}tructure
  ({FoSSaCS}'17)}}, {\sl \bibinfo{series}{Lecture Notes in Computer Science}}
  \bibinfo{volume}{10203}, \bibinfo{publisher}{Springer-Verlag}, pp.
  \bibinfo{pages}{179--195}, \doi{10.1007/978-3-662-54458-7_11}.

\bibitemdeclare{inproceedings}{qest2012-BLM}
\bibitem{qest2012-BLM}
\bibinfo{author}{Patricia \surnamestart Bouyer\surnameend},
  \bibinfo{author}{Kim~Guldstrand \surnamestart Larsen\surnameend} \&
  \bibinfo{author}{Nicolas \surnamestart Markey\surnameend}
  (\bibinfo{year}{2012}): \emph{\bibinfo{title}{Lower-Bound Constrained Runs in
  Weighted Timed Automata}}.
\newblock In: {\sl \bibinfo{booktitle}{{P}roceedings of the 9th {I}nternational
  {C}onference on {Q}uantitative {E}valuation of {S}ystems ({QEST}'12)}},
  \bibinfo{publisher}{IEEE Comp. Soc. Press}, pp. \bibinfo{pages}{128--137},
  \doi{10.1109/QEST.2012.28}.

\bibitemdeclare{inproceedings}{BMRLL15}
\bibitem{BMRLL15}
\bibinfo{author}{Patricia \surnamestart Bouyer\surnameend},
  \bibinfo{author}{Nicolas \surnamestart Markey\surnameend},
  \bibinfo{author}{Mickael \surnamestart Randour\surnameend},
  \bibinfo{author}{Kim~Guldstrand \surnamestart Larsen\surnameend} \&
  \bibinfo{author}{Simon \surnamestart Laursen\surnameend}
  (\bibinfo{year}{2015}): \emph{\bibinfo{title}{Average-energy games}}.
\newblock In \bibinfo{editor}{Javier \surnamestart Esparza\surnameend} \&
  \bibinfo{editor}{Enrico \surnamestart Tronci\surnameend}, editors: {\sl
  \bibinfo{booktitle}{{P}roceedings of the 6th {I}nternational {S}ymposium on
  {G}ames, {A}utomata, {L}ogics and {F}ormal {V}erification ({GandALF}'15)}},
  {\sl \bibinfo{series}{Electronic Proceedings in Theoretical Computer
  Science}} \bibinfo{volume}{193}, pp. \bibinfo{pages}{1--15},
  \doi{10.4204/EPTCS.193.1}.

\bibitemdeclare{inproceedings}{CJLRR09}
\bibitem{CJLRR09}
\bibinfo{author}{Franck \surnamestart Cassez\surnameend},
  \bibinfo{author}{Jan~J. \surnamestart Jensen\surnameend},
  \bibinfo{author}{Kim~Guldstrand \surnamestart Larsen\surnameend},
  \bibinfo{author}{Jean-Fran{\c c}ois \surnamestart Raskin\surnameend} \&
  \bibinfo{author}{Pierre-Alain \surnamestart Reynier\surnameend}
  (\bibinfo{year}{2009}): \emph{\bibinfo{title}{Automatic Synthesis of Robust
  and Optimal Controllers~-- An~Industrial Case Study}}.
\newblock In \bibinfo{editor}{Rupak \surnamestart Majumdar\surnameend} \&
  \bibinfo{editor}{Paulo \surnamestart Tabuada\surnameend}, editors: {\sl
  \bibinfo{booktitle}{{P}roceedings of the 12th {I}nternational {W}orkshop on
  {H}ybrid {S}ystems: {C}omputation and {C}ontrol ({HSCC}'09)}}, {\sl
  \bibinfo{series}{Lecture Notes in Computer Science}} \bibinfo{volume}{5469},
  \bibinfo{publisher}{Springer-Verlag}, pp. \bibinfo{pages}{90--104},
  \doi{10.1007/978-3-642-00602-9_7}.

\bibitemdeclare{inproceedings}{CdAHS03}
\bibitem{CdAHS03}
\bibinfo{author}{Arindam \surnamestart Chakrabarti\surnameend},
  \bibinfo{author}{Luca \surnamestart de~Alfaro\surnameend},
  \bibinfo{author}{Thomas~A. \surnamestart Henzinger\surnameend} \&
  \bibinfo{author}{Mari{\"e}lle \surnamestart Stoelinga\surnameend}
  (\bibinfo{year}{2003}): \emph{\bibinfo{title}{Resource Interfaces}}.
\newblock In \bibinfo{editor}{Rajeev \surnamestart Alur\surnameend} \&
  \bibinfo{editor}{Insup \surnamestart Lee\surnameend}, editors: {\sl
  \bibinfo{booktitle}{{P}roceedings of the 3rd {I}nternational {C}onference on
  {E}mbedded {S}oftware ({EMSOFT}'03)}}, {\sl \bibinfo{series}{Lecture Notes in
  Computer Science}} \bibinfo{volume}{2855},
  \bibinfo{publisher}{Springer-Verlag}, pp. \bibinfo{pages}{117--133},
  \doi{10.1007/978-3-540-45212-6_9}.

\bibitemdeclare{article}{ChatterjeeD12}
\bibitem{ChatterjeeD12}
\bibinfo{author}{Krishnendu \surnamestart Chatterjee\surnameend} \&
  \bibinfo{author}{Laurent \surnamestart Doyen\surnameend}
  (\bibinfo{year}{2012}): \emph{\bibinfo{title}{Energy Parity Games}}.
\newblock {\sl \bibinfo{journal}{Theoretical Computer Science}}
  \bibinfo{volume}{458}, pp. \bibinfo{pages}{49--60},
  \doi{10.1016/j.tcs.2012.07.038}.

\bibitemdeclare{inproceedings}{Chatterjee0H17}
\bibitem{Chatterjee0H17}
\bibinfo{author}{Krishnendu \surnamestart Chatterjee\surnameend},
  \bibinfo{author}{Laurent \surnamestart Doyen\surnameend} \&
  \bibinfo{author}{Thomas~A. \surnamestart Henzinger\surnameend}
  (\bibinfo{year}{2017}): \emph{\bibinfo{title}{The Cost of Exactness in
  Quantitative Reachability}}.
\newblock In \bibinfo{editor}{Luca \surnamestart Aceto\surnameend},
  \bibinfo{editor}{Giorgio \surnamestart Bacci\surnameend},
  \bibinfo{editor}{Giovanni \surnamestart Bacci\surnameend},
  \bibinfo{editor}{Anna \surnamestart Ing{\'o}lfsd{\'o}ttir\surnameend},
  \bibinfo{editor}{Axel \surnamestart Legay\surnameend} \&
  \bibinfo{editor}{Radu \surnamestart Mardare\surnameend}, editors: {\sl
  \bibinfo{booktitle}{Models, Algorithms, Logics and Tools: Essays Dedicated to
  Kim Guldstrand Larsen on the Occasion of His 60th Birthday}}, {\sl
  \bibinfo{series}{Lecture Notes in Computer Science}} \bibinfo{volume}{10460},
  \bibinfo{publisher}{Springer-Verlag}, pp. \bibinfo{pages}{367--381},
  \doi{10.1007/978-3-319-63121-9_18}.

\bibitemdeclare{inproceedings}{ChatterjeeDHR10}
\bibitem{ChatterjeeDHR10}
\bibinfo{author}{Krishnendu \surnamestart Chatterjee\surnameend},
  \bibinfo{author}{Laurent \surnamestart Doyen\surnameend},
  \bibinfo{author}{Thomas~A. \surnamestart Henzinger\surnameend} \&
  \bibinfo{author}{Jean-Fran{\c c}ois \surnamestart Raskin\surnameend}
  (\bibinfo{year}{2010}): \emph{\bibinfo{title}{Generalized Mean-payoff and
  Energy Games}}.
\newblock In \bibinfo{editor}{Kamal \surnamestart Lodaya\surnameend} \&
  \bibinfo{editor}{Meena \surnamestart Mahajan\surnameend}, editors: {\sl
  \bibinfo{booktitle}{{P}roceedings of the 30th {C}onference on {F}oundations
  of {S}oftware {T}echnology and {T}heoretical {C}omputer {S}cience
  ({FSTTCS}'10)}}, {\sl \bibinfo{series}{Leibniz International Proceedings in
  Informatics}}~\bibinfo{volume}{8}, \bibinfo{publisher}{Leibniz-Zentrum
  f{\"u}r Informatik}, pp. \bibinfo{pages}{505--516},
  \doi{10.4230/LIPIcs.FSTTCS.2010.505}.

\bibitemdeclare{article}{ChatterjeeRR14}
\bibitem{ChatterjeeRR14}
\bibinfo{author}{Krishnendu \surnamestart Chatterjee\surnameend},
  \bibinfo{author}{Mickael \surnamestart Randour\surnameend} \&
  \bibinfo{author}{Jean-Fran{\c c}ois \surnamestart Raskin\surnameend}
  (\bibinfo{year}{2014}): \emph{\bibinfo{title}{Strategy Synthesis for
  Multi-dimensional Quantitative Objectives}}.
\newblock {\sl \bibinfo{journal}{Acta Informatica}}
  \bibinfo{volume}{51}(\bibinfo{number}{3-4}), pp. \bibinfo{pages}{129--163},
  \doi{10.1007/s00236-013-0182-6}.

\bibitemdeclare{inproceedings}{DDGRT10}
\bibitem{DDGRT10}
\bibinfo{author}{Aldric \surnamestart Degorre\surnameend},
  \bibinfo{author}{Laurent \surnamestart Doyen\surnameend},
  \bibinfo{author}{Raffaella \surnamestart Gentilini\surnameend},
  \bibinfo{author}{Jean-Fran{\c c}ois \surnamestart Raskin\surnameend} \&
  \bibinfo{author}{Szymon \surnamestart Toru{\'n}czyk\surnameend}
  (\bibinfo{year}{2010}): \emph{\bibinfo{title}{Energy and Mean-Payoff Games
  with Imperfect Information}}.
\newblock In \bibinfo{editor}{Anuj \surnamestart Dawar\surnameend} \&
  \bibinfo{editor}{Helmut \surnamestart Veith\surnameend}, editors: {\sl
  \bibinfo{booktitle}{{P}roceedings of the 24th {I}nternational {W}orkshop on
  {C}omputer {S}cience {L}ogic ({CSL}'10)}}, {\sl \bibinfo{series}{Lecture
  Notes in Computer Science}} \bibinfo{volume}{6247},
  \bibinfo{publisher}{Springer-Verlag}, pp. \bibinfo{pages}{260--274},
  \doi{10.1007/978-3-642-15205-4_22}.

\bibitemdeclare{inproceedings}{DM18}
\bibitem{DM18}
\bibinfo{author}{Dario \surnamestart Della{ }Monica\surnameend} \&
  \bibinfo{author}{Aniello \surnamestart Murano\surnameend}
  (\bibinfo{year}{2018}): \emph{\bibinfo{title}{Parity-energy {ATL} for
  Qualitative and Quantitative Reasoning in~{MAS}}}.
\newblock In \bibinfo{editor}{Elisabeth \surnamestart Andr{\'e}\surnameend},
  \bibinfo{editor}{Sven \surnamestart Koenig\surnameend},
  \bibinfo{editor}{Mehdi \surnamestart Dastani\surnameend} \&
  \bibinfo{editor}{Gita \surnamestart Sukthankar\surnameend}, editors: {\sl
  \bibinfo{booktitle}{{P}roceedings of the 17th {I}nternational {C}onference on
  {A}utonomous {A}gents and {M}ultiagent {S}ystems ({AAMAS}'18)}},
  \bibinfo{publisher}{International Foundation for Autonomous Agents and
  Multiagent Systems}, pp. \bibinfo{pages}{1441--1449}.

\bibitemdeclare{article}{EM79}
\bibitem{EM79}
\bibinfo{author}{Andrzej \surnamestart Ehrenfeucht\surnameend} \&
  \bibinfo{author}{Jan \surnamestart Mycielski\surnameend}
  (\bibinfo{year}{1979}): \emph{\bibinfo{title}{Positional strategies for mean
  payoff games}}.
\newblock {\sl \bibinfo{journal}{International Journal of Game Theory}}
  \bibinfo{volume}{8}(\bibinfo{number}{2}), pp. \bibinfo{pages}{109--113},
  \doi{10.1007/BF01768705}.

\bibitemdeclare{mastersthesis}{EjsingDuun2013InfiniteRI}
\bibitem{EjsingDuun2013InfiniteRI}
\bibinfo{author}{Daniel \surnamestart Ejsing{-}Dunn\surnameend} \&
  \bibinfo{author}{Lisa \surnamestart Fontani\surnameend}
  (\bibinfo{year}{2013}): \emph{\bibinfo{title}{Infinite Runs in Recharge
  Automata}}.
\newblock Master's thesis, \bibinfo{school}{Computer Science Department,
  Aalborg University, Denmark}.

\bibitemdeclare{inproceedings}{FahrenbergJLS11}
\bibitem{FahrenbergJLS11}
\bibinfo{author}{Uli \surnamestart Fahrenberg\surnameend},
  \bibinfo{author}{Line \surnamestart Juhl\surnameend},
  \bibinfo{author}{Kim~Guldstrand \surnamestart Larsen\surnameend} \&
  \bibinfo{author}{Ji{\v r}{\'\i} \surnamestart Srba\surnameend}
  (\bibinfo{year}{2011}): \emph{\bibinfo{title}{Energy Games in Multiweighted
  Automata}}.
\newblock In \bibinfo{editor}{Antonio \surnamestart Cerone\surnameend} \&
  \bibinfo{editor}{Pekka \surnamestart Pihlajasaari\surnameend}, editors: {\sl
  \bibinfo{booktitle}{{P}roceedings of the 8th {I}nternational {C}olloquium on
  {T}heoretical {A}spects of {C}omputing ({ICTAC}'11)}}, {\sl
  \bibinfo{series}{Lecture Notes in Computer Science}} \bibinfo{volume}{6916},
  \bibinfo{publisher}{Springer-Verlag}, pp. \bibinfo{pages}{95--115},
  \doi{10.1007/978-3-642-23283-1_9}.

\bibitemdeclare{inproceedings}{FearnleyJ13}
\bibitem{FearnleyJ13}
\bibinfo{author}{John \surnamestart Fearnley\surnameend} \&
  \bibinfo{author}{Marcin \surnamestart Jurdzi{\'n}ski\surnameend}
  (\bibinfo{year}{2013}): \emph{\bibinfo{title}{Reachability in two-clock timed
  automata is {PSPACE}-complete}}.
\newblock In \bibinfo{editor}{Fedor~V. \surnamestart Fomin\surnameend},
  \bibinfo{editor}{Rusins \surnamestart Freivalds\surnameend},
  \bibinfo{editor}{Marta \surnamestart Kwiatkowska\surnameend} \&
  \bibinfo{editor}{David \surnamestart Peleg\surnameend}, editors: {\sl
  \bibinfo{booktitle}{{P}roceedings of the 40th {I}nternational {C}olloquium on
  {A}utomata, {L}anguages and {P}rogramming ({ICALP}'13)~-- Part~{II}}}, {\sl
  \bibinfo{series}{Lecture Notes in Computer Science}} \bibinfo{volume}{7966},
  \bibinfo{publisher}{Springer-Verlag}, pp. \bibinfo{pages}{212--223},
  \doi{10.1007/978-3-642-39212-2_21}.

\bibitemdeclare{inproceedings}{GHOW10}
\bibitem{GHOW10}
\bibinfo{author}{Stefan \surnamestart G{\"o}ller\surnameend},
  \bibinfo{author}{Christoph \surnamestart Haase\surnameend},
  \bibinfo{author}{Jo{\"e}l \surnamestart Ouaknine\surnameend} \&
  \bibinfo{author}{James \surnamestart Worrell\surnameend}
  (\bibinfo{year}{2010}): \emph{\bibinfo{title}{Model Checking Succinct and
  Parametric One-Counter Automata}}.
\newblock In \bibinfo{editor}{Samson \surnamestart Abramsky\surnameend},
  \bibinfo{editor}{Cyril \surnamestart Gavoille\surnameend},
  \bibinfo{editor}{Claude \surnamestart Kirchner\surnameend},
  \bibinfo{editor}{Friedhelm \surnamestart Meyer auf~der Heide\surnameend} \&
  \bibinfo{editor}{Paul~G. \surnamestart Spirakis\surnameend}, editors: {\sl
  \bibinfo{booktitle}{{P}roceedings of the 37th {I}nternational {C}olloquium on
  {A}utomata, {L}anguages and {P}rogramming ({ICALP}'10)~-- Part~{II}}}, {\sl
  \bibinfo{series}{Lecture Notes in Computer Science}} \bibinfo{volume}{6199},
  \bibinfo{publisher}{Springer-Verlag}, pp. \bibinfo{pages}{575--586},
  \doi{10.1007/978-3-642-14162-1_48}.

\bibitemdeclare{inproceedings}{Hun15}
\bibitem{Hun15}
\bibinfo{author}{Paul \surnamestart Hunter\surnameend} (\bibinfo{year}{2015}):
  \emph{\bibinfo{title}{Reachability in Succinct One-Counter Games}}.
\newblock In \bibinfo{editor}{Miko{\l}aj \surnamestart
  Boja{\'n}czyk\surnameend}, \bibinfo{editor}{S{\l}awomir \surnamestart
  Lasota\surnameend} \& \bibinfo{editor}{Igor \surnamestart
  Potapov\surnameend}, editors: {\sl \bibinfo{booktitle}{{P}roceedings of the
  9th {W}orkshop on {R}eachability {P}roblems in {C}omputational {M}odels
  ({RP}'15)}}, {\sl \bibinfo{series}{Lecture Notes in Computer Science}}
  \bibinfo{volume}{9328}, \bibinfo{publisher}{Springer-Verlag}, pp.
  \bibinfo{pages}{37--49}, \doi{10.1007/978-3-319-24537-9_5}.

\bibitemdeclare{inproceedings}{JLR13}
\bibitem{JLR13}
\bibinfo{author}{Line \surnamestart Juhl\surnameend},
  \bibinfo{author}{Kim~Guldstrand \surnamestart Larsen\surnameend} \&
  \bibinfo{author}{Jean-Fran{\c c}ois \surnamestart Raskin\surnameend}
  (\bibinfo{year}{2013}): \emph{\bibinfo{title}{Optimal Bounds for
  Multiweighted and Parametrised Energy Games}}.
\newblock In \bibinfo{editor}{Zhiming \surnamestart Liu\surnameend},
  \bibinfo{editor}{Jim \surnamestart Woodcock\surnameend} \&
  \bibinfo{editor}{Yunshan \surnamestart Zhu\surnameend}, editors: {\sl
  \bibinfo{booktitle}{Theories of Programming and Formal Methods~-- Essays
  Dedicated to Jifeng He on the Occasion of His 70th Birthday}}, {\sl
  \bibinfo{series}{Lecture Notes in Computer Science}} \bibinfo{volume}{8051},
  \bibinfo{publisher}{Springer-Verlag}, pp. \bibinfo{pages}{244--255},
  \doi{10.1007/978-3-642-39698-4_15}.

\bibitemdeclare{inproceedings}{JurdzinskiLS07}
\bibitem{JurdzinskiLS07}
\bibinfo{author}{Marcin \surnamestart Jurdzi{\'n}ski\surnameend},
  \bibinfo{author}{Fran{\c c}ois \surnamestart Laroussinie\surnameend} \&
  \bibinfo{author}{Jeremy \surnamestart Sproston\surnameend}
  (\bibinfo{year}{2007}): \emph{\bibinfo{title}{Model Checking Probabilistic
  Timed Automata with One or Two Clocks}}.
\newblock In \bibinfo{editor}{Orna \surnamestart Grumberg\surnameend} \&
  \bibinfo{editor}{Michael \surnamestart Huth\surnameend}, editors: {\sl
  \bibinfo{booktitle}{{P}roceedings of the 13th {I}nternational {C}onference on
  {T}ools and {A}lgorithms for {C}onstruction and {A}nalysis of {S}ystems
  ({TACAS}'07)}}, {\sl \bibinfo{series}{Lecture Notes in Computer Science}}
  \bibinfo{volume}{4424}, \bibinfo{publisher}{Springer-Verlag}, pp.
  \bibinfo{pages}{170--184}, \doi{10.1007/978-3-540-71209-1_15}.

\bibitemdeclare{inproceedings}{JLS15}
\bibitem{JLS15}
\bibinfo{author}{Marcin \surnamestart Jurdzi{\'n}ski\surnameend},
  \bibinfo{author}{Ranko \surnamestart Lazi{\'c}\surnameend} \&
  \bibinfo{author}{Sylvain \surnamestart Schmitz\surnameend}
  (\bibinfo{year}{2015}): \emph{\bibinfo{title}{Fixed-Dimensional Energy Games
  are in Pseudo-Polynomial Time}}.
\newblock In \bibinfo{editor}{Magn{\'u}s~M. \surnamestart
  Halld{\'o}rsson\surnameend}, \bibinfo{editor}{Kazuo \surnamestart
  Iwana\surnameend}, \bibinfo{editor}{Naoki \surnamestart Kobayashi\surnameend}
  \& \bibinfo{editor}{Bettina \surnamestart Speckmann\surnameend}, editors:
  {\sl \bibinfo{booktitle}{{P}roceedings of the 42nd {I}nternational
  {C}olloquium on {A}utomata, {L}anguages and {P}rogramming ({ICALP}'15)~--
  Part~{II}}}, {\sl \bibinfo{series}{Lecture Notes in Computer Science}}
  \bibinfo{volume}{9135}, \bibinfo{publisher}{Springer-Verlag}, pp.
  \bibinfo{pages}{260--272}, \doi{10.1007/978-3-662-47666-6_21}.

\bibitemdeclare{article}{Mar75}
\bibitem{Mar75}
\bibinfo{author}{Donald~A. \surnamestart Martin\surnameend}
  (\bibinfo{year}{1975}): \emph{\bibinfo{title}{{B}orel Determinacy}}.
\newblock {\sl \bibinfo{journal}{Annals of Mathematics}}
  \bibinfo{volume}{102}(\bibinfo{number}{2}), pp. \bibinfo{pages}{363--371},
  \doi{10.2307/1971035}.

\bibitemdeclare{article}{Reichert16}
\bibitem{Reichert16}
\bibinfo{author}{Julien \surnamestart Reichert\surnameend}
  (\bibinfo{year}{2016}): \emph{\bibinfo{title}{On The Complexity of Counter
  Reachability Games}}.
\newblock {\sl \bibinfo{journal}{Fundamenta Informaticae}}
  \bibinfo{volume}{143}(\bibinfo{number}{3-4}), pp. \bibinfo{pages}{415--436},
  \doi{10.3233/FI-2016-1320}.

\bibitemdeclare{article}{VCDHRR15}
\bibitem{VCDHRR15}
\bibinfo{author}{Yaron \surnamestart Velner\surnameend},
  \bibinfo{author}{Krishnendu \surnamestart Chatterjee\surnameend},
  \bibinfo{author}{Laurent \surnamestart Doyen\surnameend},
  \bibinfo{author}{Thomas~A. \surnamestart Henzinger\surnameend},
  \bibinfo{author}{Alexander \surnamestart Rabinovich\surnameend} \&
  \bibinfo{author}{Jean-Fran{\c c}ois \surnamestart Raskin\surnameend}
  (\bibinfo{year}{2015}): \emph{\bibinfo{title}{The complexity of
  multi-mean-payoff and multi-energy games}}.
\newblock {\sl \bibinfo{journal}{Information and Computation}}
  \bibinfo{volume}{241}, pp. \bibinfo{pages}{177--196},
  \doi{10.1016/j.ic.2015.03.001}.

\bibitemdeclare{article}{ZwickP95}
\bibitem{ZwickP95}
\bibinfo{author}{Uri \surnamestart Zwick\surnameend} \& \bibinfo{author}{Mike
  \surnamestart Paterson\surnameend} (\bibinfo{year}{1996}):
  \emph{\bibinfo{title}{The Complexity of Mean Payoff Games on Graphs}}.
\newblock {\sl \bibinfo{journal}{Theoretical Computer Science}}
  \bibinfo{volume}{158}(\bibinfo{number}{1-2}), pp. \bibinfo{pages}{343--359},
  \doi{10.1016/0304-3975(95)00188-3}.

\end{thebibliography}

%\input{Appendix.tex}

\end{document}